\documentclass{article} % For LaTeX2e
\usepackage{iclr2027_conference,times}

\iclrfinalcopy

\usepackage[utf8]{inputenc}
\usepackage[T1]{fontenc}
\usepackage{hyperref}
\usepackage{url}
\usepackage{xcolor}
\usepackage{booktabs}
\usepackage{amsfonts}
\usepackage{nicefrac}
\usepackage{microtype}
\usepackage{caption}

\usepackage{graphicx}
\usepackage{float}
\usepackage{comment}
\usepackage{enumitem}
\usepackage{subfigure}
\usepackage{tikz}
\usepackage{multirow}
\usepackage{placeins}

\usepackage{amsmath,amssymb,amsthm,mathtools,mathrsfs}
\usepackage{bbm}
\usepackage{bm}
\usepackage{dsfont}

\usepackage[capitalize,noabbrev]{cleveref}
\usepackage[textsize=tiny]{todonotes}
\usepackage[ruled,vlined]{algorithm2e}

\theoremstyle{plain}
\newtheorem{theorem}{Theorem}[section]
\newtheorem{proposition}[theorem]{Proposition}

\newtheorem{remark}[theorem]{Remark}

\theoremstyle{definition}
\newtheorem{definition}[theorem]{Definition}

\title{Watermarkable Multi-Draft Speculative \\ Sampling via Poisson Processes}

\author{
Yanxiao Liu$^{1,\star}$, Sicheng Wan$^{2,\star}$, Zhan Gao$^1$, and Deniz G\"und\"uz$^1$\\
$^1$Imperial College London,
$^2$University of Washington, $^\star$equal contribution\\
}

\begin{document}

\maketitle
\fancyhead{}

\begin{abstract}
Large language models (LLMs) have achieved state-of-the-art performance across a wide range of tasks, motivating two important aspects of deployment: inference efficiency and output provenance, which can be tackled by speculative sampling and watermarking, respectively.
However, recent works have shown that combining these two goals is highly nontrivial and can be potentially impossible. 
\textcolor{black}{In this work, we develop a novel multi-draft speculative sampling algorithm
based on Poisson processes that improves the frontier of this fundamental trade-off.}
The proposed algorithm has strong sampling efficiency on its own and, more interestingly, is naturally \emph{watermarkable}: we can embed an unbiased watermark without degrading speculative acceptance. 
Moreover, our algorithm is based on an \emph{exact} list-coupling-without-communication scheme, which yields a \emph{drafter invariance} property that benefits both sampling and watermarking. 
It is the first multi-draft, drafter-invariant speculative sampling scheme that maintains both watermark strength and sampling efficiency, and we experimentally verify its strong performance in both aspects.
\end{abstract}

\section{Introduction}
\label{sec::intro}

\textcolor{black}{Large language models (LLMs) have become a central component of modern
generative AI, achieving strong performance in dialogue, reasoning, code generation,
and open-ended content creation. Their probabilistic nature based on autoregressive sampling enables them to generate diverse responses~\citep{brown2020language, grattafiori2024llama, singh2025openai}.}
However, as LLM-generated text becomes increasingly fluent and difficult to distinguish from human-written content, concerns about accountability, provenance, misuse, and intellectual property have attracted increasing attention~\citep{wu2025survey}. 
For instance, ChatGPT-generated scientific abstracts are hard for human reviewers to distinguish from human-generated ones, yet they may contain fabricated data~\citep{shumailov2024ai}. 
Hence, watermarking has become an important, and potentially necessary, tool for tracing the provenance of LLM-generated text by embedding statistical signals into the sampling process while preserving output quality~\citep{kirchenbauer2023watermark, aaronson2023watermarking, kuditipudi2023robust}. 
\textcolor{black}{For instance, reflecting its growing importance, Anthropic recently announced that future Claude models will generate text with embedded watermarks~\citep{anthropic2026watermark}. 
}

Apart from provenance, the efficiency of autoregressive generation is another bottleneck in the system pipeline. 
\emph{Speculative sampling} (also known as decoding), has emerged as a promising approach to accelerate content generation without changing the output distribution of the target model~\citep{chen2023accelerating, leviathan2023fast}. 
However, speculative sampling is not independent of watermarking.
On the contrary, recent works~\citep{hu2024inevitable, he2026improving} show that combining the two is highly nontrivial: in~\citep{hu2024inevitable}, a ``no-go'' theorem was proved, showing that one cannot embed a watermark without degrading the performance of speculative sampling, thereby establishing a fundamental trade-off between watermark strength and speculative sampling efficiency. 
\citet{he2026improving} recently improves this trade-off by considering alternative watermark metrics.

{\color{black}
In this work, we ask the following question: can we design a watermarkable speculative sampling scheme that further improves the trade-off~\citep{hu2024inevitable, he2026improving} by increasing speculative sampling efficiency without compromising watermark detectability?
}

We answer this question in the \textbf{affirmative}: we design a novel multi-draft speculative sampling algorithm that achieves strong acceleration performance, can naturally embed unbiased watermark, and, more importantly, has watermark strength that does not degrade as the sampling efficiency increases.\footnote{\textcolor{black}{Our multi-draft scheme does not contradict the inevitable trade-off~\citep{hu2024inevitable}: it still applies for a fixed number of drafts.
We improve the trade-off along an additional dimension, namely, the number of drafts.}} 
Our multi-draft construction relies on Poisson processes~\citep{maddison2016poisson, li2021unified} and yields a \emph{list-coupling without communication} scheme, which endows our algorithm with a \emph{drafter-invariance} property~\citep{daliri2025coupling, rowan2025list} that becomes more desirable and essential in the context of watermarking. 
Our contributions are two-fold: 
\begin{itemize}
    \item \textbf{Multi-draft speculative sampling}: We design a novel multi-draft speculative sampling algorithm that generates and verifies multiple draft continuations. 
    Our scheme uses desirable properties of Poisson processes, which naturally yield a drafter invariance property~\citep{daliri2025coupling}. 
    \textcolor{black}{Compared with~\citep{rowan2025list}, which has been shown to outperform other existing methods, we theoretically prove that our acceptance rates are higher and empirically demonstrate better acceleration performance across multiple models and datasets. }

    \item \textbf{Efficiency-preserving watermarking}: Besides desirable sampling efficiency, our algorithm is naturally \emph{watermarkable} with %by
    the use of Poisson processes, which can embed a recoverable statistical signal while maintaining the distribution \emph{unbiased}.
    By building the watermark directly into the sampling procedure through keyed randomness, we maintain watermark strength while preserving sampling efficiency, \textcolor{black}{thereby improving the trade-off frontier identified in~\citep{hu2024inevitable} and achieving better overall performance.}
\end{itemize}

\section{Related Works}
\label{sec::related}

\paragraph{LLM Decoding and Speculative Sampling.}

Standard autoregressive decoding methods, including top-$k$ sampling~\citep{radford2019language, fan2018hierarchical}, nucleus sampling~\citep{holtzman2020curious}, and the permute-and-flip decoder~\citep{zhao2024permute}, generate tokens sequentially; and therefore, suffer from the inherent latency of one target-model call per token~\citep{stern2018blockwise}. 
Speculative sampling~\citep{chen2023accelerating, leviathan2023fast} addresses this bottleneck by using a smaller draft model to propose several future tokens, while the larger target model verifies the proposed continuations in parallel. 
The efficiency gain depends on how well the draft distribution matches the target distribution. 
Recent works further improve this paradigm through optimal-transport-based token selection~\citep{sun2023spectr}, tree-based inference and verification~\citep{miao2024specinfer}, and multi-draft architectures~\citep{khisti2024multi, rowan2025list}. 
It can also be used to accelerate diffusion models \citep{de2025accelerated, bullo2026accelerating}.

\paragraph{Watermarking.}

LLM watermarking aims to embed detectable statistical signals into generated text while preserving fluency and utility. 
Early logit-bias methods pseudorandomly select green-list tokens and detect the resulting bias by hypothesis testing~\citep{kirchenbauer2023watermark}, while later works improve robustness, accessibility, and probability balance~\citep{wu2023resilient, giboulot2024watermax, park2026watermod}. 
A parallel line studies unbiased watermarking, where the sampler maintains the distribution while correlating the output with secret randomness~\citep{kuditipudi2023robust, hu2023unbiased, christ2024undetectable}. 
They connect watermarking to information theory~\citep{moulin2003information} and statistics, including detection power, false-positive control, optimal tests, and couplings~\citep{li2025statistical, he2025empirical, tsur2025optimized}. 
Related works include robust or multi-bit attribution~\citep{zhao2023provable, boroujeny2024multi, qu2025provably}, and efficient identification of watermarked segments~\citep{zhao2025efficiently}. 
In recent works~\citep{hu2024inevitable, he2026improving} an inevitable trade-off between speculative sampling and watermarking has been discussed.

\paragraph{Poisson Functional Representation (PFR).}

PFR is a sampling scheme introduced in~\citep{li2018strong}, with related Poisson process constructions also discussed by~\citet{maddison2014sampling, maddison2016poisson}. 
It selects a sample from a target distribution through a Poisson process and a proposal distribution, and has been applied to information theory settings~\citep{li2021unified, khisti2024unequal, liu2024hiding, liu2025one, liu2025nonasymptotic} and machine learning~\citep{he2024accelerating, liu2024universal, zhou2026dual, flamich2026scalable}.
It can be extended to list-coupling~\citep{li2021unified} by the mapping theorem~\citep{last2017lectures} of Poisson process, and achieves the best known bound on coupling without communication~\citep{daliri2025coupling}.

\textbf{Notations}. 
% We use uppercase and lowercase letters to denote random variables and realizations, respectively. 
$x_{i:j}$ denotes the sequence $x_i, x_{i + 1}, \ldots, x_j$, and we let $x_{:j} := x_{1:j}$. 
% $[N]$ denotes $\{1,\ldots,N\}$. 
% We assume that all mentioned random variables lie in a Polish space with Borel $\sigma$-algebra and that all functions are measurable. 
% The Lebesgue measure over $\mathbb{R}$ is denoted as $\lambda$. 
$P\ll Q$ denotes that probability measure $P$ is absolutely continuous with respect to $Q$. 
% , and $\mathrm{d}P(\cdot)/ \mathrm{d}Q$ denotes the Radon-Nikodym derivative. 
The logarithm is on base $2$.

\section{Speculative Sampling}
\label{sec::spec}

We first briefly review the technical background  of speculative sampling~\citep{chen2023accelerating,leviathan2023fast,sun2023spectr, rowan2025list}. 
Given volcabulary $\mathcal V$ and a context $x_{:t}:=(x_1,x_2, \ldots,x_t)$, an autoregressive language model $\mathcal{M}_b$ generates the next token by sampling from the conditional distribution $\mathcal{M}_b(\cdot| x_{:t})$ under temperature sampling~\citep{ackley1985learning,ficler2017controlling}. 
\emph{Speculative sampling} accelerates this process by introducing a smaller and cheaper \emph{draft model} $\mathcal{M}_s$, which proposes several candidate tokens before the target model verifies them. 
Given a prefix $x_{:t}$, one iteration of the speculative sampling consists of the following three stages:
\begin{enumerate}
    \item  \textbf{Draft Stage.} The draft model $\mathcal{M}_s$ sequentially samples $L$ candidate tokens $\tilde{x}_{t+1},\ldots,\tilde{x}_{t+L}$. For each position $i=1,\ldots,L$, we record the draft distribution $\mathcal{M}_s(\cdot| x_{:t}, \tilde{x}_{t+1:t+i-1})$. 

    \item \textbf{Computation Stage.} Conditioned on the drafted context, the target model $\mathcal{M}_b$ evaluates the corresponding conditional distributions, %as follows, 
    potentially in parallel:
    \[
    \mathcal{M}_b(\cdot| x_{:t}),
    \mathcal{M}_b(\cdot| x_{:t}, \tilde{x}_{t+1}), \ldots,
    \mathcal{M}_b(\cdot| x_{:t}, \tilde{x}_{t+1:t+L}), 
    \]
    
    \item \textbf{Selection Stage.} By the distributions $\mathcal{M}_s$ and $\mathcal{M}_b$, we accept the %a 
    longest prefix of the drafted tokens until the first rejection, and a \emph{correction token} is sampled from a specific residual distribution. If all $L$ drafts are accepted, one more token from the target is sampled. 
\end{enumerate}

The acceptance-correction procedure ensures that the generated tokens follow \emph{exactly} the same distribution as direct autoregressive sampling, i.e., improving decoding efficiency does not change the target law. 
The exactness is achieved via a recursive token-level maximal coupling (Algorithm~\ref{alg::max_coupling}), where we use $P, Q$ to denote the target and draft distributions, respectively, and achieve
\[
\mathrm{Pr}(X = Y) = \sum\nolimits_{x\in\mathcal{V}} \min (P(x), Q(x)) = 1 - \Vert P - Q\Vert_\mathrm{TV}, 
\]
i.e., the closer $P$ and $Q$ are, the higher the probability that the draft token will be accepted. 
Together with the extra sampled token, there could be at most $L + 1$ tokens %been 
generated in each iteration, and therefore, the speedup is up to $(L+1)$ times if the decoding time of the draft model is negligible.

\textcolor{black}{
\citet{kobus2025speculative} proposed a new sampling scheme (Algorithm~\ref{alg:ESGD_token}) based on exponential random variables, which, together with the token probabilities, determine the scores assigned to the tokens. The token with the smallest score wins the race and follows the desired distribution.
}

\begin{figure}[H]
% \begin{figure}[t]
\centering
\begin{minipage}[t]{.545\textwidth}

\begin{algorithm}[H]
\SetAlgoNoEnd
\SetAlgoLined
\DontPrintSemicolon
\SetKwIF{If}{ElseIf}{Else}{if}{:}{else if}{else}{end}
\SetKwInOut{Input}{Input}\SetKwInOut{Output}{Output}
\textbf{Input:} Target and draft distribution $P,Q$; $X \sim Q$. \; 
Compute the residual $P_{\mathrm{res}}$: for any $x\in \mathcal{V}$,\vspace{-3pt}
\[
P_{\mathrm{res}}(x)=\frac{P(x)-\min\{P(x),Q(x)\}}{1-\sum_y \min\{P(y),Q(y)\}}.\vspace{-5pt}
\]
Sample $U \sim \mathrm{Unif}(0,1)$
\;\vspace{2pt}
\If{$U \leq \min\left(1, {P(X)}/{Q(X)}\right)$}{
\Return{$Y=X$} \tcp*[r]{Accept}
}
\Return{$Y \sim P_{\mathrm{res}}$} \tcp*[r]{Correction}
\caption{Token-level sampling~\citep{sun2023spectr, chen2023accelerating, leviathan2023fast}}
\label{alg::max_coupling}
\end{algorithm}

\end{minipage}%
\hfill
\begin{minipage}[t]{.45\textwidth}

\begin{algorithm}[H]
\SetAlgoNoEnd
\SetAlgoLined
\DontPrintSemicolon
\SetKwIF{If}{ElseIf}{Else}{if}{:}{else if}{else}{end}
\SetKwInOut{Input}{Input}\SetKwInOut{Output}{Output}
\textbf{Input:} Target and draft distribution $P, Q$. \;
\For{$i\in \mathcal{V}$}{
    $E_i\sim\mathrm{Exp}(1)$\;
}
$X \gets \arg\min_{i\in \mathcal{V}} E_i/Q(i)$\;
$Y \gets \arg\min_{i\in \mathcal{V}} E_i/P(i)$\;
\If{$X = Y$}{
\Return{$Y = X$} \tcp*[r]{Accept}
}
\Return{$Y$} \tcp*[r]{Correction}
\caption{Token-level sampling by exponential race~\citep{kobus2025speculative}}
\label{alg:ESGD_token}
\end{algorithm}
\end{minipage}
\end{figure}
% \subsection*{Speculative Sampling via Exponential Races}

\section{Watermarkable Speculative Sampling}\label{sec:SingleVersion}

\textcolor{black}{We first present a single-draft version for illustration.
Our use of Poisson processes naturally embeds the watermark into the sampling process via pseudorandomness.
The formal multi-draft scheme, which further accelerates autoregressive generation without compromising watermark strength, will be presented in Section~\ref{sec::multi_draft}.
Before proceeding, we first introduce two key components of our scheme.}

\subsection{Unbiased Watermark}
\label{subsec::gumbel_water}

A good watermark should be detectable from the generated text while preserving output quality. 
\textit{Unbiased} schemes use pseudorandomness to couple tokens with secret keys, so that the watermark can be detected through statistical tests without significantly altering the marginal distribution.

Suppose $\tilde{P}$ is a base distribution over $\mathcal{V}$, an unbiased watermark is a decoding rule such that, for each seed $\zeta$, it induces a watermarked distribution $\tilde{P}_\zeta$, which preserves $\tilde{P}$ after averaging over the seed, 
\[
\mathbf{E}_{\zeta} [\tilde{P}_\zeta(v)] = \tilde{P}(v), \qquad \forall v\in\mathcal{V}.
\]
The Gumbel watermark~\citep{aaronson2023watermarking}, which uses the Gumbel-max trick~\citep{gumbel1954statistical} and lets the seed $\zeta$ determine the Gumbel noise, is unbiased, and we leverage a similar idea for our scheme. 
Let $u_t(\cdot | x, y_{t-1}): \mathcal{V}\to \mathbb{R}$ be a real-valued function (a.k.a. \emph{logits}) that encodes the model's preferences on words and $\mathsf{T}$ be the temperature, the softmax sampling is equivalent to
%\begin{equation}
%    y_t \sim p(y) = {\exp\big(u(y|x,y_{1:t-1})/T\big)}
%    \Big/ \, \Big({\sum_{\tilde{y}} \exp\big(u(\tilde{y}|x,y_{1:t-1})/T\big)\Big)} \label{eq::softmax}
%\end{equation}
%where $x$ is the \emph{prompt} for the current task, $y_{1:t-1}$ is the \emph{prefix} that include all previously generated words up to time $t$, and 
%$u(\cdot | x, y_{1:t-1}): \mathcal{V}\to \mathbb{R}$ is a real-value function (a.k.a. \emph{logits}) that encodes the model's preferences on words from the vocabulary $\mathcal{V}$. 
%By the Gumbel-max trick, \eqref{eq::softmax} is equivalent to 
\[
y_t =  \mathrm{argmax}_{y\in\mathcal{V}}  \big({u_t(y)} \big/ \mathsf{T} \big) + G_t(y),\quad G_t(y)\sim\textrm{Gumbel}(0,1) \text{ i.i.d for each } t,y,
\]
where  the Gumbel noise can be generated by $\textrm{Gumbel}(0,1) \sim -\log(\log(1/\textrm{Unif}([0,1])))$. 

A random $r_t\sim  (\textrm{Unif}([0,1]))^{|\mathcal{V}|}$ can be represented by $G_t(y) = - \log (-  log (r_t(y)))$. 
We can replace $\textrm{Unif}([0,1])$ by a pseudo-random function $r_t(y) = F_{y_{t-m:t-1},\mathsf{k}}(y)$ with prefix length $m$, and employ
\begin{equation}
    y_t =  \mathrm{argmax}_{y\in\mathcal{V}}  \big({u_t(y)} \big/ \mathsf{T} \big) -\log(-\log(r_t(y))). \label{eq::decoding_pseudo}
\end{equation}
Conditioned on a secret key $\mathsf{k}$, $r_t(y)$ is a deterministic function, but over the distribution of $\mathsf{k}$, $r_t(y)$ is computationally indistinguishable from %sampled 
samples drawn from a truly i.i.d. uniform distribution. 
% Therefore,   the distribution of $y_t$ from~\eqref{eq::decoding_pseudo} is considered the same as the unwatermarked one. 

The detector has the key $\mathsf{k}$ and computes the watermark score $ \sum^n_{t = m + 1}-\log\big(1 - r_t(y_t) \big)$. 
If $y_{1:n}$ is unwatermarked, the expected score is $n - m$; otherwise, it is larger and the watermark is detected. 
% See Appendix~\ref{subapp::gumbel_water} for details.  

Our algorithm relies on Poisson processes and also incorporates the use of pseudo-random functions.

\subsection{Poisson Functional Representation (PFR)}
\label{subsec::PFR}

\textcolor{black}{We leverage a scheme called PFR~\citep{li2018strong}; related works are discussed in Section~\ref{sec::related}.}

\begin{definition}[PFR] 
\label{def::PFR}
Let $(T_{i})_{i}\sim \mathrm{PP}(1)$ be a Poisson process of rate $1$,
% (i.e., $T_{1},T_{2}-T_{1}\ldots \stackrel{\mathrm{iid}}{\sim} \mathrm{Exp}(1)$)
independent of $Z_{i}\stackrel{\mathrm{iid}}{\sim} \nu$ for $i=1,2,\ldots$. 
$(Z_{i},T_{i})_{i}$
is a Poisson process of intensity measure $Q\times\lambda_{[0,\infty)}$, where $\lambda_{[0,\infty)}$ is the Lebesgue measure over $[0,\infty)$.
Fix distribution $P$ over $\mathcal{Z}$ such that $P\ll Q$. 
PFR selects the point
\begin{equation}
    Z=Z_{K}\quad \text{ where }\quad K:=\mathrm{argmin}_{i}\big( T_{i} \cdot \big({\mathrm{d}P}/{\mathrm{d}Q}(Z_{i})\big)^{-1}\big).
    \label{eq:PFR}
\end{equation}
% \citep{last2017lectures}
% Then $(Z_{i},\tilde{T}_{i})$ is also a Poisson process of intensity measure $P\times\lambda_{[0,\infty)}$. 
% PFR selects the point $Z=Z_{K}$ with the smallest associated $\tilde{T}_{K}$, 
% i.e., let $K:=\mathrm{argmin}_{i}\tilde{T}_{i}$ and $Z := Z_K$.
\end{definition}

\begin{remark}
    By drawing a sequence $(Z_i)_i$ from the \emph{reference} distribution $Q$ and a sequence of \emph{times} $(T_i)_i$, the PFR selects a sample following the \emph{target} distribution $P$. 
    The sample $Z_i$ with the smallest $T_i$ only follows $Q$; to obtain a sample from $P$ exactly, we inflate the time by factor $({\mathrm{d}P}/{\mathrm{d}Q}(Z_{i}))^{-1}$. 
    % Although PFR requires infinite shared randomness, it can be implemented easily in practice~\citep{theis2022algorithms}. 
\end{remark}

\subsection{Watermarkable Speculative Sampling via Poisson Processes}
\label{label::water_spec_single}

\paragraph{Core idea.} 
We unify  watermarking and speculative sampling through a keyed Poisson coupling. 
Every emitted token is target-distributed, while the draft model only controls how many such samples are emitted. 
The watermarking is performed at the \emph{coupling level}, rather than through a post-hoc reweighting of marginals. 
This allows our scheme to stay efficient when embedding watermark.
\footnote{\textcolor{black}{The Maintain Watermark Strength scheme \citep{hu2024inevitable} shares the same spirit; see Figure~\ref{fig::single_draft_qwen_cnn_new}.
Nevertheless, our scheme can be extended to the multi-draft version in Section~\ref{sec::multi_draft}, achieving further acceleration.}
}

% We integrate the ideas in Sections~\ref{subsec::gumbel_water} and \ref{subsec::PFR} to design a single-draft version of our algorithm. %first present a single-draft version of our algorithm that connects the ideas in Section~\ref{subsec::gumbel_water} and \ref{subsec::PFR}. 
% , and \textcolor{red}{we show our algorithm achieves the fundamental trade-off curve of watermark strength and sampling efficiency???}

Algorithm~\ref{alg:single_PFR_Spec_Water_simplified} is a sketch of our single-draft speculative sampling with watermark, see a detailed implementation in Algorithm~\ref{alg:fwss_pfr}. 
Fixing a Poisson process $\Pi_\mathsf{k}$ by key $\mathsf{k}$; with context $c$, we denote the labeled Poisson process and the sample selected by PFR by $G(\mathsf{k}, c)$ and PFR$(\bar{P}(\cdot | c),\Pi )$, respectively.

% PFR$(\bar{P}(\cdot | c),\Pi )$ denotes the with target distribution $\bar{P}(\cdot | c)$ and Poisson process $\Pi$. %, and we 

% \begin{algorithm}[H]
\begin{algorithm}[tbp]
\SetAlgoLined
\DontPrintSemicolon
\SetAlgoNoEnd
\SetKwIF{If}{ElseIf}{Else}{if}{:}{else if}{else}{end}
\SetKwInOut{Input}{Given}

\Input{lookahead $K$, output length $N$, 
% target model $P$, draft model $Q$,
initial prompt $w_{1:n}$, Poisson generator $G$, secret key $\mathsf{k}$}

\While{$n < N$}{
    $h \gets w_{1:n}$; $L \gets \min\{K,\,N-n\}$; $accepted \gets \textbf{true}$; \;
    % \tcp{Draft a speculative block from keyed shared Poisson sources}
    \For{$s=1$ \KwTo $L$}{
        $c_s \gets h \,\|\, \tilde w_{1:s-1} $; 
        \textcolor{black}{$\Pi_s \gets G(\mathsf{k}, c_s)$} 
        % \hspace{-10pt} 
        \tcp*[r]{\textcolor{black}{keyed Poisson for watermark}}
        $\tilde w_s \gets \textsc{PFR}(Q(\cdot | c_s), \Pi_s)$\;
    }

    \For(\tcp*[f]{evaluation in parallel}){$s=1$ \KwTo $L$}{
        $y_s \gets \textsc{PFR}(P(\cdot | c_s), \Pi_s)$; 
        $a_s \gets \mathds{1} \left\{y_s = \tilde w_s\right\}$; 
        % \tcp*[r]{watermark-bearing acceptance bit}
        $w_{n+1} \gets y_s$; $n \gets n+1$; \;
        \If{$a_s = 0$}{
            $accepted \gets \textbf{false}$ and \textbf{break} \tcp*[r]{later draft contexts are invalid}
        }
    }
    \If(\tcp*[f]{all draft tokens accepted}){$accepted$ \textbf{and} $n < N$}{
    $\Pi^\star \gets G(\mathsf{k}, w_{1:n})$ \tcp*[r]{keyed Poisson source}
    $w_{n+1} \gets \textsc{PFR}(P(\cdot | w_{1:n}), \Pi^\star)$; 
    $n \gets n+1$ \tcp*[r]{bonus}
    }
}\vspace{-1pt}
\caption{Watermarkable Speculative Sampling via Poisson Processes (Simplified)}
\label{alg:single_PFR_Spec_Water_simplified} 
\end{algorithm}

Note that for discrete LLM target distributions, as we prove in Appendix~\ref{subapp:PFR_discrete}, PFR reduces to the exponential race~\citep{kobus2025speculative}.
However, Poisson processes become useful, and even necessary, when we extend to the multi-draft version, where we generate multiple samples that all follow the target distribution, by the mapping theorem~\citep{last2017lectures} for Poisson processes.

{\color{black}

\citet{he2026improving} used a different watermark metric to improve the trade-off frontier. 
In comparison, our scheme is robust across different watermark measurements, as we consider various scores~\citep{aaronson2023watermarking, hu2024inevitable, li2025statistical, lattimore2026refined} in Appendix~\ref{app::water}, which makes our scheme directly comparable to existing watermarking schemes. 
We use Aaronson's score~\citep{aaronson2023watermarking} in the main section, which is introduced as follows.  

}

\vspace{-3pt}

If the output of key $\mathsf{k}$ is $w_{1:N}$ and $n_0$ is the initial prompt length, the detector evaluates the scored positions $\mathcal{T} \hspace{-1pt} := \hspace{-1pt}  \{n_0+1,\hspace{-1pt} \ldots\hspace{-1pt} ,N\}$. 
For $t\in\mathcal{T}$, let $c_t := w_{:t-1}$ be the prefix before position $t$, and the detector regenerates $\Pi_t\hspace{-1pt}  := \hspace{-1pt} G(\mathsf{k}, c_t)$. 
For each $v\hspace{-1pt} \in\hspace{-1pt} \mathcal{V}$, let $r_t(v)\hspace{-1pt} :=\hspace{-1pt}  \exp(-\mu(v)\tau_t(v))$ be the induced  uniform value where $\tau_t(v)\hspace{-1pt} \sim \hspace{-1pt} \mathrm{Exp}(\mu(v))$ is the first arrival time of $v$ in $\Pi_t$, and the detector computes \vspace{-3pt}
\begin{equation}
    S_{\mathrm{A}}
    :=
    \sum\nolimits_{t\in\mathcal{T}} -\log(1-r_t(w_t)),\label{eq::S_A} 
\end{equation}
whose expectation is $\mathbf{E}[S_{\mathrm{A}}] \geq |\mathcal{T}| + (\pi^2/6-1)\sum_{t\in \mathcal{T}} \mathbf{E}[H(P(\cdot|w_{:t-1}))]$. 
If the sequence is not generated with $\mathsf{k}$, the observed $w_t$ is independent of $\Pi_t$, $r_t(w_t)\sim \mathrm{Unif}[0,1]$ and $\mathbf{E}[S_{\mathrm{A}}]  = |\mathcal{T}|$.  

\vspace{-3pt}

\paragraph{At-a-Glance Experimental Validation.}
We use Figure~\ref{fig::single_draft_qwen_cnn_new} to illustrate how our algorithm improves the trade-off by increasing sampling efficiency without degrading watermark strength.

\begin{figure}[htpb]
    \centering
    \includegraphics[scale=0.35]{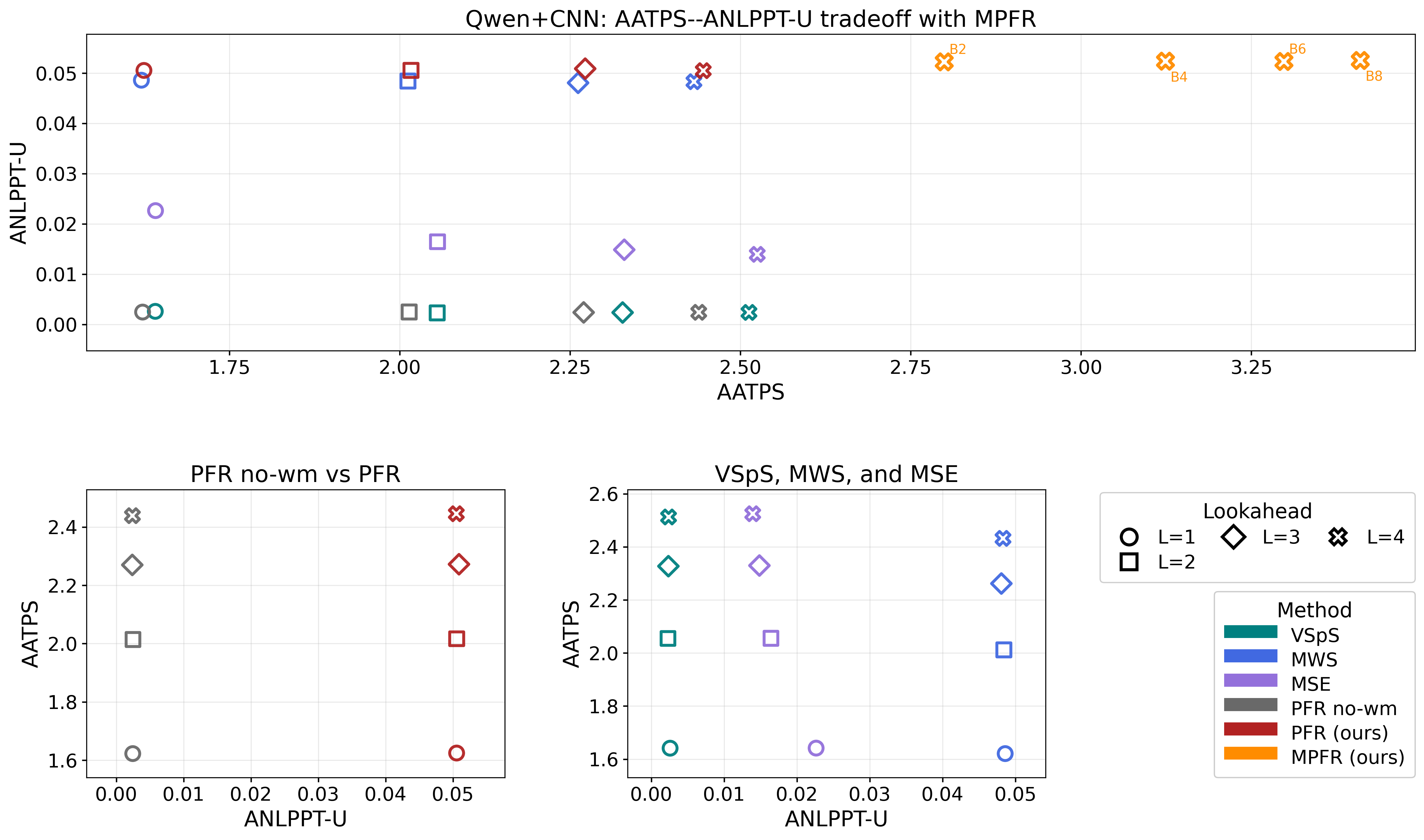}
    \caption{
    The trade-off between watermark and efficiency on existing methods~\citep{hu2024inevitable}: Vanilla Speculative Sampling (VSpS), Maintain Watermark Strength (MWS), and Maintain Sampling Efficiency (MSE), and our algorithms (PFR, and multi-draft MPFR that is presented in Section~\ref{sec::multi_draft}). 
    For both PFR (lower-left) and MPFR (upper), we embed watermark (measured by Average Negative Log P-value Per Token (ANLPPT)) without degrading the Average Accepted Tokens Per Step (AATPS). 
    This figure uses Qwen2.5-7B-Instruct and Qwen2.5-0.5B-Instruct as the target and draft, respectively, on dataset \textsc{cnn\_dailymail}. \textcolor{black}{Experiments on more models and datasets are shown in Appendix \ref{app::experiments}. }
    }
    \label{fig::single_draft_qwen_cnn_new}
\end{figure}

\newpage
\subsection{Drafter Invariance}
\label{sec::drafter_invar}

% {\color{red}
% One crucial property of Algorithm~\ref{alg:ESGD_token} is that the target and the draft models do not communicate, and \emph{coupling without communication} yields a \emph{drafter invariance} property~\citep{daliri2025coupling, rowan2025list}. 
% Our algorithm admits a stronger level of drafter invariance; it will be  discussed in Section~\ref{sec::drafter_invar}. 

% This property is particularly helpful for watermarking because it ties the watermark to the target-side keyed randomness rather than drafter-specific behavior; hence changing the drafter (e.g., via a model update) does not affect the watermark carried by the generated text. 

% emphasize the reason we consider drafter invariance is motivated by watermarking. It is not post-hoc (how?)
% }

Conventional speculative sampling suffers from a major disadvantage~\citep{daliri2025coupling}: when the small drafter model $\mathcal{M}_d$ changes, e.g., because of a model update, the tokens generated by the large model may also change due to the coupling (Algorithm~\ref{alg::max_coupling}). 
In autoregressive language models, this can be even more problematic, since users usually do not want the model output to change under a fixed random seed: \emph{stable} models allow the output to be reliably reproduced for use, testing, and debugging.

Based on the coupling without communication~\citep{bavarian2016optimality, li2018strong}, \emph{drafter invariant} sampling has recently been studied~\citep{daliri2025coupling}. It %, which 
refers to the case where, conditioned on the shared randomness $\mathcal R$ and the context $x_{:t}$, and potentially other things, the output-prefix law does not depend on the draft model. 
Let $Y_{1:\tau}, \widetilde{Y}_{1:\tau}$ denote the output sequences from different drafter models in block $\tau$, we define the following notion. 
\vspace{-3pt}

\begin{definition}
\label{def:st_tm_drafter_invariance_single}
The algorithm $\mathcal{A}$ is said to be \emph{stopping-time drafter invariant} if, for every stopping time value $\tau_0$ in the support of $\tau$, every $1\le j\le \tau_0$, and every output prefix $y_{1:j}$, we have\vspace{-3pt}
\begin{equation*}
\mathbf{P}\big\{
Y_{1:j}=y_{1:j}\,\big|\,\mathcal{R},\,x_{:t},\,\tau=\tau_0
\big\}
=
\mathbf{P}\big\{
\widetilde{Y}_{1:j}=y_{1:j}\,\big|\,\mathcal{R},\,x_{:t},\,\widetilde{\tau}=\tau_0
\big\}.
\end{equation*}
\end{definition}

\begin{proposition} 
\label{prop::single_dft_st_tm_inv} 
Our Algorithm~\ref{alg:single_PFR_Spec_Water_simplified} is stopping-time drafter invariant. 
\end{proposition}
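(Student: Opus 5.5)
The plan is to show that, once we condition on the shared randomness $\mathcal{R}$ (the key $\mathsf{k}$, hence every keyed Poisson process $G(\mathsf{k},c)$) and the context $x_{:t}$, the emitted tokens of Algorithm~\ref{alg:single_PFR_Spec_Water_simplified} are a deterministic function of $\mathcal{R}$, $x_{:t}$ and the target model $P$ only. The drafter should affect only the stopping time $\tau$, i.e.\ how many tokens are emitted in the block. For the same $\mathcal{R}$, both drafters then produce prefixes of a single drafter-free sequence, and the conditional laws in Definition~\ref{def:st_tm_drafter_invariance_single} collapse to indicators that agree.

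\textbf{Step 1: the reference sequence.} Define recursively $y^\star_1 := \textsc{PFR}(P(\cdot|x_{:t}), G(\mathsf{k}, x_{:t}))$ and $y^\star_{s} := \textsc{PFR}(P(\cdot| x_{:t}, y^\star_{1:s-1}), G(\mathsf{k}, x_{:t}\| y^\star_{1:s-1}))$. This is plain keyed autoregressive PFR sampling from the target and involves no drafter.

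\textbf{Step 2: emitted tokens follow the reference sequence.} I would show by induction on $s$ that, for either drafter, every emitted token satisfies $w_{n+s}=y^\star_s$. In the verification loop, the token written at position $s$ is $y_s=\textsc{PFR}(P(\cdot|c_s),\Pi_s)$ with $c_s = h\|\tilde w_{1:s-1}$. If the loop reaches step $s$, then all earlier steps accepted, so $\tilde w_{1:s-1}=y_{1:s-1}$. By the induction hypothesis this equals $y^\star_{1:s-1}$, hence $c_s = x_{:t}\|y^\star_{1:s-1}$ and $\Pi_s = G(\mathsf{k},c_s)$. Therefore $y_s=y^\star_s$, and this holds whether step $s$ accepts or rejects. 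The rejected step still emits the target-PFR token $y_s$; this is the key difference from Algorithm~\ref{alg::max_coupling}, whose correction uses a drafter-dependent residual. The bonus token is likewise computed from $w_{1:n}=x_{:t}\|y^\star_{1:L}$ and equals $y^\star_{L+1}$. Because PFR on a fixed Poisson realization is deterministic, there is no extra randomness.

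\textbf{Step 3: conclude.} Conditioned on $\mathcal{R},x_{:t}$, the outputs satisfy $Y_{1:\tau}=y^\star_{1:\tau}$ and $\widetilde Y_{1:\tilde\tau}=y^\star_{1:\tilde\tau}$. On the events $\tau=\tau_0$ and $\tilde\tau=\tau_0$ (assumed to have positive probability so that the conditioning is defined), both probabilities in the definition equal $\mathbb{1}\{y_{1:j}=y^\star_{1:j}\}$ for every $j\le\tau_0$. The same argument extends across blocks: the next block's context is $x_{:t}\|y^\star_{1:\tau}$, which lies on the same reference path.

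\textbf{Main obstacle.} The delicate point is measurability and conditioning. $\tau$ depends on the drafter, and with a drafter-stochastic or non-deterministic implementation $\tau$ could be random given $\mathcal{R}$. I would handle this by noting that the output prefix is a fixed function $y^\star$ of $\mathcal{R}$ regardless of any extra drafter randomness, so conditioning further on $\tau=\tau_0$ cannot change it. A second subtlety is making sure that $G(\mathsf{k},c)$ depends only on the context $c$ and not on the position within the block, which the algorithm's definition guarantees.
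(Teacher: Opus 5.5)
Your proposal is correct and takes essentially the same route as the paper's proof. Both define the drafter-free keyed target-side PFR chain and show by induction that on $\{\tau\ge s\}$ the context $c_s$ equals $h$ followed by that chain's first $s-1$ tokens, since every earlier step accepted; hence the emitted token is the chain's $s$-th element, the conditional law given $(\mathcal{R},x_{:t},\tau=\tau_0)$ is an indicator that does not depend on the drafter, and your explicit handling of the bonus token and of the positivity of $\{\tau=\tau_0\}$ only spells out what the paper leaves implicit.
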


The proof can be found in Appendix~\ref{subsubapp::proof_single_dft_st_tm_inv}. 
The stopping-time drafter invariance is not satisfied by most existing schemes, e.g., SpecTr~\citep{sun2023spectr}, SpecInfer~\citep{miao2024specinfer}, and also \citep[Algorithm 2]{rowan2025list}.

{\color{black}
Intuitively, a stronger notion of drafter invariance helps watermarking but degrades sampling efficiency. 
The original definition, referred to as the ``strong drafter invariance''~\citep{daliri2025coupling}, was shown to have degraded sampling efficiency~\citep{rowan2025list}, in where the authors hence proposed a ``conditional drafter invariance'' that also conditions on the realized draft sequence. 
We propose the stopping-time drafter invariance that is weaker than the former to admit a better sampling efficiency, but stronger than the latter and more suitable for watermarking. 
The conditioning on $\tau$ is intentional: it separates the role of the drafter in determining the \emph{length} of the current speculative block from its role in determining the \emph{content} of the emitted tokens. 
Thus, the drafter is allowed to affect the stopping time, and hence the sampling efficiency, but once this stopping time is fixed, the law of every emitted prefix is governed by the target-side randomness and does not depend on the full realized draft sequence. 
}

In watermarking, such invariance is important. 
A watermark is usually tied to pseudorandomness and is detected from the final output sequence, while speculative sampling introduces an additional ambiguity: an emitted token may come from the draft, the target correction, or the bonus, which affects the watermark statistic, and is one reason behind the no-go trade-off in~\citep{hu2024inevitable} and the use of acceptance-side information in~\citep{he2026improving}. 
However, a watermark detector typically observes only the final tokens and the secret key, not the realized draft sequence or acceptance-side information. 
Stopping-time drafter invariance is hence better aligned with watermarking: changing the drafter may affect the stopping time and the efficiency, but conditional on the block length, it does not introduce draft-specific dependence into the target-keyed token prefix. 
This makes the watermark directly tied to the target-side keyed randomness and leads to a more reproducible provenance signal.

{\color{black}
Moreover, \citep{he2026improving} proposed a new watermark metric that improves the trade-off between watermark strength and sampling efficiency~\citep{hu2024inevitable}; see, e.g., \citep[Figure~1]{he2026improving}. 
However, their analysis relies on the maximal coupling probability $\mathbf{P}(X=Y)=1-\Vert P-Q\Vert_{\mathrm{TV}}$ for coupling $X\sim P, Y\sim Q$~\citep{thorisson2000coupling, den2012probability}. 
In contrast, the best matching probability under no-communication coupling is fundamentally different:
\begin{equation}
    \mathbf{P}(X = Y) \leq (1 - \Vert P - Q\Vert_{\mathrm{TV}}) / (1 + \Vert P - Q\Vert_{\mathrm{TV}}).
    \label{eq::cpl_wo_commu_bd}
\end{equation}
Thus, enforcing drafter invariance may appear to fundamentally reduce sampling efficiency. 
We argue, however, that this loss is not inevitable: by using Poisson processes, we derive a multi-draft, drafter-invariant extension that greatly improves the token acceptance rate, as elaborated as follows. 
}

\section{Watermarkable Multi-Draft Speculative Sampling}
\label{sec::multi_draft}

% \textcolor{red}{Moreover, as proved in Appendix~\ref{subsapp::PFR_gumbel}, our scheme is conceptually equivalent to the Gumbel watermark~\citep{aaronson2023watermarking} in the single-draft case. }

We now extend our Algorithm~\ref{alg:single_PFR_Spec_Water_simplified} to multi-draft speculative sampling, by a multi-sample generalization of PFR built on the mapping theorem for Poisson processes~\citep{li2021unified}.

We use $\mathcal{M}_t$ and $\mathcal{M}_d^{(b)}$ to denote the \emph{target} and \emph{draft} language models with %, respectively, where 
$1\leq b\leq B$. They induce conditional distributions $\mathcal{M}_t(\cdot| x_{:t})$ and $\mathcal{M}_d^{(b)}(\cdot| x_{:t})$, which give the probability of the next token given the current context $x_{:t}$, usually denoted by $c$.
In multi-draft speculative sampling~\citep{miao2024specinfer, sun2023spectr, khisti2024multi, rowan2025list}, at each context $c$, we generate a list of $B$ proposal drafts, often i.i.d. in practice~\citep{rowan2025list, sun2023spectr}  from $\mathcal{M}_d^{(1)}, \ldots, \mathcal{M}_d^{(B)}$. These proposal drafts %, which 
induce a depth-$L$ speculative tree rooted at the current prefix, which is then verified %and verify them 
in parallel with $\mathcal{M}_t$.
At each step, %Among the $B$ candidate tokens, 
if at least one of the $B$ candidate tokens is accepted, %at each step, 
the first accepted token will be appended to the final output; otherwise, the verification procedure stops. 
The output is $Y_{1:\tau}$, where $\tau$ is the number of accepted tokens plus one.

% \subsection{Algorithm Design}

The PFR (Definition~\ref{def::PFR}) selects a sample $\tilde{Z}_P$ from $(Z_i)_i$ that has the smallest \emph{score} $T_i\cdot \big(\frac{\mathrm{d}P}{\mathrm{d}Q}(Z_i)\big)^{-1}$, where $(T_i)_i\sim \mathrm{PP}(1)$, and it follows the target distribution $P$. 
To get $B$ samples, multi-sample PFR (MPFR) selects samples with the smallest, second smallest, and so on up to the $B$-th smallest scores:
\begin{equation}
    \tilde{Z}_P(1), \tilde{Z}_P(2),\ldots, \tilde{Z}_P(B) \stackrel{\mathrm{iid}}{\sim} P, \label{eq::MPFR}
\end{equation}
where the exactness follows~\citep{li2021unified}; see Appendix~\ref{subapp::mapped_PP} for the formal definition and implementation.

For multi-draft speculative sampling, we use $B$ i.i.d. drafts from a model $\mathcal{M}_d$, and at each sampling step we consider $P := \mathcal{M}_t(\cdot| x_{:t})$ as the target distribution and $Q := \mathcal{M}_d(\cdot| x_{:t})$ as the proposal distribution. 
Note that we consider \emph{identical} proposals~\citep{sun2023spectr} in this section, but our framework can also be generalized to non-identically distributed proposals~\citep{rowan2025list}, which is deferred to Appendix~\ref{app::non_id_prop}.

Our scheme is %briefly 
sketched in Algorithm~\ref{alg:multi_PFR_spec_water_simplified}; see Algorithm~\ref{alg:multi_fwss_pfr_formal} for details. 
The core idea is as follows. 
To implement a multi-draft algorithm, Algorithm~\ref{alg:multi_PFR_spec_water_simplified} uses \emph{context-indexed} mapped Poisson processes.
Starting from the root prefix $h=w_{1:n}$, we build a depth-$\ell$ speculative tree, where each occupied context $c$ is assigned a keyed source $\Pi(c)$ and generates $\nu(c)$ drafts from $Q(\cdot |  c)$ via MPFR.
We then compute the target-side sample $\textsc{MPFR}(P,\Pi(c),1)$ for every $c$ and emit tokens by following the unique realized target path, continuing as long as the current target winner belongs to the draft set.

Importantly, the Poisson process is indexed by contexts rather than by drafter identities, and thus MPFR works in a way that makes the emitted tokens tightly tied to the prefix and the secret key, which helps watermark detection.
In comparison, \citep{rowan2025list} uses draft-indexed algorithms, keeps a set of currently viable draft indices, and chooses the next emitted token using only the active drafts.
This is a natural implementation, but the active set makes the next emitted token depend on the realized draft sequences, causes the output to retain hidden dependence on draft-side trajectories, and hence only satisfies their conditional drafter invariance, but not stopping-time drafter invariance (Definition~\ref{def:st_tm_drafter_invariance_single}).

\begin{algorithm}[tbp]
% \begin{algorithm}[H]
\SetAlgoLined
\DontPrintSemicolon
\SetAlgoNoEnd
\SetKwIF{If}{ElseIf}{Else}{if}{:}{else if}{else}{end}
\SetKwInOut{Input}{Given}

\Input{lookahead $L$, number of homogeneous drafts $B$, output length $N$, target model $P$,
draft model $Q$, initial prompt $w_{1:n}$, keyed source generator $G$,
and secret key $\mathsf{k}$}

\While{$n < N$}{
    $h \gets w_{1:n}$; $\ell \gets \min\{L,\,N-n\}$\tcp*[r]{root context of current block} 
    initialize root context $h$ with multiplicity $B$\;
    
    \For(\tcp*[f]{build context tree up to depth $\ell$}){$s=1$ \KwTo $\ell$}{
        \For{each currently occupied context $c$}{
            $\Pi(c) \gets G(\mathsf{k},c)$ \tcp*[r]{\textcolor{black}{keyed Poisson source for watermark}}
            
            generate $\nu(c)$ drafts by $Q(\cdot | c)$ using $\textsc{MPFR} (Q(\cdot | c),\Pi(c), \cdot)$ \tcp*[r]{Eq.\eqref{eq::MPFR}} 
            % \[
            % \textsc{MPFR}(Q(\cdot | c),\Pi(c),1),\ldots,
            % \textsc{MPFR}(Q(\cdot | c),\Pi(c),\nu(c))
            % \]
            merge equal draft tokens into child contexts and update their multiplicities\;
        }
    }

    compute target-side sample $y(c)=\textsc{MPFR}(P(\cdot | c),\Pi(c),1)$ for every $c$ \tcp*[r]{in parallel}

    $c^\star \gets h$; $accepted \gets \textbf{true}$\;
    \For{$s=1$ \KwTo $\ell$}{
        emit the target token $y(c^\star)$ and append it to the output\;
        \If(\tcp*[f]{$\mathcal{D}(c^\star)$ contains distinct draft tokens at $c$}){$y(c^\star)\notin \mathcal{D}(c^\star)$}{
            $accepted \gets \textbf{false}$ and \textbf{break}
            \tcp*[r]{first rejection ends the block}
        }
        \If{$n = N$}{
            \textbf{break}
        }
        $c^\star \gets c^\star \,\|\, y(c^\star)$
        \tcp*[r]{move to the next realized context}
    }

    \If{$accepted$ \textbf{and} $n < N$}{
        emit one bonus token by $\textsc{MPFR}(P(\cdot | c^\star),G(\mathsf{k},c^\star),1)$ 
    }
}
\caption{Watermarkable Multi-Draft Speculative Sampling (Simplified)}
\label{alg:multi_PFR_spec_water_simplified}
\end{algorithm}

% \subsection{Theoretic Guarantee}
\begin{theorem}
    \label{thm::PML_accept}
    The acceptance probability of at least one token at the current step satisfies 
    \begin{equation}
        \label{eq::PML_accept}
        \mathbf{P}(\mathrm{accept}) \geq 
        1-\sum_{i\in\mathcal{V}} P(i)\big({P(i)}\big/{(P(i) + Q(i))}\big)^B. 
    \end{equation}
\end{theorem}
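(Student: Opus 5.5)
The plan is to reduce everything to a single context $c$. By the construction of Algorithm~\ref{alg:multi_PFR_spec_water_simplified}, acceptance at the current step means that the target winner $Y:=\textsc{MPFR}(P,\Pi(c),1)$ lies in the draft set $\mathcal{D}(c)$. This set consists of the tokens of the $B$ points with smallest $Q$-scores in the same keyed source $\Pi(c)$. So it suffices to bound
\[
\mathbf{P}(\mathrm{reject})=\sum_{i\in\mathcal{V}}\mathbf{P}(Y=i)\,\mathbf{P}\big(i\notin\mathcal{D}(c)\,\big|\,Y=i\big),
\]
and, since $\mathbf{P}(Y=i)=P(i)$ by exactness of PFR (Definition~\ref{def::PFR} and \eqref{eq::MPFR}), to show
\[
\mathbf{P}\big(i\notin\mathcal{D}(c)\mid Y=i\big)\le\big(P(i)/(P(i)+Q(i))\big)^B .
\]

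For the discrete case I would use the representation from Appendix~\ref{subapp:PFR_discrete}. For each token $v$ there is an independent unit-rate time process, which is Poisson by the mapping/marking theorem. The target score of the $k$-th arrival $t$ of token $v$ is $t/P(v)$, and its draft score is $t/Q(v)$. For a fixed label $v$ only the first arrival matters for the target, and $Y=\arg\min_v E_v/P(v)$, where $E_v$ is the first arrival time of token $v$. Scaled, the target scores of $v$ form a Poisson process of rate $P(v)$ and the draft scores one of rate $Q(v)$, both built from the same arrival times.

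To make the factor $P(i)/(P(i)+Q(i))$ appear, I would compare, for the label $i$, two clocks driven by its first arrival $E_i$. The target clock is $E_i/P(i)$ and the draft clock is $E_i/Q(i)$. The key claim is a sequential one. Reveal the draft order statistics one at a time, i.e.\ the points with the smallest, second smallest, and so on $Q$-scores. Conditional on $Y=i$ and on the first $b-1$ revealed draft points not being labelled $i$, the $b$-th draft point carries label $i$ with probability at least $Q(i)/(P(i)+Q(i))$. Multiplying these $B$ conditional miss probabilities gives the bound above, and summing over $i$ gives \eqref{eq::PML_accept}.

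To prove the per-step claim I would use the memorylessness of Poisson processes. Given the current draft threshold $s$ and the event that no $i$-point lies below it, the residual first arrival of $i$ beyond $sQ(i)$ is again exponential. Conditioning on $Y=i$ only tilts this residual towards smaller values, which helps. Meanwhile the competing labels $j\neq i$ contribute draft points at total rate at most what the event $Y=i$ permits. One clean way to organise this is to work with the reference measure $R=(P+Q)/2$ in PFR. Then each point carries target score $T_k R(Z_k)/P(Z_k)$ and draft score $T_k R(Z_k)/Q(Z_k)$. The event "the next draft point is $i$" against "the target winner $i$ has not yet appeared as a draft" becomes a race between rates proportional to $Q(i)$ and $P(i)$, which yields the ratio $Q(i)/(P(i)+Q(i))$.

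I expect the main obstacle to be handling the dependence in this sequential argument. The event $Y=i$ constrains the first arrivals of all other labels, and it is correlated with which labels occupy the early draft slots. My first attempt would be to condition on $E_i$ and prove the bound pointwise in $E_i$, then integrate. If that proves too crude, I would fall back on a coupling/stochastic-domination argument: show that the conditional process of $Q$-scores of labels $j\neq i$ below $E_i/Q(i)$ is dominated by an unconditioned Poisson process, and compute the miss probability for that dominating process explicitly.
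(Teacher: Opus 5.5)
Your outer reduction is the same as the paper's proof. You condition on the target winner $Y=i$, bound the probability that no point labelled $i$ is among the first $B$ points in $Q$-score order by $\big(P(i)/(P(i)+Q(i))\big)^B$, and average over $Y\sim P$. That conditional bound, however, is the entire content of the theorem. The paper gets it from the generalized Poisson matching lemma \citep{li2021unified} with $j=1$, $k=B$. Your proposal does not establish it: your sequential per-step claim is true, but none of the mechanisms you offer proves it.
\begin{itemize}
\item The reference measure $R=(P+Q)/2$ changes nothing in the discrete case. Label-$v$ arrival times rescaled by $R(v)$ again form independent unit-rate Poisson processes, so the joint law of target scores $G/P(v)$ and draft scores $G/Q(v)$ is the same for every reference measure. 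No race between rates $Q(i)$ and $P(i)$ comes out of it.
\item The remark that conditioning on $Y=i$ ``helps'' is only a comparison with the unconditioned miss probability $(1-Q(i))^B$. That is weaker than the target bound whenever $P(i)+Q(i)<1$.
\item Your first attempt, proving the per-step bound pointwise in $E_i$, fails. Given $Y=i$ and $E_i=e$, the number of non-$i$ points ahead of the first $i$-point in $Q$-order is Poisson with mean $e\,\alpha_i/P(i)$ ($\alpha_i$ defined below). So the per-step probability in your claim tends to $0$ as $e$ grows when $\alpha_i>0$; the bound holds only after averaging over $E_i$.
\end{itemize}

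The missing step is the explicit conditional law. Let $S:=E_i/P(i)$ be the winning target score.
\begin{itemize}
\item $\mathbf{P}(Y=i,\,S\in ds)=P(i)e^{-s}\,ds$, so $S\sim\mathrm{Exp}(1)$ independently of $Y$.
\item Given $(Y,S)=(i,s)$, the arrivals of each label $v\neq i$ are independent unit-rate Poisson processes on $(sP(v),\infty)$.
\item Hence the number $N$ of non-$i$ points with $Q$-score below $E_i/Q(i)=sP(i)/Q(i)$ is Poisson with mean $s\alpha_i$, where $\alpha_i:=\sum_{v}\big(P(i)Q(v)/Q(i)-P(v)\big)_+\le P(i)/Q(i)$.
\item Mixing over $S$ makes $N$ geometric, with $\mathbf{P}(N\ge n\mid Y=i)=\big(\alpha_i/(1+\alpha_i)\big)^n$. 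Rejection is exactly $\{N\ge B\}$, which gives the bound.
\end{itemize}
Your per-step claim is precisely the memorylessness of this geometric law. The conditional probability equals $1/(1+\alpha_i)\ge Q(i)/(P(i)+Q(i))$.

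Your domination fallback also goes through once this structure is in hand. Given $s$, the competitor intensity in $Q$-score space is at most $1-Q(i)$ uniformly in $s$, so you can couple by thinning a rate-$(1-Q(i))$ process independent of $S$. Given $Y=i$, $E_i/Q(i)$ is exponential with mean $P(i)/Q(i)$. Together these give $\big(P(i)(1-Q(i))/(P(i)(1-Q(i))+Q(i))\big)^B$, slightly sharper than needed. Completed this way, your route is a self-contained replacement for the lemma the paper cites.
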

The proof is in Appendix~\ref{app::theory}, where we also give a generalized version that accounts for the number of active drafts, and demonstrate our advantages over the existing schemes~\citep{rowan2025list}.

\begin{table*}[t]
\centering
\small
\color{black}
\caption{\textcolor{black}{Multi-draft results on Qwen2.5-7B-Instruct and
Llama-3.1-8B-Instruct on CNN/\textsc{DailyMail} and \textsc{ELI5}, $L=4$.
We report the average accepted tokens per step (AATPS) and token rate (TR).
The better result between \textsc{MPFR} and \textsc{GLS} for each setting
with the same $B$ is in bold.  Each cell is computed over the $1000$ prompts and $3$ different seeds. See more detailed data in Appendix~\ref{app:multi_draft_full}.}}
\label{tab:multi_full_comparison}
\begin{tabular}{l c cc cc cc cc}
\toprule
& &
\multicolumn{2}{c}{Qwen-CNN/\textsc{DM}}
& \multicolumn{2}{c}{Qwen-\textsc{ELI5}}
& \multicolumn{2}{c}{Llama-CNN/\textsc{DM}}
& \multicolumn{2}{c}{Llama-\textsc{ELI5}} \\
\cmidrule(lr){3-4}
\cmidrule(lr){5-6}
\cmidrule(lr){7-8}
\cmidrule(lr){9-10}
Decoder & $B$
& AATPS & TR
& AATPS & TR
& AATPS & TR
& AATPS & TR \\
\midrule
\textsc{MPFR} & 2
& $2.797$ & $\mathbf{25.11}$
& $\mathbf{2.563}$ & $\mathbf{23.57}$
& $\mathbf{3.459}$ & $\mathbf{37.44}$
& $\mathbf{3.103}$ & $\mathbf{34.29}$ \\

\textsc{MPFR} & 4
& $\mathbf{3.123}$ & $\mathbf{27.50}$
& $\mathbf{2.918}$ & $\mathbf{26.46}$
& $\mathbf{3.784}$ & $\mathbf{40.20}$
& $\mathbf{3.455}$ & $\mathbf{37.79}$ \\

\textsc{MPFR} & 6
& $\mathbf{3.294}$ & $\mathbf{28.50}$
& $\mathbf{3.109}$ & $\mathbf{27.81}$
& $\mathbf{3.935}$ & $\mathbf{41.71}$
& $\mathbf{3.635}$ & $\mathbf{39.59}$ \\

\textsc{MPFR} & 8
& $\mathbf{3.409}$ & $\mathbf{29.13}$
& $\mathbf{3.235}$ & $\mathbf{28.73}$
& $\mathbf{4.034}$ & $\mathbf{42.04}$
& $\mathbf{3.749}$ & $\mathbf{39.76}$ \\

\midrule

\textsc{GLS} & 2
& $\mathbf{2.803}$ & $24.78$
& $2.523$ & $22.73$
& $3.431$ & $36.79$
& $3.080$ & $33.72$ \\

\textsc{GLS} & 4
& $3.109$ & $27.12$
& $2.850$ & $25.60$
& $3.741$ & $39.31$
& $3.408$ & $37.11$ \\

\textsc{GLS} & 6
& $3.272$ & $27.84$
& $3.039$ & $27.01$
& $3.894$ & $40.51$
& $3.588$ & $39.12$ \\

\textsc{GLS} & 8
& $3.384$ & $28.62$
& $3.156$ & $28.09$
& $3.987$ & $39.87$
& $3.700$ & $39.37$ \\
\bottomrule
\end{tabular}
\end{table*}

\section{Experiments}
\label{sec::experiments}

We evaluate whether our algorithms improve the empirical frontier between sampling efficiency and watermark detectability, while preserving generation quality. 
The headline
configuration uses Qwen2.5-7B-Instruct as the target and Qwen2.5-0.5B-Instruct
as the drafter on \textsc{CNN/DailyMail}. 
Other model-dataset cells, full
per-$L$ and per-$B$ tables, and ablation details are deferred to
Appendix~\ref{app::experiments}.

\paragraph{Baselines.}
We compare against four families of methods, all run in a common harness with
the same prompts, decoding parameters, and watermark key.
\textbf{(i) Watermark-only reference.} \textsc{Basic-UWM} is autoregressive
Gumbel-style unbiased watermarking without speculation~\citep{aaronson2023watermarking};
it sets the watermark-strength ceiling.
\textbf{(ii) Speculation-only reference.} \textsc{VSpS} denotes vanilla
speculative sampling without watermark~\citep{leviathan2023fast}, and the
multi-draft list-coupling scheme \textsc{Invariant} of \citet{rowan2025list};
these set the efficiency ceiling.
\textbf{(iii) Trade-off frontier.} \textsc{MWS}~/\textsc{MSE} of
\citet{hu2024inevitable} are the two endpoints of the no-go trade-off, and
\textsc{MSE-Pseudo} is the pseudorandom-$r$ variant
of~\citet{he2026improving} that improves the trade-off via an alternative
watermark metric.
\textbf{(iv) Ours.} \textsc{PFR} (single-draft) and \textsc{MPFR} (multi-draft);
\textsc{PFR-NoWM} is \textsc{PFR} with the keyed source replaced by an
unkeyed Poisson source, isolating the cost of watermarking from the cost of
Poisson coupling.

\paragraph{Metrics.}
Following~\citep{hu2024inevitable, he2026improving} we separate three concerns.
\textbf{Efficiency:} AATPS; see other measures in appendix.
\textbf{Detectability:} ANLPPT, the average negative log $p$-value per token under
three score variants U / Li / PL~\citep{aaronson2023watermarking, li2025statistical, lattimore2026refined} (see Appendix~\ref{app::water});
and TPR@1\%FPR, calibrated on non-watermarked generations.
\textbf{Quality:} per-token log-perplexity (LPPL) under the target model, used as
an unbiasedness audit; we also report ROUGE-L when references are
available. 
% AATPS measures algorithmic acceptance, ANLPPT and TPR@1\%FPR measure
% detector-visible signal, and LPPL audits whether the target-model marginal is
% preserved.

\paragraph{Single-draft efficiency-watermark frontier.}
Figure~\ref{fig::single_draft_qwen_cnn_new}
plotted every method in the (AATPS, ANLPPT-U) plane sweeping
$L\in\{1,2,3,4\}$. 
The prior speculative watermarking baselines (\textsc{MWS}, \textsc{MSE}, \textsc{MSE-Pseudo}) exhibit the expected trade-off.
% : \textsc{MWS} preserves watermark strength but loses
% acceptance, \textsc{MSE} preserves acceptance but loses watermark strength, and
% \textsc{MSE-Pseudo} sits in between. 
% \textsc{PFR} lies in the upper-right of
% the plot at every $L$, dominating both Pareto endpoints simultaneously.
\textsc{PFR} dominates both Pareto endpoints at every $L$. 
Crucially, \textsc{PFR} and \textsc{PFR-NoWM} have nearly identical AATPS, confirming that the watermark is embedded at the coupling level. 
Other model/dataset results are tested in Appendix~\ref{app:single_draft_full}.

\paragraph{Multi-draft scaling.}
We compare \textsc{MPFR} with the \textsc{Invariant} decoder
by~\citet{rowan2025list} at lookahead $L{=}4$ and draft counts $B\in\{2,4,6,8\}$ on the headline cell. Across all values of $B$, MPFR achieves AATPS comparable to INVARIANT, with only small cell-dependent deviations, while delivering substantially larger ANLPPT-U. The watermark signal is itself stable across $B$ ($\Delta$ANLPPT $\leq 0.001$), and the per-token detection signal does not dilute as additional drafts are added. 
Full per-cell numbers across the four $(\textsc{target},\textsc{dataset})$ cells, all
three ANLPPT variants, and the LPPL audit column are in
Appendix~\ref{app:multi_draft_full}
(Tables~\ref{tab:full-qwen-cnn}-\ref{tab:full-vicuna-eli5}).

\paragraph{Detection at a fixed false-positive rate.}
ANLPPT measures average evidence per token; TPR@1\%FPR measures operational
detectability under a calibrated false-positive budget.
Figure~\ref{fig:drafter_invariance}(a) shows TPR@1\%FPR versus token budget
$T_{\mathrm{eval}}$ at the default drafter. \textsc{PFR} and \textsc{MPFR}
track the strong-watermark references \textsc{Basic-UWM} and \textsc{MWS},
while \textsc{MSE} and \textsc{MSE-Pseudo} are uniformly weaker.
\textsc{MSE-Pseudo} improves over \textsc{MSE} but still falls below  our methods, confirming that target-side keyed coupling preserves detector-visible evidence without degrading efficiency.

\begin{figure}[t]
\centering
\includegraphics[width=0.95\linewidth]{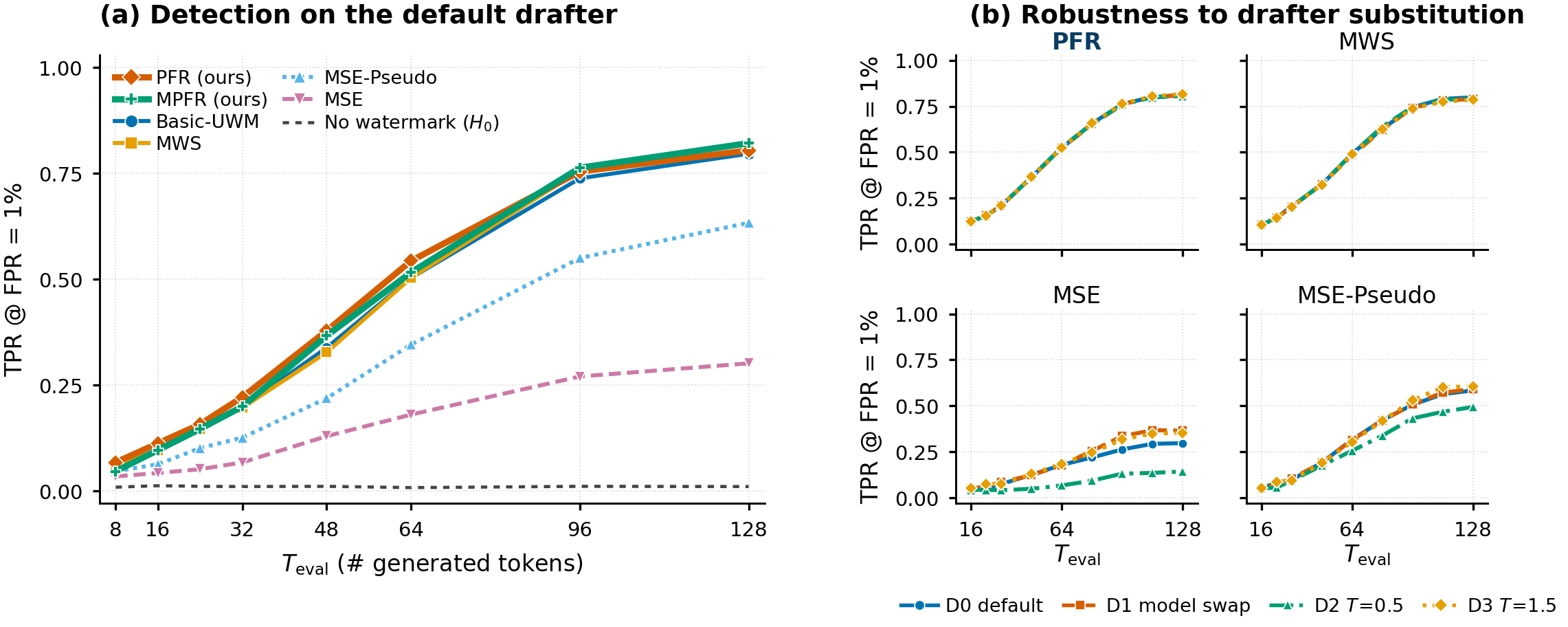}
\caption{\textbf{Detection and robustness to drafter substitution.}
\textbf{(a)} TPR@1\%FPR versus detection token budget $T_{\mathrm{eval}}$ at
the default drafter. \textsc{PFR} and \textsc{MPFR} match the strong-watermark
references \textsc{Basic-UWM} and \textsc{MWS}, and outperform
\textsc{MSE}/\textsc{MSE-Pseudo}.
\textbf{(b)} TPR@1\%FPR under drafter substitution: the target, watermark
key, and prompts are fixed and only the drafter is swapped. \textsc{PFR} drifts by $<\!1$ point, whereas \textsc{MSE} drifts
by $12$--$22$ points and \textsc{MSE-Pseudo} by $6$--$11$ points.}
\label{fig:drafter_invariance}
\end{figure}

\paragraph{Drafter-substitution ablation.}
We test whether the watermark signal is stable to the drafter changes.
We vary only the drafter along scale and temperature: $D_0$ (Qwen2.5-0.5B-Instruct, $T{=}1.0$, default), $D_1$ (Qwen2.5-1.5B-Instruct, $T{=}1.0$), $D_2$ (Qwen2.5-0.5B-Instruct,
$T{=}0.5$), $D_3$ (Qwen2.5-0.5B-Instruct, $T{=}1.5$).
Figure~\ref{fig:drafter_invariance}(b) shows that our method (\textsc{PFR}) has at most $1$
percentage point of TPR drift across $D_0$--$D_3$ at $T_{\mathrm{eval}}\in\{64,128\}$, whereas \textsc{MSE} drifts by $12$--$22$
points and \textsc{MSE-Pseudo} by $6$--$11$ points. 
The generation quality (LPPL
and ROUGE-L) is verified in Appendix~\ref{app:ablation}.

\subsection*{Concluding Remarks and Limitations}

We designed a novel drafter-invariant, multi-draft speculative sampling algorithm that is watermarkable, improves the trade-off between watermarking and sampling efficiency~\citep{hu2024inevitable}, and achieves a strong acceleration.
We use experiments to verify its strong performance in both aspects.
However, a fundamental characterization of the trade-off among target-draft communication, watermark strength, and speculative sampling efficiency remains unclear.
Moreover, the current implementation involves substantial coupling and watermark bookkeeping, which slightly slows down the algorithm and may be avoided in the future to further accelerate our algorithms.

\newpage

\bibliography{ref.bib}
\bibliographystyle{plainnat}

\newpage
\appendix

\section{More on PFR and Related Works}
\label{app::Poisson}

\subsection{PFR for Discrete Distributions}
\label{subapp:PFR_discrete}

It is known that the general PFR~\citep{li2018strong, li2021unified} reduces to the use of exponential random variables~\citep{li2018strong, liu2025one} in the case of simulating discrete distributions, which is similar to the Gumbel-max trick~\citep{gumbel1954statistical} used for speculative sampling~\citep{kobus2025speculative}.

In~\citep{li2024channel}, it is discussed how to extend the discrete case to the general case. However, the other direction, namely reducing the general case to the discrete case, has not been discussed in the literature, to the best of the authors' knowledge.
To make the paper self-contained, we provide formal statements and proofs for this direction.

The first define the exponential functional representation (EFR) as follows (also see \citep{li2018strong, liu2025one}). 
Let $\mathcal{Z}$ be a finite or countably infinite set, and let $P$ be a distribution on $\mathcal{Z}$. 
Let $(E_z)_{z\in\mathcal{Z}}$ be independent $\mathrm{Exp}(1)$ random variables. 
With the convention that $E_z/P(z)=+\infty$ whenever $P(z)=0$, define
\begin{equation}
    Z_{\mathrm{EFR}}
    :=
    \arg\min_{z\in\mathcal{Z}} \frac{E_z}{P(z)} .
    \label{eq:EFR}
\end{equation}
As proved in Appendix~\ref{subsapp::PFR_gumbel}, EFR is equivalent to the Gumbel-max trick.
In the following, we prove that PFR reduces to EFR in the discrete case, i.e., when the target and proposal distributions are discrete.

\begin{theorem}
\label{thm::PFR_reduces_to_EFR}
Let $\mathcal{Z}$ be a finite or countably infinite set. Let $Q$ and $P$ be distributions on $\mathcal{Z}$ such that $P\ll Q$. Write $q_z:=Q(z), p_z:=P(z), z\in\mathcal{Z}$. 
The PFR considers a Poisson process $(Z_i,T_i)_{i\geq 1}$ on $\mathcal{Z}\times[0,\infty)$ with intensity measure
$Q\times\lambda_{[0,\infty)}$, and $\tilde T_i = T_i\left(\frac{\mathrm{d}P}{\mathrm{d}Q}(Z_i)\right)^{-1}$. 
For each $z$ with $q_z>0$, define the first arrival time of symbol $z$ by
\begin{equation}
    S_z:=\inf\{T_i:Z_i=z\},
    \label{eq:first_arrival_symbol_z}
\end{equation}
and define
\begin{equation}
    E_z:=q_z S_z .
    \label{eq:Ez_from_PFR}
\end{equation}
For $z$ with $q_z=0$, define $E_z$ to be an arbitrary independent $\mathrm{Exp}(1)$ random variable, independent of everything else. Then the random variables $(E_z)_{z\in\mathcal{Z}}$ are independent $\mathrm{Exp}(1)$ random variables, and the PFR output satisfies
\begin{equation}
    Z_{\mathrm{PFR}}
    =
    \arg\min_i \tilde T_i
    =
    \arg\min_{z\in\mathcal{Z}}\frac{E_z}{p_z}
    =
    Z_{\mathrm{EFR}}
    \qquad \text{a.s.},
    \label{eq:PFR_equals_EFR}
\end{equation}
where we use the convention that $E_z/p_z=+\infty$ if $p_z=0$.
\end{theorem}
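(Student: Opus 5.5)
The plan is to use the thinning/marking (colouring) theorem for Poisson processes to split $(Z_i,T_i)_i$ by label, and then to notice that the PFR minimisation reduces to a minimisation over first arrivals of each symbol.

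\textbf{Step 1 (independence and exponential law).} The process $(Z_i,T_i)_i$ on $\mathcal{Z}\times[0,\infty)$ has intensity $Q\times\lambda_{[0,\infty)}$. For each $z$ with $q_z>0$, the restriction of the process to the slice $\{z\}\times[0,\infty)$ is a Poisson process on $[0,\infty)$ with rate $q_z$. By the restriction theorem, restrictions to disjoint measurable sets are independent Poisson processes, and this extends to countably many slices. Hence the first arrival times $S_z$ are independent, and $S_z\sim\mathrm{Exp}(q_z)$. Indeed, $\mathbf{P}(S_z>s)=\mathbf{P}(\text{no point in }\{z\}\times[0,s])=e^{-q_z s}$. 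Scaling gives $E_z=q_zS_z\sim\mathrm{Exp}(1)$. For $q_z=0$ the $E_z$ are independent by construction, so $(E_z)_{z\in\mathcal{Z}}$ are i.i.d.\ $\mathrm{Exp}(1)$. Almost surely each $S_z$ is finite and positive, and every atom $z$ with $q_z>0$ is hit infinitely often, so the infimum in \eqref{eq:first_arrival_symbol_z} is attained.

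\textbf{Step 2 (reduction of the argmin).} Almost surely every $Z_i$ has $q_{Z_i}>0$, since the intensity puts no mass on $\{q_z=0\}$. Therefore $\frac{\mathrm{d}P}{\mathrm{d}Q}(Z_i)=p_{Z_i}/q_{Z_i}$, which gives
\[
\tilde T_i=T_i\,q_{Z_i}/p_{Z_i}.
\]
This equals $+\infty$ when $p_{Z_i}=0$. Group the indices $i$ by the value $z=Z_i$. Within a group the factor $q_z/p_z$ is constant, so the minimum of $\tilde T_i$ over the group is attained at the first arrival and equals $S_zq_z/p_z=E_z/p_z$. Taking the minimum over groups gives
\[
\min_i\tilde T_i=\min_{z:q_z>0}E_z/p_z.
\]
If $q_z=0$ then $p_z=0$ because $P\ll Q$, so $E_z/p_z=+\infty$. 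Thus the minimum may be taken over all of $\mathcal{Z}$ without change, and the minimising label is $Z_{\mathrm{PFR}}=\arg\min_z E_z/p_z=Z_{\mathrm{EFR}}$.

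\textbf{Step 3 (well-posedness a.s.).} The main obstacle is the countably infinite case, where I must show the minima exist and are unique almost surely.

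Uniqueness is the easier part. The $E_z$ are independent with continuous laws, so ties $E_z/p_z=E_{z'}/p_{z'}$ with $p_z,p_{z'}>0$ have probability zero, and there are only countably many pairs.

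For existence, I would show that $\min_z E_z/p_z$ is attained. One route is to note that $\mathbf{P}(E_z/p_z\le t)=1-e^{-p_zt}\le p_zt$. Since $\sum_z p_z=1$, Borel--Cantelli shows that for each fixed $t$ only finitely many $z$ satisfy $E_z/p_z\le t$. Take $t$ equal to the finite value $E_{z_0}/p_{z_0}$ for some $z_0$ with $p_{z_0}>0$, and handle the randomness of $t$ by intersecting over rational $t$. This leaves a finite candidate set, on which the minimum is attained.

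A parallel argument applies to $(\tilde T_i)$ directly. The points with $\tilde T_i\le t$ form a Poisson process with mean measure $\sum_z p_z t=t<\infty$, so almost surely there are finitely many of them, and $\arg\min_i\tilde T_i$ is well defined. Combining Steps 2 and 3 gives \eqref{eq:PFR_equals_EFR}.
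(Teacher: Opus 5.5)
Your proposal is correct and follows essentially the same route as the paper. Like the paper, you split the Poisson process by label into independent rate-$q_z$ processes to obtain i.i.d.\ $\mathrm{Exp}(1)$ variables $E_z=q_zS_z$, group the scaled times $\tilde T_i$ by symbol so that each group's minimum is $E_z/p_z$, and rule out ties by continuity. Your Step~3 existence argument, via Borel--Cantelli or the finiteness of the points with $\tilde T_i\le t$, is a genuine addition: in the countably infinite case the paper passes from infima to argmins without checking that the minimum is attained.
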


\begin{proof}
For $z\in\mathcal{Z}$ and $t\geq 0$, define
\[
    N_z(t):=\#\{i:T_i\leq t,\ Z_i=z\}.
\]
Since $(Z_i,T_i)_{i\geq 1}$ is a Poisson process with intensity measure
$Q\times\lambda_{[0,\infty)}$, for each $z$ with $q_z>0$, the process
$(N_z(t))_{t\geq 0}$ is a Poisson process with rate $q_z$. Moreover, for distinct symbols
$z_1,\ldots,z_m$, the processes
\[
    (N_{z_1}(t))_{t\geq 0},\ldots,(N_{z_m}(t))_{t\geq 0}
\]
are independent, because they count disjoint subsets of the underlying Poisson process.

Therefore, for $z$ with $q_z>0$, the first arrival time
\[
    S_z=\inf\{t:N_z(t)\geq 1\}
\]
satisfies
\[
    \mathbf{P}(S_z>s)
    =
    \mathbf{P}(N_z(s)=0)
    =
    e^{-q_z s},
    \qquad s\geq 0.
\]
Thus $S_z\sim\mathrm{Exp}(q_z)$. Hence
\[
    E_z=q_z S_z\sim \mathrm{Exp}(1).
\]
The independence of the Poisson counting processes for different symbols implies that the random variables
\[
    (E_z)_{z:q_z>0}
\]
are independent. For symbols with $q_z=0$, we have $p_z=0$ because $P\ll Q$, and these symbols can never be selected by PFR. Defining independent $\mathrm{Exp}(1)$ variables $E_z$ for those symbols therefore does not change the output. Hence the full family $(E_z)_{z\in\mathcal{Z}}$ is independent and identically distributed as $\mathrm{Exp}(1)$.

It remains to show that the PFR selection rule reduces to the EFR selection rule. Since $P\ll Q$, for every $z$ with $p_z>0$, we also have $q_z>0$, and
\[
    \frac{\mathrm{d}P}{\mathrm{d}Q}(z)
    =
    \frac{p_z}{q_z}.
\]
Therefore, for every index $i$ with $Z_i=z$ and $p_z>0$,
\[
    \tilde T_i
    =
    T_i\left(\frac{p_z}{q_z}\right)^{-1}
    =
    T_i\frac{q_z}{p_z}.
\]
Taking the minimum over all Poisson points with mark $z$, we get
\[
    \inf_{i:Z_i=z}\tilde T_i
    =
    \left(\inf_{i:Z_i=z} T_i\right)\frac{q_z}{p_z}
    =
    S_z\frac{q_z}{p_z}
    =
    \frac{E_z}{p_z}.
\]
If $p_z=0$, then $\mathrm{d}P/\mathrm{d}Q(z)=0$ whenever $q_z>0$, and hence
\[
    \tilde T_i
    =
    T_i\left(\frac{\mathrm{d}P}{\mathrm{d}Q}(z)\right)^{-1}
    =
    +\infty
\]
for every Poisson point with mark $z$. This agrees with the EFR convention
\[
    \frac{E_z}{p_z}=+\infty .
\]
If $q_z=0$, then also $p_z=0$, and no Poisson point has mark $z$ almost surely, so such a symbol is irrelevant in both constructions.

Thus,
\[
    \inf_i \tilde T_i
    =
    \inf_{z\in\mathcal{Z}}\inf_{i:Z_i=z}\tilde T_i
    =
    \inf_{z\in\mathcal{Z}}\frac{E_z}{p_z}.
\]
Moreover, the minimizer is unique almost surely. Indeed, for every $z$ with $p_z>0$,
\[
    \frac{E_z}{p_z}\sim \mathrm{Exp}(p_z),
\]
and these variables are independent. Since exponential random variables are continuous, ties have probability zero. Therefore, almost surely, 
\[
    Z_{\mathrm{PFR}}
    =
    Z_K
    =
    \arg\min_{z\in\mathcal{Z}}\frac{E_z}{p_z}
    =
    Z_{\mathrm{EFR}}.  
\]
\end{proof}

\subsection{Relation between PFR and Gumbel-Max Trick}
\label{subsapp::PFR_gumbel}

As discussed in~\citep{zhao2024permute}, the Gumbel Watermark~\citep{aaronson2023watermarking} is based on  
\begin{equation*}
    y_t = \mathrm{argmin}_{y\in\mathcal{V}} \frac{u_t(y)}{T} + G_t(y)
\end{equation*}
where $G(y)\sim \mathrm{Gumbel}(0,1)$ i.i.d. for each $t$ and $y$. 
Here, we show that it is equivalent to the PFR in the discrete case. 
Recall $\mathrm{Gumbel}(0,1) \sim -\log\big(\log(1/\mathrm{Unif}[0,1]) \big)$. 
We calculate
\begin{align*}
y_t
&= \arg\max_{y\in\mathcal{V}}\left(\frac{u_t(y)}{T} + G_t(y)\right)\\
&= \arg\max_{y\in\mathcal{V}}\left(\frac{u_t(y)}{T} - \log Z_t(y)\right)
\qquad\Bigl(Z_t(y) \stackrel{\mathrm{iid}}{\sim} \mathrm{Exp}(1),\; G_t(y)\stackrel{d}{=}-\log Z_t(y)\Bigr)\\
&= \arg\max_{y\in\mathcal{V}}\left(-\log \bigl(Z_t(y)e^{-u_t(y)/T}\bigr)\right)\\
&= \arg\min_{y\in\mathcal{V}} Z_t(y)e^{-u_t(y)/T}\\
&= \arg\min_{y\in\mathcal{V}} \frac{Z_t(y)}{\exp(u_t(y)/T)}\\
&= \arg\min_{y\in\mathcal{V}} \frac{Z_t(y)}{w_t(y)}
\qquad\Bigl(w_t(y):=\exp(u_t(y)/T)\Bigr)\\
&= \arg\min_{y\in\mathcal{V}} \frac{Z_t(y)}{P_t(y)},
\end{align*} 
where the last equality is because $P_t(y):=\tfrac{w_t(y)}{\sum_{y'\in\mathcal{V}} w_t(y')}$ and scaling by a positive constant doesn't change the argmin.

\subsection{Mapped Poisson Process}
\label{subapp::mapped_PP}

\subsubsection{Definitions}

\begin{definition}[Mapped Poisson Process~\citep{li2021unified}] \label{def::mapped_PP}
Let $(T_{i})_{i}$ be a Poisson process of rate $1$,
independent of $Z_{i}\stackrel{\mathrm{iid}}{\sim}Q$ for $i=1,2,\ldots$. 
Let $i_{P, 1},i_{P, 2}, \ldots\in\mathbb{N}$ be a sequence of distinct integers such that $\bigcup_{j=1}^\infty \big\{i_{P, j} \big\} = \big\{i:\frac{\mathrm{d}P}{\mathrm{d}Q}(Z_{i})>0 \big\}$ and $\big\{T_{i_{P, j}} \big(\frac{\mathrm{d}P}{\mathrm{d}Q}(Z_{i_{P, j}})\big)^{-1} \big\}_{j \in \mathbb{N}}$ is sorted in ascending order with arbitrary tie-breaking.\footnote{Note a tie occurs with probability $0$.}
For $j \in\mathbb{N}$ and suppose $P$ is the target distribution, we denote
\begin{equation}
    \left\{
    \tilde{Z}_P(j), \tilde{T}(j)
    \right\}_{j\in\mathbb{N}} 
    := \left\{ Z_{i_{P, j}}, T_{i_{P, j}}
    \big(\frac{\mathrm{d}P}{\mathrm{d}Q}(Z_{i_{P, j}})\big)^{-1} \right\}_{j\in\mathbb{N}}. 
\end{equation}
Give a positive integer $B$, the mapped Poisson Process selects the first $B$ samples in $\left\{ \tilde{Z}_P(j), \tilde{T}(j) \right\}_{j\in\mathbb{N}}$. 
\end{definition}

\begin{proposition}
\label{prop:ppr_output_dist}
By the mapping theorem~\citep{last2017lectures}, $\big\{ \tilde{Z}_P(j), \tilde{T}(j) \big\}_{j\in\mathbb{N}}$ is a Poisson process with intensity measure $P\times \lambda_{\mathbb{R}_{\geq 0}}$ where $\mathbb{R}_{\geq 0}$ to denote nonnegative real numbers, and therefore we have 
\begin{equation}
    \tilde{Z}_P (1), \tilde{Z}_P (2), \ldots \stackrel{\mathrm{iid}}{\sim} P.
\end{equation}
\end{proposition}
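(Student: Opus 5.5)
The plan is to prove the mapping-theorem statement first and then read off the i.i.d.\ conclusion from the structure of a Poisson process with product intensity.

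\textbf{Step 1: restrict to the positive-density points.} Start from the marked Poisson process $\Phi=\{(Z_i,T_i)\}$ on $\mathcal{Z}\times[0,\infty)$ with intensity $Q\times\lambda$. Let $f:=\mathrm{d}P/\mathrm{d}Q$ and $A:=\{z:f(z)>0\}$. By the restriction theorem, the points with $Z_i\in A$ form a Poisson process $\Phi_A$ with intensity $Q|_A\times\lambda$. The discarded points have $f(Z_i)=0$, so their rescaled time is $+\infty$ and they are never selected; this matches the index set in Definition~\ref{def::mapped_PP}.

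\textbf{Step 2: apply the mapping theorem.} Define $g(z,t):=(z,t/f(z))$ on $A\times[0,\infty)$. The mapping theorem applies provided the image measure is $\sigma$-finite and has no atoms. To compute the image measure, take a measurable $C\subseteq\mathcal{Z}$ and $s\ge0$:
\[
(Q|_A\times\lambda)\bigl(g^{-1}(C\times[0,s])\bigr)=\int_{C\cap A}\lambda\{t:t\le s f(z)\}\,Q(\mathrm{d}z)=s\int_{C\cap A}f\,\mathrm{d}Q=s\,P(C).
\]
The last equality uses $P\ll Q$ and $P(A^c)=0$. Rectangles of this form generate the product $\sigma$-algebra and form a $\pi$-system, so the image measure is exactly $P\times\lambda_{\mathbb{R}_{\ge0}}$. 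This measure is $\sigma$-finite and diffuse in the time coordinate. Hence $\{(Z_{i},T_i/f(Z_i))\}$ is a Poisson process with intensity $P\times\lambda$, and sorting its points by time gives $\{\tilde Z_P(j),\tilde T(j)\}_j$.

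\textbf{Step 3: deduce the i.i.d.\ marks.} A Poisson process with intensity $P\times\lambda$, where $P$ is a probability measure, has a known equivalent description. Its projection onto the time axis is a rate-$1$ Poisson process, so the points can be ordered almost surely without ties. Its marks, listed in time order, are i.i.d.\ $P$ and independent of the times. This is the converse direction of the marking theorem, and I would justify it via uniqueness of the Poisson law given its intensity. Concretely, build an independently marked process with rate-$1$ times and i.i.d.\ $P$ marks. By the marking theorem it is Poisson with intensity $P\times\lambda$, so it has the same law as our process. Because the ordering map is measurable, the ordered marks have the same joint law in both constructions, which gives $\tilde Z_P(1),\tilde Z_P(2),\ldots\stackrel{\mathrm{iid}}{\sim}P$. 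In particular $\tilde Z_P(1)\sim P$, recovering the PFR exactness.

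\textbf{Main obstacle.} The delicate point is Step 2: checking that the image measure is locally finite, so that the points still admit a well-ordered, tie-free time sequence. The sorting in Definition~\ref{def::mapped_PP} depends on this. When $f$ is unbounded, many points could in principle land in a bounded time window. The computation above settles it, since each window $[0,s]$ has finite mass $s$. So I would state that computation explicitly before invoking the theorem from \citet{last2017lectures}.
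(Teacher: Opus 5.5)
Your proposal is correct and follows essentially the same route as the paper. The paper's argument, given inside the correctness proof for Algorithm~\ref{alg:B_sample_finite_PFR}, identifies the push-forward of $Q\times\lambda$ under $(u,t)\mapsto(u,t/r(u))$ as $P\times\lambda$ via the substitution $s=t/r(u)$ against nonnegative test functions and then invokes the mapping theorem; you verify the same image measure on the $\pi$-system of rectangles $C\times[0,s]$ instead. Your Step~3 adds one thing: the paper only asserts that the time-ordered marks of a Poisson process with intensity $P\times\lambda$ are i.i.d.\ $P$, while your comparison with an independently $P$-marked rate-$1$ process (marking theorem plus uniqueness of the Poisson law given its intensity) actually proves it.
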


\subsubsection{Direct Implementations}
\label{subapp::direct_implementation_MPFR}
We can implement the MPFR just like~\citep{theis2022algorithms} for PFR, as follows: 

\begin{algorithm}[H]
\SetAlgoLined
\DontPrintSemicolon
\SetAlgoNoEnd
\SetKwIF{If}{ElseIf}{Else}{if}{:}{else if}{else}{end}
\SetKwInOut{Input}{Given}
\SetKwInOut{Output}{Return}

\Input{target distribution $P$, proposal distribution $Q$, likelihood ratio
$r=dP/dQ$, number of samples $B$, and a constant $\underline w>0$ such that
$1/r(u)\ge \underline w$ $Q$-a.s. on $\{r(u)>0\}$}
\Output{$B$ samples from $P$}

$H \gets \emptyset$ \tcp*[r]{$H$ stores the current $B$ smallest-score candidates}
$t \gets 0$; $i \gets 1$\;

\While{$|H|<B$ \textbf{or} $\max_{(s,j,z)\in H} s > t\underline w$}{
    Sample $E_i\sim \mathrm{Exp}(1)$ and set $t\gets t+E_i$ \tcp*[r]{Poisson process}
    
    Sample $Z_i\sim Q$\;
    
    \eIf{$r(Z_i)>0$}{
        $S_i \gets t/r(Z_i)$\;
    }{
        $S_i \gets +\infty$\;
    }

    \If{$|H|<B$}{
        Insert $(S_i,i,Z_i)$ into $H$\;
    }
    \ElseIf{$S_i < \max_{(s,j,z)\in H} s$}{
        Remove an element $(s^\star,j^\star,z^\star)\in H$ satisfying
        $s^\star=\max_{(s,j,z)\in H} s$\;
        
        Insert $(S_i,i,Z_i)$ into $H$\;
    }

    $i\gets i+1$\;
}
Sort the elements of $H$ as
\[
(S_{i_1},i_1,Z_{i_1}),\ldots,(S_{i_B},i_B,Z_{i_B})
\]
so that
\[
S_{i_1}\le S_{i_2}\le \cdots \le S_{i_B}.
\]

\Return{$(Z_{i_1},\ldots,Z_{i_B})$}

\caption{Finite implementation of the $B$-sample mapped PFR}
\label{alg:B_sample_finite_PFR}
\end{algorithm}

\subsubsection{Proof of Correctness}

We then prove that the implementation of Algorithm~\ref{alg:B_sample_finite_PFR} is correct.

\begin{theorem}
Let $P\ll Q$ be probability measures on a Polish space $\mathcal U$, and let
$r:=dP/dQ$. Let $Z_1,Z_2,\ldots\stackrel{iid}{\sim} Q$ and let
$0<T_1<T_2<\cdots$ be the arrival times of a unit-rate Poisson process,
independent of $(Z_i)_{i\ge1}$. Define
\[
S_i :=
\begin{cases}
T_i/r(Z_i), & r(Z_i)>0,\\
+\infty, & r(Z_i)=0.
\end{cases}
\]
Assume that there exists $\underline w>0$ such that
\[
1/r(u)\ge \underline w
\qquad Q\text{-a.s. on }\{r(u)>0\}.
\] 
Fix $B\in\mathbb N$. Consider the algorithm that scans the candidates in
the order $i=1,2,\ldots$, keeps the $B$ smallest observed scores, and stops
at the first time $n$ such that the current $B$-th smallest observed score
is at most $T_n\underline w$. Then the algorithm terminates almost surely
after finitely many iterations. Moreover, at termination, the retained
candidates are exactly the $B$ smallest-score candidates among the infinite
sequence. Consequently, the returned marks are distributed as
\[
(\widetilde U_P(1),\ldots,\widetilde U_P(B)),
\]
the first $B$ points of the mapped Poisson process with respect to $P$,
and hence are i.i.d. according to $P$.
\end{theorem}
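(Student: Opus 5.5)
The argument has three parts: finite termination, exactness of the stopping rule, and the distributional claim, which we obtain from Proposition~\ref{prop:ppr_output_dist}.

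\emph{Termination.} Since $P\ll Q$ and $P$ is a probability measure, $Q(r>0)>0$. The indices $i$ with $r(Z_i)>0$ therefore form an infinite sequence almost surely, by the strong law applied to the i.i.d.\ indicators $\mathbb{1}\{r(Z_i)>0\}$. Hence after some finite index $n_0$ the buffer $H$ contains $B$ finite scores, and its maximum $M_n$ is finite and non-increasing in $n$ from then on. Because $T_n\to\infty$ almost surely (as $T_n/n\to1$), there is a finite $n\ge n_0$ with $T_n\underline w\ge M_n$. At that $n$ the while-condition fails and the loop halts.

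\emph{Exactness.} At termination time $n$, consider any unscanned index $j>n$. Either $r(Z_j)=0$, so $S_j=+\infty$, or
\[
S_j = T_j/r(Z_j)\ \ge\ T_j\underline w\ >\ T_n\underline w\ \ge\ M_n ,
\]
using the lower bound on $1/r$ (valid $Q$-a.s., hence for all $j$ simultaneously a.s., since there are countably many indices) and the strict monotonicity $T_j>T_n$. No unscanned candidate can therefore beat the current $B$-th smallest score. We also check that $H$ holds the $B$ smallest scores among indices $1,\dots,n$. This is the standard invariant of the insert/replace-max update and follows by induction on $n$. Combining the two facts, $H$ equals the set of $B$ smallest scores over the whole infinite sequence. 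Ties occur with probability zero because the $T_i$ have a continuous joint law, so after sorting, the returned marks are exactly $\tilde Z_P(1),\dots,\tilde Z_P(B)$ of Definition~\ref{def::mapped_PP}. A small point must be handled here: the candidates with $S_i=+\infty$ that enter $H$ early are eventually displaced, because finite scores keep arriving and replace the maximum.

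\emph{Distribution.} With the output identified as the first $B$ points of the mapped process, Proposition~\ref{prop:ppr_output_dist} says that $\{\tilde Z_P(j),\tilde T(j)\}$ is a Poisson process with intensity $P\times\lambda_{\mathbb{R}_{\ge0}}$. Its marks are i.i.d.\ $P$ independently of the arrival times, so the first $B$ marks in time order are i.i.d.\ $P$.

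The main obstacle is the exactness step. The inequality must use strictness $T_j>T_n$ together with the a.s.\ quantifier over all $j$, and one must check that the algorithm's ordering and tie-breaking agree with the sorted sequence of Definition~\ref{def::mapped_PP}. Both points are resolved almost surely, since ties have probability zero.
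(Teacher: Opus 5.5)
Your proposal is correct and follows essentially the same route as the paper: the safe-stopping inequality $S_j\ge T_j\underline w>T_n\underline w\ge S_{n,(B)}$ for unscanned $j$ is the same, and the distributional claim comes from the mapping theorem (you cite Proposition~\ref{prop:ppr_output_dist}, whereas the paper re-derives the intensity $P\times\lambda$ via the change of variables $t\mapsto t/r(u)$). The only variation is in termination: you use that $B$ finite scores eventually arrive and the retained maximum is non-increasing, while the paper waits until the indices $i_1,\dots,i_B$ of the global first $B$ mapped points have been scanned, and both arguments then conclude from $T_n\to\infty$.
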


\begin{proof}
Let
\[
\Pi_Q:=\{(Z_i,T_i):i\ge 1\}.
\]
Since $0<T_1<T_2<\cdots$ are the arrival times of a unit-rate Poisson
process and $Z_1,Z_2,\ldots\stackrel{iid}{\sim}Q$ are independent marks,
$\Pi_Q$ is a Poisson point process on $\mathcal U\times \mathbb R_{\ge 0}$
with intensity measure $Q\times \lambda$, where $\lambda$ denotes the
Lebesgue measure on $\mathbb R_{\ge 0}$.

Consider the measurable mapping
\[
\Phi:\mathcal U\times \mathbb R_{\ge 0}\to
\mathcal U\times \mathbb R_{\ge 0},
\qquad
\Phi(u,t)=\left(u,\frac{t}{r(u)}\right)
\]
on the set $\{u:r(u)>0\}$. Points with $r(u)=0$ are assigned score
$+\infty$ and hence never appear among the finite mapped arrival times.
For any nonnegative measurable function $g$ on
$\mathcal U\times \mathbb R_{\ge 0}$, we have
\[
\begin{aligned}
\int_{\mathcal U}\int_0^\infty
g\left(u,\frac{t}{r(u)}\right)\,dt\,Q(du)
&=
\int_{\{r>0\}}\int_0^\infty
g\left(u,\frac{t}{r(u)}\right)\,dt\,Q(du) \\
&=
\int_{\{r>0\}}\int_0^\infty
g(u,s)r(u)\,ds\,Q(du) \\
&=
\int_{\mathcal U}\int_0^\infty
g(u,s)\,ds\,P(du).
\end{aligned}
\]
Therefore, by the mapping theorem for Poisson point processes, the mapped
point process
\[
\Pi_P:=\left\{\left(Z_i,S_i\right):i\ge 1,\ r(Z_i)>0\right\},
\qquad
S_i:=\frac{T_i}{r(Z_i)},
\]
is a Poisson point process on $\mathcal U\times \mathbb R_{\ge 0}$ with
intensity measure $P\times \lambda$.

Let
\[
S_{i_1}<S_{i_2}<\cdots
\]
be the increasing ordering of the finite scores in $\Pi_P$. Since the
time-marginal intensity of $\Pi_P$ is $\lambda$, the ordered times
$S_{i_1},S_{i_2},\ldots$ are the arrival times of a unit-rate Poisson
process. Moreover, conditional on these arrival times, the associated marks
are independent with common distribution $P$. Hence
\[
(Z_{i_1},Z_{i_2},\ldots)\stackrel{d}{=}(U_1,U_2,\ldots),
\]
where $U_1,U_2,\ldots\stackrel{iid}{\sim}P$. In particular,
\[
(Z_{i_1},\ldots,Z_{i_B})
\]
are exactly the first $B$ points of the mapped Poisson process with respect
to $P$, and they are i.i.d. according to $P$.

It remains to show that the finite algorithm stops safely and terminates
almost surely. After observing candidates $1,\ldots,n$, let
\[
S_{n,(B)}
\]
denote the $B$-th smallest value among $S_1,\ldots,S_n$, whenever at least
$B$ finite scores have been observed. By construction, the algorithm keeps
the $B$ smallest observed scores, so the largest score retained by the
algorithm is exactly $S_{n,(B)}$.

Suppose the algorithm stops at time $n$. Then
\[
S_{n,(B)}\le T_n\underline w.
\]
For every future candidate $m>n$, we have $T_m>T_n$. If $r(Z_m)=0$, then
$S_m=+\infty$, so such a candidate cannot improve the current list. If
$r(Z_m)>0$, then by the assumption $1/r(u)\ge \underline w$ $Q$-a.s.,
\[
S_m
=
\frac{T_m}{r(Z_m)}
=
T_m\frac{1}{r(Z_m)}
\ge
T_m\underline w
>
T_n\underline w
\ge
S_{n,(B)}.
\]
Thus every future score is strictly larger than the current $B$-th smallest
observed score. Therefore no future candidate can enter the set of the
global $B$ smallest scores. Hence, when the algorithm stops, the retained
candidates are exactly the global $B$ smallest-score candidates among the
infinite sequence.

Finally, we prove almost sure finite termination. Since $\Pi_P$ is a
Poisson point process with intensity $P\times \lambda$, its first $B$ mapped
arrival times are finite almost surely. Let
\[
S_{i_B}
\]
denote the $B$-th smallest mapped score, and let
\[
N_0:=\max\{i_1,\ldots,i_B\}.
\]
Then $N_0<\infty$ almost surely. After the algorithm has processed
candidates $1,\ldots,N_0$, it has already observed the true global first
$B$ mapped candidates, so the largest retained score is $S_{i_B}$. Since
\[
T_n\to \infty
\qquad\text{almost surely},
\]
and since $\underline w>0$, there exists an almost surely finite random
index $N\ge N_0$ such that
\[
T_N\underline w\ge S_{i_B}.
\]
At time $N$, the algorithm therefore satisfies the stopping condition
\[
S_{N,(B)}=S_{i_B}\le T_N\underline w.
\]
Thus the algorithm terminates almost surely after finitely many iterations.

Combining the safe stopping argument with the mapped Poisson process
argument, the algorithm returns exactly
\[
(Z_{i_1},\ldots,Z_{i_B}),
\]
which are the first $B$ points of the mapped Poisson process with respect to
$P$. Consequently, the returned samples are i.i.d. according to $P$.
\end{proof}

\subsection{A better finite-support MPFR implementation}
\label{subsec::direct_topk_MPFR}

The proposal-scanning implementation in Appendix~\ref{subapp::direct_implementation_MPFR} can be slow in finite-vocabulary LLM decoding. 
Its stopping rule depends on a lower bound on the likelihood ratio, which can be very small when a rare token has small proposal probability but non-negligible target probability. 
In our experiments, the logits are processed by temperature and top-$k$ truncation, so the resulting distribution has finite support of size at most $K$. 
In this setting, we can simulate the relevant Poisson clocks directly and avoid proposal scanning altogether.

Let $P$ be a distribution on a finite vocabulary $\mathcal U$, and define
\[
\mathcal A_P:=\{u\in\mathcal U:P(u)>0\}.
\]
For each $u\in\mathcal A_P$, let
\[
G_{u,j}:=\sum_{m=1}^{j}E_{u,m},\qquad j\geq 1,
\]
where the $E_{u,m}$'s are independent $\mathrm{Exp}(1)$ random variables. 
We associate to $(u,j)$ the mapped-Poisson score
\[
S_{u,j}:=\frac{G_{u,j}}{P(u)}.
\]
The first $B$ MPFR samples are the labels of the $B$ smallest scores among
\[
\{S_{u,j}:u\in\mathcal A_P,\ j\geq 1\}.
\]
Since no token can contribute more than $B$ arrivals among the first $B$ total arrivals, it is enough to generate $j=1,\ldots,B$ for every $u\in\mathcal A_P$.

\begin{algorithm}[tbp]
\SetAlgoLined
\DontPrintSemicolon
\SetAlgoNoEnd
\SetKwInOut{Input}{Given}
\SetKwInOut{Output}{Return}

\Input{finite distribution $P$ on $\mathcal U$, number of samples $B$, and keyed source $\Pi$}
\Output{$B$ samples from $P$}

$\mathcal A_P\gets\{u\in\mathcal U:P(u)>0\}$\;
$H\gets\emptyset$\tcp*[r]{candidate triples $(S,u,j)$}

\For{each $u\in\mathcal A_P$}{
    $G\gets 0$\;
    \For{$j=1$ \KwTo $B$}{
        Generate $E_{u,j}\sim\mathrm{Exp}(1)$ from $\Pi$\;
        $G\gets G+E_{u,j}$\;
        $S\gets G/P(u)$\;
        Insert $(S,u,j)$ into $H$\;
    }
}

Let
\[
(S_{(1)},U_{(1)},J_{(1)}),\ldots,(S_{(B)},U_{(B)},J_{(B)})
\]
be the $B$ elements of $H$ with smallest scores, ordered by increasing score\;

\Return{$(U_{(1)},\ldots,U_{(B)})$}

\caption{Direct finite-support implementation of $B$-sample MPFR}
\label{alg:direct_finite_support_MPFR}
\end{algorithm}

\begin{theorem}[Exactness of direct finite-support MPFR]
\label{thm::direct_MPFR_exactness}
Let $P$ be a distribution on a finite set $\mathcal U$. 
The output $(U_{(1)},\ldots,U_{(B)})$ of Algorithm~\ref{alg:direct_finite_support_MPFR} consists of $B$ independent samples from $P$. 
That is, for every $(u_1,\ldots,u_B)\in\mathcal U^B$,
\[
\mathbf P\{U_{(1)}=u_1,\ldots,U_{(B)}=u_B\}
=
\prod_{b=1}^B P(u_b).
\]
\end{theorem}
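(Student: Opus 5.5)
The plan is to identify the collection of scores $\{S_{u,j}\}$ built by Algorithm~\ref{alg:direct_finite_support_MPFR} with a Poisson process of intensity $P\times\lambda_{\mathbb R_{\ge0}}$, and then to reuse Proposition~\ref{prop:ppr_output_dist}. I also need to show that truncating at $j\le B$ does not change which points are the $B$ smallest.

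\textbf{Step 1.} Fix $u\in\mathcal A_P$. The partial sums $G_{u,j}=\sum_{m\le j}E_{u,m}$ are the arrival times of a unit-rate Poisson process on $[0,\infty)$. Dividing by $P(u)$ rescales time, so $\{S_{u,j}\}_{j\ge1}$ are the arrival times of a Poisson process of rate $P(u)$; this is the mapping theorem in one dimension. The families $(E_{u,m})_m$ are independent across $u$. Therefore the superposition of the marked points $\{(u,S_{u,j}):u\in\mathcal A_P,\ j\ge1\}$ is a Poisson process on $\mathcal U\times\mathbb R_{\ge0}$ with intensity $\sum_u P(u)\delta_u\times\lambda=P\times\lambda$.

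\textbf{Step 2.} Ties occur with probability zero. Projecting onto time, the ordered scores form a unit-rate Poisson process. By the marking/coloring theorem, the marks attached to the ordered arrivals are i.i.d.\ with law $P(u)/\sum_{u'}P(u')=P(u)$ and are independent of the times. This is the content of Proposition~\ref{prop:ppr_output_dist}. Equivalently, I can argue directly with competing exponential clocks: at each arrival, by memorylessness, the next arrival comes from $u$ with probability $P(u)$, independently of the past. Either way, the marks of the $B$ globally smallest scores are i.i.d.\ $P$.

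\textbf{Step 3 (truncation).} For each $u$, the scores $S_{u,1}<S_{u,2}<\cdots$ are strictly increasing in $j$. So if $(u,j)$ is among the $B$ globally smallest points, then $(u,1),\ldots,(u,j-1)$ are also among them, which forces $j\le B$. Hence the $B$ smallest elements of the finite set $H=\{S_{u,j}:j\le B\}$ coincide exactly with the $B$ smallest elements of the infinite collection. The output of the algorithm is then exactly the first $B$ marks from Step 2, which gives $\mathbf P\{U_{(1)}=u_1,\ldots,U_{(B)}=u_B\}=\prod_b P(u_b)$.

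\textbf{Main obstacle.} The delicate point is the memoryless/marking argument in Step 2. It must be made rigorous that conditioning on the identities of earlier winners does not bias later ones. I would handle this by invoking the Poisson superposition and marking theorems rather than computing order statistics by hand. The keyed source $\Pi$ only needs to provide i.i.d.\ $\mathrm{Exp}(1)$ variables; for a truly random key this holds exactly, and for a pseudorandom key it holds up to computational indistinguishability.
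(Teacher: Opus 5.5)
Your proposal is correct and follows essentially the same route as the paper. You rescale the unit-rate clocks into independent rate-$P(u)$ Poisson processes, superpose them into a marked Poisson process with intensity $P\times\lambda$ whose ordered marks are i.i.d.\ $P$, and note that truncating at $j\le B$ changes nothing; your Step 3 spells out, via monotonicity of $S_{u,j}$ in $j$, the paper's one-line remark that no single token clock can contribute more than $B$ of the first $B$ arrivals.
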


\begin{proof}
For each $u\in\mathcal A_P$, the sequence $(G_{u,j})_{j\geq 1}$ is the arrival process of a unit-rate Poisson process. 
Therefore,
\[
\left(\frac{G_{u,j}}{P(u)}\right)_{j\geq 1}
\]
is a Poisson process on $\mathbb R_+$ with rate $P(u)$, since
\[
\#\left\{j:\frac{G_{u,j}}{P(u)}\leq t\right\}
=
\#\{j:G_{u,j}\leq P(u)t\}
\]
is Poisson with mean $P(u)t$. 
The processes corresponding to different tokens are independent. 
Their superposition is therefore a marked Poisson process on $\mathbb R_+\times\mathcal U$ with intensity measure
\[
dt\otimes P.
\]
Equivalently, the total arrival rate is $\sum_{u\in\mathcal A_P}P(u)=1$, and the marks of the ordered arrivals are independent with common distribution $P$.

Algorithm~\ref{alg:direct_finite_support_MPFR} constructs the first $B$ candidate arrivals of each token clock. 
This is sufficient because among the first $B$ arrivals of the superposed process, no individual token clock can contribute more than $B$ arrivals. 
Hence the $B$ smallest elements in $H$ are exactly the first $B$ arrivals of the superposed marked Poisson process, and their labels are $B$ independent samples from $P$.
\end{proof}

We now describe the batched speculative decoding implementation. 
For a context $c$, let $P_k(\cdot|c)$ and $Q_k(\cdot|c)$ denote the processed target and draft distributions after applying the same logits processing, such as temperature and top-$k$ truncation. 
The draft side builds a tree of candidate continuations using Algorithm~\ref{alg:direct_finite_support_MPFR} with $Q_k$. 
The target side then evaluates all contexts needed for verification in a batched manner and uses the same MPFR clocks with $P_k$ to follow the realized target path. 
This batching changes only the order and grouping of model evaluations, not the sampling law.

\begin{algorithm}[tbp]
\SetAlgoLined
\DontPrintSemicolon
\SetAlgoNoEnd
\SetKwIF{If}{ElseIf}{Else}{if}{:}{else if}{else}{end}
\SetKwInOut{Input}{Given}

\Input{lookahead $L$, number of drafts $B$, output length $N$, target model $P$, draft model $Q$, prompt $w_{1:n}$, key $\mathsf{k}$}
% \Output{generated continuation from the processed target model $P_k$}

\While{$n<N$}{
    $h\gets w_{1:n}$, $\ell\gets \min\{L,N-n\}$\;
    $\mathcal C_0\gets\{h\}$ and $\nu(h)\gets B$\tcp*[r]{context multiplicities}

    \tcp{Draft tree construction}
    \For{$s=1$ \KwTo $\ell$}{
        $\mathcal C_s\gets\emptyset$\;
        \For{each $c\in\mathcal C_{s-1}$}{
            Generate
            \[
            (X_1(c),\ldots,X_{\nu(c)}(c))
            \gets
            \textsc{MPFR}(Q_k(\cdot|c),\Pi_{\mathsf{k}}(c),\nu(c)).
            \]
            Let $\mathcal D(c)$ be the set of distinct tokens among these samples\;
            \For{each $u\in\mathcal D(c)$}{
                $m(c,u)\gets |\{i:X_i(c)=u\}|$\;
                Add $c\|u$ to $\mathcal C_s$\;
                $\nu(c\|u)\gets \nu(c\|u)+m(c,u)$\;
            }
        }
    }

    \tcp{Batched target evaluation}
    Compute the processed target distributions $P_k(\cdot|c)$ for all contexts
    \[
    c\in \bigcup_{s=0}^{\ell}\mathcal C_s
    \]
    using batched target-model forward passes\;
    \For{each such context $c$}{
        Generate
        \[
        Y(c)\gets
        \textsc{MPFR}(P_k(\cdot|c),\Pi_{\mathsf{k}}(c),1).
        \]
    }

    \tcp{Follow the realized target path}
    $c^\star\gets h$, $\mathrm{accepted}\gets\textbf{true}$\;
    \For{$s=1$ \KwTo $\ell$}{
        Emit $Y(c^\star)$ and append it to the output\;
        \If{$Y(c^\star)\notin\mathcal D(c^\star)$}{
            $\mathrm{accepted}\gets\textbf{false}$\;
            \textbf{break}\tcp*[r]{first rejection ends the block}
        }
        \If{$n=N$}{
            \textbf{break}\;
        }
        $c^\star\gets c^\star\|Y(c^\star)$\;
    }

    \If{$\mathrm{accepted}$ \textbf{and} $n<N$}{
        Emit the precomputed bonus token $Y(c^\star)$\;
    }
}

\caption{Batched-target direct-top-$k$ MPFR speculative sampling}
\label{alg:batched_direct_topk_MPFR_spec}
\end{algorithm}

\begin{theorem}[Exactness of batched-target direct-top-$k$ MPFR decoding]
\label{thm::batched_direct_topk_MPFR_spec_exactness}
Assume that for every context $c$, the keyed source $\Pi_{\mathsf{k}}(c)$ provides independent exponential clocks across tokens and local arrival indices, and that the sources $\{\Pi_{\mathsf{k}}(c)\}_c$ are independent across distinct contexts. 
Then Algorithm~\ref{alg:batched_direct_topk_MPFR_spec} generates exactly from the autoregressive model with transition kernel $P_k(\cdot|c)$. 
That is, for every finite sequence $u_1,\ldots,u_m$,
\[
\mathbf P\{W_{n+1}=u_1,\ldots,W_{n+m}=u_m\}
=
\prod_{t=1}^{m}
P_k(u_t|w_{1:n},u_1,\ldots,u_{t-1}).
\]
\end{theorem}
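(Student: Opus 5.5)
The plan is to show that every emitted token is simply the top-ranked target-side MPFR sample at its own realized context. The draft tree then only decides \emph{how many} of these target samples are emitted per block, never \emph{which} token is emitted. Exactness will follow from a ``global sampler'' coupling.

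Fix the keyed family $\{\Pi_{\mathsf{k}}(c)\}_c$, indexed by all finite contexts $c$. Define the idealized reference process $W^\star$ by $W^\star_{n+1}=Y(w_{1:n})$ and $W^\star_{t+1}=Y(W^\star_{1:t})$, where $Y(c):=\textsc{MPFR}(P_k(\cdot|c),\Pi_{\mathsf{k}}(c),1)$. The proof has three steps.

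\textbf{Step 1 (the reference process has the right law).} By Theorem~\ref{thm::direct_MPFR_exactness} with $B=1$, $Y(c)\sim P_k(\cdot|c)$ for each fixed $c$. Since the sources are independent across distinct contexts, conditional on $W^\star_{1:t}=c$ the variable $Y(c)$ is a function of $\Pi_{\mathsf{k}}(c)$ alone. That source is independent of the sources $\Pi_{\mathsf{k}}(c')$ at the strict prefixes $c'$ of $c$, which alone determine the event $\{W^\star_{1:t}=c\}$. Hence $\mathbf{P}\{W^\star_{t+1}=u\mid W^\star_{1:t}=c\}=P_k(u|c)$. The chain rule then gives the product formula.

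\textbf{Step 2 (the algorithm emits exactly $W^\star$, pathwise).} I would argue by induction over blocks and over positions inside a block. At the start of a block the root is $h=W^\star_{1:n}$ by the induction hypothesis. Inside the block, the loop sets $c^\star\gets c^\star\|Y(c^\star)$ and emits $Y(c^\star)$. So whenever $c^\star$ is the realized prefix, the emitted token is exactly $W^\star_{t+1}$, regardless of acceptance.

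Two details need checking. First, each $c^\star$ reached must be a node of the draft tree, so that $Y(c^\star)$ was actually computed in the batch. This holds because we only advance when $Y(c^\star)\in\mathcal D(c^\star)$, which means $c^\star\|Y(c^\star)\in\mathcal C_{s}$. Second, the bonus token is $Y(c^\star)$ at a tree leaf of depth $\ell$, which the batch computed for all $c\in\bigcup_{s\le\ell}\mathcal C_s$; it is again the next $W^\star$ token. The rejection token is also $Y(c^\star)$, so the break causes no inconsistency. The truncation at $n=N$ only shortens the output.

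\textbf{Step 3 (no reuse of randomness).} The key source $\Pi_{\mathsf{k}}(c)$ is a deterministic function of $c$, so reusing it across blocks, or across the draft and target sides, produces the \emph{same} $Y(c)$. The algorithm can therefore never query a source twice with conflicting outputs. The draft samples $X_i(c)$ are functions of the same clocks, but they affect only the stopping/acceptance pattern, not the value of any emitted $Y$.

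I expect this to be the main obstacle: the draft tree, and hence the set of contexts at which $Y$ is evaluated, is itself random and correlated with the clocks, and this is exactly what the pathwise identity of Step 2 avoids. Because the output sequence equals $W^\star$ deterministically given the family $\{\Pi_{\mathsf{k}}(c)\}$, the dependence of block boundaries on the drafts is irrelevant. The law of $W_{n+1:n+m}$ equals that of $W^\star_{n+1:n+m}$, which Step 1 computed.
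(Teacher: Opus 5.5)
Your proof is correct and follows essentially the same argument as the paper: $Y(c)\sim P_k(\cdot|c)$ by Theorem~\ref{thm::direct_MPFR_exactness}; the event of reaching context $c$ is determined by the sources at strict prefixes of $c$, so it is independent of $\Pi_{\mathsf{k}}(c)$; and the chain rule finishes. Your reference process $W^\star$ and the pathwise identity of Step~2 make explicit what the paper asserts in one line, namely that every emitted token (accepted, rejection, or bonus) equals $Y$ of the current realized context, which is what justifies the paper's claim about the event $\{C_t=c\}$.
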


\begin{proof}
Fix a context $c$. 
By Theorem~\ref{thm::direct_MPFR_exactness},
\[
Y(c)=\textsc{MPFR}(P_k(\cdot|c),\Pi_{\mathsf{k}}(c),1)
\]
has distribution $P_k(\cdot|c)$. 
The draft samples generated from $Q_k(\cdot|c)$ may use the same keyed clocks, and hence are coupled with $Y(c)$, but this does not change the marginal distribution of $Y(c)$.

Let
\[
C_t:=(w_{1:n},W_{n+1},\ldots,W_{n+t-1})
\]
be the random context before generating $W_{n+t}$. 
The event $\{C_t=c\}$ is determined by the keyed sources associated with strict prefix contexts of $c$. 
By the assumed independence across contexts, $\Pi_{\mathsf{k}}(c)$ is independent of this event. 
Therefore, conditional on $C_t=c$,
\[
W_{n+t}=Y(c)\sim P_k(\cdot|c),
\]
and hence
\[
\mathbf P\{W_{n+t}=u|C_t=c\}=P_k(u|c).
\]
Applying the chain rule gives
\[
\begin{aligned}
&\mathbf P\{W_{n+1}=u_1,\ldots,W_{n+m}=u_m\} \\
&\qquad =
\prod_{t=1}^{m}
\mathbf P\{W_{n+t}=u_t
| W_{n+1}=u_1,\ldots,W_{n+t-1}=u_{t-1}\} \\
&\qquad =
\prod_{t=1}^{m}
P_k(u_t|w_{1:n},u_1,\ldots,u_{t-1}).
\end{aligned}
\]
Thus the generated sequence has exactly the desired autoregressive law. 
Finally, batched target evaluation only changes the computational order of deterministic model evaluations. 
It does not change the context-indexed MPFR clocks or the resulting samples $Y(c)$, so it does not change the output distribution.
\end{proof}

This implementation removes the proposal-scanning overhead by simulating the token-wise Poisson clocks directly. 
For support size at most $K$, producing $B$ MPFR samples requires $KB$ exponential variables followed by a top-$B$ selection. 
The batched target implementation further reduces the number of target-model calls: instead of verifying the realized path sequentially with roughly one target forward pass per generated token, it evaluates the contexts in a speculative block together and extracts all logits needed for verification. 
If a block emits $\mathrm{BE}$ tokens on average, the target-forward cost per emitted token is reduced from approximately $1$ to approximately $1/\mathrm{BE}$, while preserving the exact target marginal distribution.

\section{Theoretic Guarantees}
\label{app::theory}

\subsection{Proof of Theorem~\ref{thm::PML_accept}}
\label{subapp::proof_PML_accept}

\begin{proof}
The proof is based on the Poisson matching lemma~\citep{li2021unified}. 
Let $p_i := P(i)$ and $q_i := Q(i)$ for $i\in\mathcal{V}$. 
Let $Y := \tilde Z_P(1)$ and $X^{(b)} := \tilde Z_Q(b)$ for $b=1,\ldots,B$ be generated from the same mapped Poisson process, and define the token-level acceptance event
\[
\mathcal{A}_t := \Big\{Y \in \{X^{(1)},\ldots,X^{(B)}\}\Big\}, 
\]
which is the acceptance event. Apply the ~\citep[Lemma 3]{li2021unified} with $j=1$ and $k=B$.
Conditioned on $Y=\tilde Z_P(1)=i$, the event $\mathcal{A}_t^c$ is exactly the event that
the first mapped $P$-sample does not appear among the first $B$ mapped $Q$-samples.
Hence
\[
\mathbf{P}(\mathcal{A}_t^c  | Y=i)
\le
\left(1-\left(1+\frac{p_i}{q_i}\right)^{-1}\right)^B
=
\left(\frac{p_i}{p_i+q_i}\right)^B.
\]
Taking complements gives
\[
\mathbf{P}(\mathcal{A}_t  | Y=i)
\ge
1-\left(\frac{p_i}{p_i+q_i}\right)^B.
\]
Finally, average over $Y\sim P$:
\[
\mathbf{P}(\mathcal{A}_t)
=
\sum_{i\in\mathcal{V}} p_i \mathbf{P}(\mathcal{A}_t  | Y=i)
\ge
\sum_{i\in\mathcal{V}}
p_i\left[1-\left(\frac{p_i}{p_i+q_i}\right)^B\right].
\]
For $B=1$, this simplifies to
\[
1-\sum_{i\in\mathcal{V}}\frac{p_i^2}{p_i+q_i}
=
\sum_{i\in\mathcal{V}}\frac{p_iq_i}{p_i+q_i}.
\qedhere
\]
\end{proof}

\subsection{Comparison with~\citep{rowan2025list} on Acceptance Probability}
\label{subapp::compare_Rowan_acpt_prob}

We compare our token-level acceptance probability with the token-level acceptance probability from \citep{rowan2025list} in two levels, and we prove that in each level our algorithm's acceptance probability is strictly better.

\subsubsection{}

We first prove a stronger version of  Theorem~\ref{thm::PML_accept}, but not as clean as it.

\begin{theorem}
\label{thm::PML_accept_exact}
Let $P$ be the target distribution and $Q$ be the draft distribution on a finite
vocabulary $\mathcal V$. Assume first that $P(i),Q(i)>0$ for all
$i\in\mathcal V$. For each $i\in\mathcal V$, define
\[
    \alpha_i
    :=
    P(i)\sum_{j\in\mathcal V}
    \left(
        \frac{Q(j)}{Q(i)}-\frac{P(j)}{P(i)}
    \right)_+ .
\]
If the target token is generated as $\tilde U_P(1)$ and the $B$ draft tokens are
generated as $\tilde U_Q(1),\ldots,\tilde U_Q(B)$ from the same underlying
Poisson process, then the acceptance probability of at least one token satisfies
\begin{equation}
    \label{eq::PML_accept_exact}
    \mathbf P(\mathrm{accept})
    \ge
    1-\sum_{i\in\mathcal V}
    P(i)
    \left(
        \frac{\alpha_i}{1+\alpha_i}
    \right)^B .
\end{equation}
\end{theorem}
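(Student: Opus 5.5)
The plan is to reduce to the discrete exponential-race picture and compute the rejection probability conditionally on the target winner. By Theorem~\ref{thm::PFR_reduces_to_EFR} and its proof, and with the clock representation used in Theorem~\ref{thm::direct_MPFR_exactness}, the common Poisson process $Q\times\lambda$ splits into independent per-token processes. For each token $j$, let $G_{j,1}<G_{j,2}<\cdots$ be the arrivals of a unit-rate Poisson process, independent across $j$. The $m$-th $Q$-arrival of token $j$ then occurs at $G_{j,m}/q_j$, and its mapped $P$-score is $G_{j,m}/p_j$, since the likelihood-ratio rescaling acts token-wise. Hence:
\begin{itemize}
\item the target token is $Y=\arg\min_j G_{j,1}/p_j$;
\item the drafts $\tilde U_Q(1),\ldots,\tilde U_Q(B)$ are the labels of the $B$ smallest values among $\{G_{j,m}/q_j\}$.
\end{itemize}
Rejection means that token $Y=i$ is not among these $B$ smallest $Q$-arrivals. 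Equivalently, at least $B$ arrivals of tokens $j\neq i$ occur in $Q$-time strictly before $G_{i,1}/q_i$. This uses that the first $Q$-arrival of $i$ is $G_{i,1}/q_i$.

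Next I condition on $\{Y=i,\ G_{i,1}=g\}$. The event $Y=i$ with $G_{i,1}=g$ is exactly $\bigcap_{j\neq i}\{G_{j,1}>g\,p_j/p_i\}$. This is a product event over independent clocks, so conditionally the processes $(G_{j,\cdot})_{j\ne i}$ remain independent. By the memoryless/strong Markov property, each is a unit-rate Poisson process restricted to $(g p_j/p_i,\infty)$.

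The number of arrivals of token $j$ in $Q$-time before $g/q_i$ is the number of its points $G_{j,m}$ in the interval $(g p_j/p_i,\; g q_j/q_i)$. Conditionally this count is Poisson with mean $g\,(q_j/q_i-p_j/p_i)_+$. Summing over $j$ (the $j=i$ term vanishes), the total count $N$ is conditionally Poisson with mean $g\,\alpha_i/p_i$.

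Finally I need the law of $g$ given $Y=i$. The minimum $M=\min_j G_{j,1}/p_j$ is $\mathrm{Exp}(\sum_j p_j)=\mathrm{Exp}(1)$ and is independent of the argmin, a standard exponential-race fact. So $g=p_iM$ with $M\sim\mathrm{Exp}(1)$ given $Y=i$, and $N$ given $Y=i$ is a Poisson mixture with $\mathrm{Exp}(1)$-distributed mean $\alpha_i M$. Such a mixture is geometric:
\[
\mathbf P(N\ge B\mid Y=i)=\int_0^\infty e^{-t}\,\mathbf P(\mathrm{Poi}(\alpha_i t)\ge B)\,dt=\Big(\frac{\alpha_i}{1+\alpha_i}\Big)^B .
\]
This can be checked by computing $\mathbf P(N=k)=\alpha_i^k/(1+\alpha_i)^{k+1}$. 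Averaging over $Y\sim P$ gives the claimed bound, in fact with equality.

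The main obstacle is the conditioning step. I must verify carefully that conditioning on the winner and its time leaves the other token clocks as independent, shifted Poisson processes, and that $g$ has the stated law. Ties and the boundary case where $q_j/q_i\le p_j/p_i$ (no contribution) must also be handled. For tokens with $P(i)=0$ or $Q(i)=0$, I would remove them from the sums or use the $+\infty$ score convention of Theorem~\ref{thm::PFR_reduces_to_EFR}. The positivity assumption then extends by the same argument with the convention $(\cdot)_+=+\infty\Rightarrow$ certain rejection when $q_i=0<p_i$.
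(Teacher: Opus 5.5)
Your argument is correct, and it is more self-contained than the paper's. The paper's proof conditions on $Y=i$, invokes the ``exact $j=1$ tail formula'' of the generalized Poisson matching lemma as a black box (the paper neither states nor proves it) to get $\mathbf P(Y\notin\{X_1,\ldots,X_B\}\mid Y=i)\le(\alpha_i/(1+\alpha_i))^B$, and averages over $Y\sim P$. You keep the same outer structure (condition on the target winner, then average) but prove the tail formula yourself. You split $Q\times\lambda$ into independent per-token unit-rate clocks, so that rejection becomes ``at least $B$ foreign $Q$-arrivals precede $G_{i,1}/Q(i)$''. Conditioning on the winner and its clock value makes this count Poisson with mean $\alpha_i M$, and the $\mathrm{Exp}(1)$-mixture of that Poisson law is geometric.

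This buys three things. First, the theorem no longer depends on an unstated external lemma. Second, it shows where $\alpha_i$ comes from: $\alpha_i M$ is the conditional mean number of foreign draft arrivals that overtake the target winner. Third, the bound holds with equality, so $1-\sum_i P(i)(\alpha_i/(1+\alpha_i))^B$ is the exact acceptance probability. For $B=1$ it reduces to the exponential-race value $\sum_i 1/\sum_j\max\{P(j)/P(i),Q(j)/Q(i)\}$. It also means Theorem~\ref{thm::exact_PML_dominates_LML} compares the scheme's actual acceptance rate, not merely a bound on it, with the list-matching-lemma bound. The paper's citation buys only brevity and the generality of the matching lemma on non-discrete spaces, which is not needed here because $\alpha_i$ is a finite-vocabulary quantity.

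The step you flag as the main obstacle is routine. Let $N_g$ count the points $G_{j,m}<g\,Q(j)/Q(i)$ over $j\neq i$. Since $G_{i,1}$ is independent of the other clocks,
\[
\mathbf P(Y=i,\,N\ge B)=\int_0^\infty e^{-g}\,\mathbf P\Bigl(\bigcap_{j\ne i}\{G_{j,1}>g\,P(j)/P(i)\},\,N_g\ge B\Bigr)\,dg .
\]
The no-point events and the interval counts involve disjoint regions of independent Poisson processes, so the integrand factors as $e^{-g/P(i)}\,\mathbf P\bigl(\mathrm{Poi}(g\alpha_i/P(i))\ge B\bigr)$. Substituting $g=P(i)t$ gives $P(i)(\alpha_i/(1+\alpha_i))^B$, with no measure-zero conditioning needed.

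One small caveat concerns your closing remark. The case $Q(i)=0<P(i)$ violates $P\ll Q$, so with the base process $Q\times\lambda$ token $i$ can never be the target winner. Your ``certain rejection'' convention only makes sense if the shared process is built over a common dominating reference measure, as in Algorithm~\ref{alg:fwss_pfr}. This is outside the theorem's hypotheses anyway.
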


\begin{proof}
Let the target token be
\[
    Y:=\tilde U_P(1),
\]
and let the $B$ draft tokens be
\[
    X_b:=\tilde U_Q(b),\qquad b=1,\ldots,B.
\]
The acceptance event is
\[
    \mathrm{accept}
    =
    \{Y\in\{X_1,\ldots,X_B\}\}.
\]
Conditioning on $Y=i$, the failure event is
\[
    \{Y\notin\{X_1,\ldots,X_B\}\}.
\]
By the exact $j=1$ tail formula in the generalized Poisson matching lemma,
\[
    \mathbf P\!\left(
        Y\notin\{X_1,\ldots,X_B\}
        \,\middle|\, Y=i
    \right)
    \le
    \left(
        \frac{\alpha_i}{1+\alpha_i}
    \right)^B,
\]
where
\[
    \alpha_i
    =
    P(i)\sum_{j\in\mathcal V}
    \left(
        \frac{Q(j)}{Q(i)}-\frac{P(j)}{P(i)}
    \right)_+ .
\]
Therefore,
\[
    \mathbf P(\mathrm{accept}| Y=i)
    \ge
    1-
    \left(
        \frac{\alpha_i}{1+\alpha_i}
    \right)^B .
\]
Averaging over $Y\sim P$ gives
\[
\begin{aligned}
    \mathbf P(\mathrm{accept})
    &=
    \sum_{i\in\mathcal V} P(i)
    \mathbf P(\mathrm{accept}| Y=i)\\
    &\ge
    \sum_{i\in\mathcal V} P(i)
    \left[
        1-
        \left(
            \frac{\alpha_i}{1+\alpha_i}
        \right)^B
    \right]\\
    &=
    1-\sum_{i\in\mathcal V}
    P(i)
    \left(
        \frac{\alpha_i}{1+\alpha_i}
    \right)^B .
\end{aligned}
\]
This proves~\eqref{eq::PML_accept_exact}.
\end{proof}

Moreover,
\begin{equation}
    \label{eq::PML_accept_exact_implies_weak}
    1-\sum_{i\in\mathcal V}
    P(i)
    \left(
        \frac{\alpha_i}{1+\alpha_i}
    \right)^B
    \ge
    1-\sum_{i\in\mathcal V}
    P(i)
    \left(
        \frac{P(i)}{P(i)+Q(i)}
    \right)^B .
\end{equation}
Thus~\eqref{eq::PML_accept_exact} strictly improves the weaker bound
\[
    \mathbf P(\mathrm{accept}) \ge
    1-\sum_{i\in\mathcal V} P(i)
    \left(
        \frac{P(i)}{P(i)+Q(i)}
    \right)^B
\]
whenever the inequality in~\eqref{eq::PML_accept_exact_implies_weak} is strict.

We then show above Theorem~\ref{thm::PML_accept_exact} is tighter than \citep[Proposition 2]{rowan2025list}. 

\begin{theorem}
\label{thm::exact_PML_dominates_LML}
Let $P$ be the target distribution and $Q$ be the draft distribution on a finite
vocabulary $\mathcal V$. Write
\[
    p_i := P(i),\qquad q_i := Q(i),\qquad i\in\mathcal V,
\]
and assume first that $p_i,q_i>0$ for all $i\in\mathcal V$. For each
$i\in\mathcal V$, define
\[
    \alpha_i
    :=
    p_i\sum_{j\in\mathcal V}
    \left(
        \frac{q_j}{q_i}-\frac{p_j}{p_i}
    \right)_+ .
\]
Let
\[
    L_{\mathrm{PML}}
    :=
    1-\sum_{i\in\mathcal V}
    p_i
    \left(
        \frac{\alpha_i}{1+\alpha_i}
    \right)^B
\]
be the acceptance lower bound obtained from the exact $j=1$ tail formula of the
generalized Poisson matching lemma. Let
\[
    L_{\mathrm{LML}}
    :=
    \sum_{i\in\mathcal V}
    \frac{B}{
    \sum_{j\in\mathcal V}
    \left[
        \max\left\{
            \frac{p_j}{p_i},
            \frac{q_j}{q_i}
        \right\}
        +(B-1)\frac{p_j}{p_i}
    \right]}
\]
be the lower bound in Eq.~(3) of the list matching lemma, written in the same
target--draft notation. Then
\[
    L_{\mathrm{PML}}\ge L_{\mathrm{LML}}.
\]
Moreover, if $B>1$ and $P\neq Q$, then
\[
    L_{\mathrm{PML}}>L_{\mathrm{LML}}.
\]
\end{theorem}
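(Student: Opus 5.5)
The plan is to reduce both bounds to expressions in the same scalars $\alpha_i$ and then compare them term by term. The key observation is that the denominator in $L_{\mathrm{LML}}$ can be rewritten in terms of $\alpha_i$. Using $\max\{a,b\}=a+(b-a)_+$ with $a=p_j/p_i$ and $b=q_j/q_i$, and $\sum_j p_j=1$,
\[
\sum_{j\in\mathcal V}\max\Big\{\frac{p_j}{p_i},\frac{q_j}{q_i}\Big\}
=\frac{1}{p_i}+\sum_{j\in\mathcal V}\Big(\frac{q_j}{q_i}-\frac{p_j}{p_i}\Big)_+
=\frac{1+\alpha_i}{p_i},
\]
while $\sum_j (B-1)p_j/p_i=(B-1)/p_i$. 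Hence
\[
L_{\mathrm{LML}}=\sum_{i\in\mathcal V}\frac{B\,p_i}{B+\alpha_i}.
\]
Since $\sum_i p_i=1$, we also have $L_{\mathrm{PML}}=\sum_i p_i\big[1-(\alpha_i/(1+\alpha_i))^B\big]$. It therefore suffices to prove, for every $\alpha\ge 0$,
\[
1-\Big(\frac{\alpha}{1+\alpha}\Big)^B\ \ge\ \frac{B}{B+\alpha},
\]
with strict inequality when $\alpha>0$ and $B>1$.

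For this scalar inequality I would substitute $x:=1/(1+\alpha)\in(0,1]$. Then $\alpha/(1+\alpha)=1-x$ and $B/(B+\alpha)=Bx/(1+(B-1)x)$. Subtracting from $1$, the claim becomes
\[
(1-x)^B\le\frac{1-x}{1+(B-1)x}.
\]
This is trivial at $x=1$. For $x<1$, divide by $1-x$ to get the equivalent form
\[
(1-x)^{B-1}\big(1+(B-1)x\big)\le 1.
\]
Bernoulli's inequality gives $1+(B-1)x\le(1+x)^{B-1}$, so the left side is at most $(1-x^2)^{B-1}\le 1$. Both inequalities are strict when $B>1$ and $0<x<1$, i.e. when $\alpha>0$. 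Multiplying by $p_i>0$ and summing yields $L_{\mathrm{PML}}\ge L_{\mathrm{LML}}$.

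For strictness it remains to exhibit one $i$ with $\alpha_i>0$ when $P\ne Q$. Since both are probability vectors with positive entries, $P\neq Q$ forces the ratios $q_j/p_j$ to be non-constant. Take $i$ minimizing $q_i/p_i$ and $j$ with $q_j/p_j>q_i/p_i$. Then $q_j/q_i>p_j/p_i$, so $\alpha_i>0$. For this index the termwise inequality is strict, and because $p_i>0$ the total inequality is strict.

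I expect the main obstacle to be the algebraic identification of the $L_{\mathrm{LML}}$ denominator as $(B+\alpha_i)/p_i$. Once that is in place, the remaining steps are a one-variable inequality.
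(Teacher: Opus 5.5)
Your proof is correct and follows the paper's argument: the same identity $\sum_j \max\{p_j/p_i, q_j/q_i\} = (1+\alpha_i)/p_i$ giving $L_{\mathrm{LML}}=\sum_i Bp_i/(B+\alpha_i)$, a termwise scalar comparison, and strictness from an index with $\alpha_i>0$. The only differences are that the paper proves the scalar inequality by binomially expanding $(1+1/\alpha)^B \ge 1+B/\alpha$ rather than through your Bernoulli substitution, and a small slip on your side: for $B=2$ the Bernoulli step is an equality, so it is not true that both inequalities are strict, though $(1-x^2)^{B-1}<1$ alone suffices for strictness.
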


\begin{proof}
The proof is based on the Poisson matching lemma~\citep{li2021unified}.
Let $Y:=\tilde Z_P(1)$ and $X^{(b)}:=\tilde Z_Q(b)$ for
$b=1,\ldots,B$ be generated from the same mapped Poisson process. By the
mapping theorem, $\tilde Z_Q(1),\tilde Z_Q(2),\ldots$ are i.i.d. according
to $Q$, and $\tilde Z_P(1)\sim P$. Define the token-level acceptance event
\[
    \mathcal A_t
    :=
    \Big\{
        Y\in\{X^{(1)},\ldots,X^{(B)}\}
    \Big\}.
\]
Apply the exact $j=1$ tail formula of the generalized Poisson matching lemma
with the first distribution equal to $P$ and the second distribution equal to
$Q$. Conditioned on $Y=\tilde Z_P(1)=i$, the event $\mathcal A_t^c$ is the
event that the first mapped $P$-sample does not appear among the first $B$
mapped $Q$-samples. Hence
\[
    \mathbf{P}(\mathcal A_t^c| Y=i)
    \le
    \left(
        \frac{\alpha_i}{1+\alpha_i}
    \right)^B,
\]
where
\[
    \alpha_i
    =
    p_i\sum_{j\in\mathcal V}
    \left(
        \frac{q_j}{q_i}-\frac{p_j}{p_i}
    \right)_+ .
\]
Taking complements gives
\[
    \mathbf{P}(\mathcal A_t| Y=i)
    \ge
    1-
    \left(
        \frac{\alpha_i}{1+\alpha_i}
    \right)^B .
\]
Finally, averaging over $Y\sim P$ gives
\[
\begin{aligned}
    \mathbf{P}(\mathcal A_t)
    &=
    \sum_{i\in\mathcal V}
    p_i\mathbf{P}(\mathcal A_t| Y=i)\\
    &\ge
    \sum_{i\in\mathcal V}
    p_i
    \left[
        1-
        \left(
            \frac{\alpha_i}{1+\alpha_i}
        \right)^B
    \right]\\
    &=
    1-\sum_{i\in\mathcal V}
    p_i
    \left(
        \frac{\alpha_i}{1+\alpha_i}
    \right)^B
    =
    L_{\mathrm{PML}}.
\end{aligned}
\]

We now rewrite the LML bound in terms of the same quantities $\alpha_i$.
For each fixed $i\in\mathcal V$,
\begin{align*}
    p_i
    \sum_{j\in\mathcal V}
    \max\left\{
        \frac{p_j}{p_i},
        \frac{q_j}{q_i}
    \right\}
    &=
    \sum_{j\in\mathcal V}
    \max\left\{
        p_j,
        \frac{p_i}{q_i}q_j
    \right\} \\
    &=
    \sum_{j\in\mathcal V}
    \left[
        p_j+
        \left(
            \frac{p_i}{q_i}q_j-p_j
        \right)_+
    \right] \\
    &=
    1+
    p_i\sum_{j\in\mathcal V}
    \left(
        \frac{q_j}{q_i}-\frac{p_j}{p_i}
    \right)_+ \\
    &=
    1+\alpha_i.
\end{align*}
Therefore,
\begin{align*}
    \sum_{j\in\mathcal V}
    \left[
        \max\left\{
            \frac{p_j}{p_i},
            \frac{q_j}{q_i}
        \right\}
        +(B-1)\frac{p_j}{p_i}
    \right]
    &=
    \frac{1+\alpha_i}{p_i}
    +(B-1)\frac{1}{p_i} \\
    &=
    \frac{B+\alpha_i}{p_i}.
\end{align*}
Thus Eq.~(3) of the list matching lemma can be written as
\[
    L_{\mathrm{LML}}
    =
    \sum_{i\in\mathcal V}
    p_i\frac{B}{B+\alpha_i}.
\]

It remains to compare the two expressions term by term. For every
$\alpha\ge 0$, we claim that
\[
    1-\left(\frac{\alpha}{1+\alpha}\right)^B
    \ge
    \frac{B}{B+\alpha}.
\]
If $\alpha=0$, both sides are equal to $1$. If $\alpha>0$, the inequality is
equivalent to
\[
    \left(\frac{\alpha}{1+\alpha}\right)^B
    \le
    \frac{\alpha}{B+\alpha},
\]
or equivalently
\[
    \left(1+\frac{1}{\alpha}\right)^B
    \ge
    1+\frac{B}{\alpha}.
\]
This follows immediately from the binomial theorem:
\[
    \left(1+\frac{1}{\alpha}\right)^B
    =
    1+\frac{B}{\alpha}
    +\sum_{m=2}^B \binom{B}{m}\alpha^{-m}
    \ge
    1+\frac{B}{\alpha}.
\]
Hence, for each $i\in\mathcal V$,
\[
    p_i
    \left[
        1-\left(\frac{\alpha_i}{1+\alpha_i}\right)^B
    \right]
    \ge
    p_i\frac{B}{B+\alpha_i}.
\]
Summing over $i\in\mathcal V$ yields
\[
    L_{\mathrm{PML}}
    \ge
    L_{\mathrm{LML}}.
\]

Finally, suppose $B>1$ and $P\neq Q$. We show that the inequality is strict.
If $\alpha_i=0$ for every $i\in\mathcal V$, then for every $i,j\in\mathcal V$,
\[
    \frac{q_j}{q_i}
    \le
    \frac{p_j}{p_i}.
\]
Equivalently,
\[
    \frac{q_j}{p_j}
    \le
    \frac{q_i}{p_i}.
\]
Since this holds for every pair $(i,j)$, all ratios $q_i/p_i$ must be equal to
a common constant. Because both $P$ and $Q$ are probability distributions, this
constant must be $1$, and hence $P=Q$, a contradiction. Therefore, when
$P\neq Q$, there exists at least one $i$ such that $\alpha_i>0$.

For $B>1$ and $\alpha>0$, the binomial expansion above is strict:
\[
    \left(1+\frac{1}{\alpha}\right)^B
    >
    1+\frac{B}{\alpha}.
\]
Hence
\[
    1-\left(\frac{\alpha}{1+\alpha}\right)^B
    >
    \frac{B}{B+\alpha}.
\]
Applying this to an index $i$ with $\alpha_i>0$ gives a strict term in the
sum. Therefore
\[
    L_{\mathrm{PML}}
    >
    L_{\mathrm{LML}}.
\]
This completes the proof.
\end{proof}

\subsubsection{}

We then show the clean version,  Theorem~\ref{thm::PML_accept}, is already tighter than \citep[Eq. (4)]{rowan2025list}. 
	
	\begin{theorem}
\label{thm::PML_weak_dominates_relaxed_LML}
Let $P$ be the target distribution and $Q$ be the draft distribution on a finite
vocabulary $\mathcal V$. Write
\[
    p_i:=P(i),\qquad q_i:=Q(i),\qquad i\in\mathcal V.
\]
Assume first that $p_i,q_i>0$ for all $i\in\mathcal V$. Define
\[
    L_{\mathrm{PML,weak}}
    :=
    1-\sum_{i\in\mathcal V}
    p_i
    \left(
        \frac{p_i}{p_i+q_i}
    \right)^B
\]
and
\[
    L_{\mathrm{LML,rel}}
    :=
    \sum_{i\in\mathcal V}
    p_i
    \left(
        1+\frac{p_i}{Bq_i}
    \right)^{-1}
    =
    \sum_{i\in\mathcal V}
    p_i
    \frac{Bq_i}{p_i+Bq_i}.
\]
Then
\[
    L_{\mathrm{PML,weak}}
    \ge
    L_{\mathrm{LML,rel}}.
\]
Moreover, if $B>1$, then the inequality is strict.
\end{theorem}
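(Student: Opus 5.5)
Both quantities are written as $\sum_i p_i(\cdot)$, so the plan is to compare them term by term. Since $\sum_i p_i = 1$, we can write
\[
L_{\mathrm{PML,weak}} = \sum_{i\in\mathcal V} p_i\Big[1-\Big(\tfrac{p_i}{p_i+q_i}\Big)^B\Big].
\]
It then suffices to show that for every $i$,
\[
1-\Big(\tfrac{p_i}{p_i+q_i}\Big)^B \ \ge\ \tfrac{Bq_i}{p_i+Bq_i},
\]
with strict inequality whenever $B>1$. Multiplying by $p_i>0$ and summing gives both claims, because strictness of a single term already suffices when every $p_i>0$.

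For the scalar inequality, set $x := q_i/p_i>0$. Then $\frac{p_i}{p_i+q_i} = \frac{1}{1+x}$ and
\[
\tfrac{Bq_i}{p_i+Bq_i} = \tfrac{Bx}{1+Bx} = 1-\tfrac{1}{1+Bx}.
\]
The claim therefore becomes $(1+x)^{-B} \le (1+Bx)^{-1}$, which is equivalent to $(1+x)^B \ge 1+Bx$. This is Bernoulli's inequality. It also follows directly from the binomial expansion, as in the proof of Theorem~\ref{thm::exact_PML_dominates_LML}:
\[
(1+x)^B = 1+Bx+\sum_{m=2}^B\binom{B}{m}x^m.
\]

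For strictness with $B>1$, the sum $\sum_{m\ge2}\binom{B}{m}x^m$ is nonempty and strictly positive because $x>0$. Hence each term is strict, and so is the total.

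No step is a real obstacle. The only care needed is the rewriting of $1-\sum_i p_i(\cdot)^B$ as a $p_i$-weighted sum, and the observation that strict positivity of the $p_i,q_i$ (as assumed) is what makes the strict inequality propagate. If zero entries were allowed, the degenerate terms could be handled separately: when $q_i=0<p_i$, both sides are $0$; when $p_i=0$, the term vanishes.
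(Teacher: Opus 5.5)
Your proof is correct and is the paper's argument: rewrite $L_{\mathrm{PML,weak}}$ as a $p_i$-weighted sum, compare term by term, and reduce to $(1+x)^B \ge 1+Bx$ via the binomial expansion, with strictness for $B>1$ coming from the positive higher-order terms. The paper uses $r_i = p_i/q_i$ and writes the inequality as $(1+1/r)^B \ge 1+B/r$, which is the same thing with $x = 1/r$. Its proof also re-derives that $L_{\mathrm{PML,weak}}$ lower-bounds the acceptance probability, but that step is not needed for the stated inequality.
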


\begin{proof}
The proof is based on the Poisson matching lemma~\citep{li2021unified}.
Let $Y:=\tilde Z_P(1)$ and let
\[
    X^{(b)}:=\tilde Z_Q(b),\qquad b=1,\ldots,B,
\]
be generated from the same mapped Poisson process. Define the token-level
acceptance event
\[
    \mathcal A_t
    :=
    \Big\{
        Y\in\{X^{(1)},\ldots,X^{(B)}\}
    \Big\}.
\]
Applying the relaxed Poisson matching bound with $j=1$ and $k=B$, conditioned
on $Y=\tilde Z_P(1)=i$, gives
\[
    \mathbf{P}(\mathcal A_t^c| Y=i)
    \le
    \left(
        1-\left(1+\frac{p_i}{q_i}\right)^{-1}
    \right)^B
    =
    \left(
        \frac{p_i}{p_i+q_i}
    \right)^B .
\]
Taking complements yields
\[
    \mathbf{P}(\mathcal A_t| Y=i)
    \ge
    1-
    \left(
        \frac{p_i}{p_i+q_i}
    \right)^B .
\]
Averaging over $Y\sim P$ gives
\[
\begin{aligned}
    \mathbf{P}(\mathcal A_t)
    &=
    \sum_{i\in\mathcal V}
    p_i\mathbf{P}(\mathcal A_t| Y=i)\\
    &\ge
    \sum_{i\in\mathcal V}
    p_i
    \left[
        1-
        \left(
            \frac{p_i}{p_i+q_i}
        \right)^B
    \right]\\
    &=
    1-\sum_{i\in\mathcal V}
    p_i
    \left(
        \frac{p_i}{p_i+q_i}
    \right)^B
    =
    L_{\mathrm{PML,weak}}.
\end{aligned}
\]

We now compare this lower bound with the relaxed version of the list matching
lemma. In the notation of the list matching lemma paper, the proposal
distribution is denoted by $p$ and the target distribution by $q$. Since here
we use $P$ for the target and $Q$ for the draft, their relaxed conditional
bound, Eq.~(4), becomes
\[
    \mathbf{P}(\mathcal A_t| Y=i)
    \ge
    \left(
        1+\frac{p_i}{Bq_i}
    \right)^{-1}
    =
    \frac{Bq_i}{p_i+Bq_i}.
\]
Thus the corresponding unconditional relaxed LML bound is
\[
    L_{\mathrm{LML,rel}}
    =
    \sum_{i\in\mathcal V}
    p_i
    \frac{Bq_i}{p_i+Bq_i}.
\]

It remains to compare the two conditional terms. Let
\[
    r_i:=\frac{p_i}{q_i}.
\]
Then
\[
    1-
    \left(
        \frac{p_i}{p_i+q_i}
    \right)^B
    =
    1-
    \left(
        \frac{r_i}{1+r_i}
    \right)^B,
\]
whereas
\[
    \frac{Bq_i}{p_i+Bq_i}
    =
    \frac{B}{B+r_i}.
\]
Hence it suffices to prove that, for every $r>0$,
\[
    1-
    \left(
        \frac{r}{1+r}
    \right)^B
    \ge
    \frac{B}{B+r}.
\]
Equivalently,
\[
    \left(
        \frac{r}{1+r}
    \right)^B
    \le
    \frac{r}{B+r}.
\]
Since $r>0$, this is equivalent to
\[
    \left(
        1+\frac{1}{r}
    \right)^B
    \ge
    1+\frac{B}{r}.
\]
This follows from the binomial theorem:
\[
    \left(
        1+\frac{1}{r}
    \right)^B
    =
    1+\frac{B}{r}
    +
    \sum_{m=2}^B
    \binom{B}{m}r^{-m}
    \ge
    1+\frac{B}{r}.
\]
Therefore, for each $i\in\mathcal V$,
\[
    1-
    \left(
        \frac{p_i}{p_i+q_i}
    \right)^B
    \ge
    \frac{Bq_i}{p_i+Bq_i}.
\]
Multiplying by $p_i$ and summing over $i\in\mathcal V$ gives
\[
    L_{\mathrm{PML,weak}}
    \ge
    L_{\mathrm{LML,rel}}.
\]

Finally, if $B>1$, then for every $r>0$,
\[
    \sum_{m=2}^B
    \binom{B}{m}r^{-m}
    >
    0.
\]
Thus the above inequality is strict for every $i$ with $p_i,q_i>0$. Under full
support, all such terms have positive weight $p_i$, and hence
\[
    L_{\mathrm{PML,weak}}
    >
    L_{\mathrm{LML,rel}}.
\]
This completes the proof.
\end{proof}

\subsection{Generalized Theoretic Guarantee and Proof}
\label{subapp::proof_PML_accept_generalized}

\begin{theorem}
\label{thm:accept_bound_pfr_active}
Consider the exact draft-indexed active-set version of multi-draft speculative sampling
with total draft count $B$.
Fix a current decoding step $t$, and let $S_t\subseteq\{1,\ldots,B\}$ be the active set
before sampling the next token. Write $J_t := |S_t|$. 
Let the current realized context be $c_t$, and similarly define $P_t := \mathcal{M}_t(\cdot | c_t), Q_t := \mathcal{M}_d(\cdot | c_t)$. 
For $i\in\mathcal{V}$, write $p_i^{(t)} := P_t(i)$ and $q_i^{(t)} := Q_t(i)$. 
Then the token-level acceptance probability at step $t$ satisfies
\[
\mathbf{P}(\mathcal{A}_t  | c_t,S_t)
\ge
\sum_{i\in\mathcal{V}}
p_i^{(t)}
\left[
1-\left(1-\left(1+\frac{p_i^{(t)}}{q_i^{(t)}}\right)^{-1}\right)^{J_t}
\right]
=
1-\sum_{i\in\mathcal{V}}
p_i^{(t)}
\left(\frac{p_i^{(t)}}{p_i^{(t)}+q_i^{(t)}}\right)^{J_t}.
\]
In particular, at the first step of each speculative block one has $J_t=B$, so
\[
\mathbf{P}(\mathcal{A}_t  | c_t)
\ge
1-\sum_{i\in\mathcal{V}}
p_i^{(t)}
\left(\frac{p_i^{(t)}}{p_i^{(t)}+q_i^{(t)}}\right)^B.
\]
\end{theorem}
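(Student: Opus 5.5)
The plan is to reduce the bound to the $j=1$ case of the Poisson matching lemma \citep[Lemma 3]{li2021unified}, exactly as in the proof of Theorem~\ref{thm::PML_accept}. The one new ingredient is a conditioning argument. It shows that, given the realized context $c_t$ and the active set $S_t$, the $J_t$ active drafts at step $t$ together with the target token behave like $J_t$ mapped $Q_t$-samples and one mapped $P_t$-sample drawn from a single fresh Poisson process.

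First I fix the step-$t$ randomness. In the draft-indexed active-set version, the context $c_t$ is reached only through tokens emitted at strictly earlier contexts. The active set $S_t$ is likewise determined by the draft tokens and acceptances at those earlier contexts, i.e., by the sources $\Pi(c')$ for strict prefixes $c'$ of $c_t$ and by the earlier mapped samples. The keyed source $\Pi(c_t)$ is assumed independent across distinct contexts, as in Theorem~\ref{thm::batched_direct_topk_MPFR_spec_exactness}. Hence, conditional on $(c_t,S_t)$, $\Pi(c_t)$ is still a Poisson process with intensity $Q_t\times\lambda$.

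Next I use this fresh process. The $J_t$ active drafts take their next tokens as $\tilde Z_{Q_t}(1),\ldots,\tilde Z_{Q_t}(J_t)$, i.e., the first $J_t$ points of the mapped process, and the target token is $Y=\tilde Z_{P_t}(1)$. The conditional distributions of these samples given $(c_t,S_t)$ are then those of Proposition~\ref{prop:ppr_output_dist}.

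The acceptance event $\mathcal A_t$ is $\{Y\in\{\tilde Z_{Q_t}(b)\}_{b\le J_t}\}$. Applying \citep[Lemma 3]{li2021unified} with $j=1$ and $k=J_t$, conditional on $Y=i$, gives
\[
\mathbf P(\mathcal A_t^c\mid Y=i,c_t,S_t)\le\Bigl(1-\bigl(1+p_i^{(t)}/q_i^{(t)}\bigr)^{-1}\Bigr)^{J_t}=\Bigl(\tfrac{p_i^{(t)}}{p_i^{(t)}+q_i^{(t)}}\Bigr)^{J_t}.
\]
I then average over $Y\sim P_t$, which is legitimate because $Y$ has conditional law $P_t$. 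Taking complements yields both displayed forms, which agree algebraically. At the first step of a block the active set is the full set $\{1,\ldots,B\}$, so $J_t=B$, and the root context is deterministic given the prefix, which gives the second claim.

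The main obstacle is the conditioning step. I have to be sure that the event $\{c_t=c,\,S_t=S\}$ is measurable with respect to randomness independent of $\Pi(c)$, so that neither the drafts' indexing nor the emission rule at earlier contexts biases $\Pi(c_t)$. I would argue this by induction on the depth, showing that every quantity used before step $t$ is a function of $\{\Pi(c'):c'\text{ a strict prefix of }c_t\}$ and the model distributions. A secondary point is a case split for tokens with $q_i^{(t)}=0$. There the bound reads $(p_i/p_i)^{J_t}=1$, which gives the trivial bound $\mathbf P(\mathcal A_t\mid Y=i)\ge 0$, consistent with absolute continuity failing.
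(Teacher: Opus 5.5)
Your proposal is correct and follows the paper's proof: both condition on $(c_t,S_t)$, note that the current step is then exactly the setting of Theorem~\ref{thm::PML_accept} with $J_t$ drafts, and apply the $j=1$, $k=J_t$ Poisson matching bound before averaging over $Y\sim P_t$. The one difference is that you justify the conditioning step explicitly, by showing that the context-indexed source $\Pi(c_t)$ is independent of the event $\{c_t=c,\,S_t=S\}$; this relies on the cross-context independence assumption that the paper states only in Theorem~\ref{thm::batched_direct_topk_MPFR_spec_exactness}, while the paper's proof of this theorem simply asserts the reduction.
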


\begin{proof}
Conditioned on the realized context $c_t$ and the active set $S_t$, the exact active-set
algorithm uses only the $J_t$ active draft streams for both target-side selection and acceptance.
Inactive drafts do not participate in the current-step minimum and therefore do not affect the
current-step coupling.

Hence, after conditioning on $(c_t,S_t)$, the current step is identical to the setting of
Theorem~\ref{thm::PML_accept} with the number of drafts equal to $J_t$.
Applying Theorem~\ref{thm::PML_accept} with $B$ replaced by $J_t$ gives
\[
\mathbf{P}(\mathcal{A}_t  | c_t,S_t)
\ge
1-\sum_{i\in\mathcal{V}}
p_i^{(t)}
\left(\frac{p_i^{(t)}}{p_i^{(t)}+q_i^{(t)}}\right)^{J_t}.
\]
At the first step of a speculative block, all drafts are active, so $J_t=B$, which yields
the final claim.
\end{proof}

\subsection{Extension of Theorem \ref{thm::PML_accept_exact} to non-identically distributed proposals}
\label{app::non_id_prop}

\begin{proposition}
\label{prop::heterogeneous_PML_bound}
Let $P$ be the target distribution on a finite vocabulary $\mathcal V$, and let
$Q_1,\ldots,Q_B$ be $B$ proposal distributions. Write
\[
    p_i:=P(i),\qquad q_{b,i}:=Q_b(i),
    \qquad i\in\mathcal V,\ b\in[B].
\]
Assume first that $p_i,q_{b,i}>0$ for all $i$ and $b$. For each $i\in\mathcal V$
and $b\in[B]$, define
\[
    \alpha_i^{(b)}
    :=
    p_i\sum_{j\in\mathcal V}
    \left(
        \frac{q_{b,j}}{q_{b,i}}
        -
        \frac{p_j}{p_i}
    \right)_+ .
\]
Then there exists a common-randomness coupling such that $Y\sim P$,
$X^{(b)}\sim Q_b$ for each $b\in[B]$, the proposals
$X^{(1)},\ldots,X^{(B)}$ are mutually independent, and
\begin{equation}
\label{eq::heterogeneous_PML_conditional}
    \Pr\!\left[
        Y\in\{X^{(1)},\ldots,X^{(B)}\}
        \,\middle|\,Y=i
    \right]
    \ge
    \sum_{b=1}^B
    \frac{1}{B+\alpha_i^{(b)}} .
\end{equation}
Consequently,
\begin{equation}
\label{eq::heterogeneous_PML_unconditional}
    \Pr\!\left[
        Y\in\{X^{(1)},\ldots,X^{(B)}\}
    \right]
    \ge
    \sum_{i\in\mathcal V}
    p_i
    \sum_{b=1}^B
    \frac{1}{B+\alpha_i^{(b)}} .
\end{equation}
\end{proposition}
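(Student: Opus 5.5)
The plan is to build the coupling from $B$ independent Poisson processes, one per proposal, and to merge them on the target side through the mapping theorem. Concretely, let $\Pi_1,\ldots,\Pi_B$ be independent Poisson processes on $\mathcal V\times\mathbb R_{\ge 0}$, where $\Pi_b=\{(Z^{b}_k,T^{b}_k)\}_k$ has intensity $Q_b\times\lambda$.

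The draft side is defined first. Set $X^{(b)}:=Z^{b}_{k^\star_b}$, where $k^\star_b$ is the index of the smallest $T^{b}_k$. This is PFR with $P=Q_b$ (Definition~\ref{def::PFR}), so $X^{(b)}\sim Q_b$. Because the $\Pi_b$ are independent, $X^{(1)},\ldots,X^{(B)}$ are mutually independent.

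The target side uses rescaled times. Assign every point of $\Pi_b$ the target score
\[
S^{b}_k := T^{b}_k\,\frac{B\,q_{b,Z^b_k}}{p_{Z^b_k}} .
\]
By the computation in the proof of correctness of Algorithm~\ref{alg:B_sample_finite_PFR}, the mapped process $\{(Z^b_k,S^b_k)\}_k$ has intensity $\tfrac1B P\times\lambda$. By the superposition theorem, the union over $b$ is a Poisson process with intensity $P\times\lambda$. Set $Y$ to be the mark of the overall smallest score. Then $Y\sim P$ exactly, as in Proposition~\ref{prop:ppr_output_dist}.

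For the conditional bound~\eqref{eq::heterogeneous_PML_conditional}, it suffices to lower bound a disjoint union of events. Let $E_b$ be the event that the overall target winner comes from $\Pi_b$ and is also the first-arrival point of $\Pi_b$. On $E_b$ we have $Y=X^{(b)}$, and the $E_b$ are disjoint. Hence
\[
\Pr[\mathrm{accept}\mid Y=i]\ge\sum_b\Pr[E_b\mid Y=i],
\]
and I will show $\Pr[E_b\mid Y=i]\ge 1/(B+\alpha_i^{(b)})$.

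To compute $\Pr[E_b\mid Y=i]$, condition on the winning point being $(i,s)$ in stream $b$, with target score $s$. The density of the winner being in stream $b$ with mark $i$ at score $s$ is $\tfrac1B p_i e^{-s}\,ds$. Given this point, $E_b$ requires that no other point of $\Pi_b$ has time smaller than $t=s\,p_i/(Bq_{b,i})$. The winner condition already forbids points of $\Pi_b$ with target score below $s$. So the extra forbidden region for $\Pi_b$ has $\lambda\times Q_b$-measure
\[
\sum_j\Big(q_{b,j}t-\frac{s\,p_j}{B}\Big)_+=\frac{s}{B}\sum_j\Big(\frac{p_i q_{b,j}}{q_{b,i}}-p_j\Big)_+=\frac{s\,\alpha_i^{(b)}}{B}.
\]
Using the Poisson void probability for this region, combined with $e^{-s}$ for the winner condition, gives
\[
\Pr[E_b,\,Y=i]=\int_0^\infty \frac{p_i}{B}\,e^{-s(1+\alpha_i^{(b)}/B)}\,ds=\frac{p_i}{B+\alpha_i^{(b)}} .
\]
Dividing by $\Pr[Y=i]=p_i$ yields exactly $1/(B+\alpha_i^{(b)})$. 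Summing over $b$ gives~\eqref{eq::heterogeneous_PML_conditional}. Averaging over $Y\sim P$ gives~\eqref{eq::heterogeneous_PML_unconditional}.

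The main obstacle is this conditional void computation. I must make sure that the winner event ($e^{-s}$, covering all streams at target scale $P/B$) and the extra draft-side constraint in stream $b$ are counted over disjoint regions, so that the intensities add rather than double count. The identity $\sum_j\max\{p_j,\,p_iq_{b,j}/q_{b,i}\}=1+\alpha_i^{(b)}$ from the proof of Theorem~\ref{thm::exact_PML_dominates_LML} makes the bookkeeping clean: the union of the two forbidden regions in stream $b$ has measure $\tfrac{s}{B}(1+\alpha_i^{(b)})$, and each other stream contributes $\tfrac{s}{B}$. The total exponent is therefore $s\big(1+\alpha_i^{(b)}/B\big)$, which matches the integral above.
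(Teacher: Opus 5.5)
Your proposal is correct and is essentially the paper's argument. Your $B$ independent streams $\Pi_b$ are the paper's single Poisson process on the augmented space $\mathcal V\times[B]$ split into blocks, with augmented target $\widetilde P(i,b)=p_i/B$, and your disjoint events $E_b$ are the paper's $\mathcal A_b=\{(Y,J)=(X^{(b)},b)\}$. The one difference is that you compute $\Pr[E_b,\,Y=i]=p_i/(B+\alpha_i^{(b)})$ directly from the Poisson void probability, and your region bookkeeping is right. The paper instead gets the same per-block factor $B/(B+\alpha_i^{(b)})$ by citing the exact $j=1$ tail formula of the generalized Poisson matching lemma with $\widetilde\alpha_i^{(b)}=\alpha_i^{(b)}/B$, then weights by $\Pr[J=b\mid Y=i]=1/B$; your version is self-contained and shows the per-block bound holds with equality.
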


\begin{proof}
The proof is based on the Poisson matching lemma~\citep{li2021unified}. 
The main difference from the identical-proposal case is that there is no single
proposal ordering when the proposal laws are $Q_1,\ldots,Q_B$. We therefore
augment the sample space by a block label. Define
\[
    \widetilde{\mathcal V}:=\mathcal V\times[B],
\]
and introduce the augmented target and proposal distributions
\[
    \widetilde P(i,b):=\frac{p_i}{B},
    \qquad
    \widetilde Q_b(i,\ell):=\mathbf 1\{\ell=b\}q_{b,i}.
\]
Let
\[
    (Y,J):=\tilde Z_{\widetilde P}(1),
    \qquad
    (X^{(b)},b):=\tilde Z_{\widetilde Q_b}(1),
    \quad b=1,\ldots,B,
\]
be generated from the same underlying Poisson process on
$\widetilde{\mathcal V}\times\mathbb R_{\ge 0}$. By the mapping theorem,
$(Y,J)\sim\widetilde P$, and hence $Y\sim P$ and $J$ is uniform on $[B]$ and
independent of $Y$. Also, for each $b$, $(X^{(b)},b)\sim\widetilde Q_b$, so
$X^{(b)}\sim Q_b$. Since the laws $\widetilde Q_b$ are supported on disjoint
blocks, the corresponding proposal samples are mutually independent.

Fix $i\in\mathcal V$. Define the acceptance event
\[
    \mathcal A
    :=
    \left\{
        Y\in\{X^{(1)},\ldots,X^{(B)}\}
    \right\}.
\]
For each $b\in[B]$, let
\[
    \mathcal A_b
    :=
    \left\{
        (Y,J)=(X^{(b)},b)
    \right\}.
\]
Then $\mathcal A\supseteq \bigcup_{b=1}^B\mathcal A_b$, and the events
$\mathcal A_1,\ldots,\mathcal A_B$ are pairwise disjoint because they occur in
different blocks. Therefore,
\begin{align}
    \mathbf{P}(\mathcal A| Y=i)
    &\ge
    \sum_{b=1}^B
    \mathbf{P}(\mathcal A_b| Y=i) \notag\\
    &=
    \sum_{b=1}^B
    \mathbf{P}(J=b| Y=i)
    \mathbf{P}(\mathcal A_b| Y=i,J=b) \notag\\
    &=
    \frac{1}{B}
    \sum_{b=1}^B
    \Pr\!\left[
        \tilde Z_{\widetilde P}(1)=\tilde Z_{\widetilde Q_b}(1)
        \,\middle|\,
        \tilde Z_{\widetilde P}(1)=(i,b)
    \right].
\label{eq::heterogeneous_reduction}
\end{align}

We now apply the exact $j=1$ tail formula in the generalized Poisson matching
lemma to the pair $(\widetilde P,\widetilde Q_b)$. Conditioned on
$\tilde Z_{\widetilde P}(1)=(i,b)$, the local PML parameter is
\[
\begin{aligned}
    \widetilde\alpha_i^{(b)}
    &:=
    \widetilde P(i,b)
    \sum_{\ell=1}^B
    \sum_{j\in\mathcal V}
    \left(
        \frac{\widetilde Q_b(j,\ell)}{\widetilde Q_b(i,b)}
        -
        \frac{\widetilde P(j,\ell)}{\widetilde P(i,b)}
    \right)_+  \\
    &=
    \frac{p_i}{B}
    \sum_{j\in\mathcal V}
    \left(
        \frac{q_{b,j}}{q_{b,i}}
        -
        \frac{p_j}{p_i}
    \right)_+  \\
    &=
    \frac{\alpha_i^{(b)}}{B}.
\end{aligned}
\]
Hence
\[
    \Pr\!\left[
        \tilde Z_{\widetilde P}(1)=\tilde Z_{\widetilde Q_b}(1)
        \,\middle|\,
        \tilde Z_{\widetilde P}(1)=(i,b)
    \right]
    \ge
    1-\frac{\widetilde\alpha_i^{(b)}}{1+\widetilde\alpha_i^{(b)}}
    =
    \frac{B}{B+\alpha_i^{(b)}}.
\]
Substituting this into~\eqref{eq::heterogeneous_reduction} gives
\[
    \mathbf{P}(\mathcal A| Y=i)
    \ge
    \frac1B
    \sum_{b=1}^B
    \frac{B}{B+\alpha_i^{(b)}}
    =
    \sum_{b=1}^B
    \frac{1}{B+\alpha_i^{(b)}}.
\]
This proves~\eqref{eq::heterogeneous_PML_conditional}. Averaging over
$Y\sim P$ gives~\eqref{eq::heterogeneous_PML_unconditional}.
\end{proof}

\section{Drafter Invariance}

\subsection{Definitions}

For a speculative sampling algorithm $\mathcal{A}$ with target model $\mathcal{M}_t$, draft model $\mathcal{M}_d$, shared randomness $\mathcal{R}$ and context $x_{:t}$, let $Y_{1:\tau}$ denote the emitted output sequence of $\mathcal{A}$ in the current speculative block.

\begin{definition}[\citep{daliri2025coupling}]
\label{def:strong_drafter_invariance_single}
The algorithm $\mathcal{A}$ is said to be \emph{strongly drafter invariant} if for every $1\le j\le \tau$,
\begin{equation*}
\mathbf{P}\big\{
Y_{1:j}=y_{1:j}\,\big|\,\mathcal{R},\,x_{:t}
\big\} =
\mathbf{P}\big\{
\widetilde{Y}_{1:j}=y_{1:j}\,\Big|\,\mathcal{R},\,x_{:t}
\big\}. 
\end{equation*}
\end{definition}

\begin{definition}[\citep{rowan2025list}]
\label{def:conditional_drafter_invariance_single}
The algorithm $\mathcal{A}$ is said to be \emph{conditional drafter invariant} if for every $1\le j\le \tau$, letting $X_{1:L} := X_{1:L}(\mathcal{M}_d)$ denote the length-$L$ draft sequence by $\mathcal{M}_d$, we have
\begin{equation*}
\mathbf{P}\big\{
Y_{1:j}=y_{1:j}\,\big|\,\mathcal{R},\,x_{:t},\,X_{1:L}=x_{1:L}
\big\}  =
\mathbf{P}\big\{
\widetilde{Y}_{1:j}=y_{1:j}\,\big|\,\mathcal{R},\,x_{:t},\,\widetilde{X}_{1:L}=x_{1:L}
\big\}. 
\end{equation*}
\end{definition}

\subsubsection{Proof of Proposition~\ref{prop::single_dft_st_tm_inv}}
\label{subsubapp::proof_single_dft_st_tm_inv}

We rephrase Proposition~\ref{prop::single_dft_st_tm_inv} to a more detailed version as follows.

\begin{proposition}[Stopping-time drafter invariance]
\label{prop:single_pfr_stopping_time_invariance}
Let $\tau$ denote the number of tokens emitted in one speculative block of
Algorithm~\ref{alg:single_PFR_Spec_Water_simplified}. 
Fix the initial prompt $h:=w_{1:n}$ and the shared randomness
\[
\mathcal{R}:=\{\,G(\mathsf{k},c): c \text{ is any context}\,\}.
\]
Define the \emph{target-side PFR chain} recursively by
\[
Z_1:=\textsc{PFR}\bigl(P(\cdot | h),\,G(\mathsf{k},h)\bigr),
\]
and for $s\ge 2$,
\[
Z_s:=\textsc{PFR}\bigl(P(\cdot | h\|Z_{1:s-1}),\,G(\mathsf{k},\,h\|Z_{1:s-1})\bigr).
\]
Then, for any stopping-time value $\tau_0$ and every $1\le j\le \tau_0$,
\[
Y_{1:j}=Z_{1:j}
\qquad\text{almost surely on the event }\{\tau=\tau_0\}.
\]
Equivalently, for every output prefix $y_{1:j}$,
\[
\mathbf{P}\bigl\{
Y_{1:j}=y_{1:j}\,\big|\,\mathcal{R},\,h,\,\tau=\tau_0
\bigr\}
=
\mathds{1}\{Z_{1:j}=y_{1:j}\}.
\]
In particular, for any two draft models $Q$ and $\widetilde Q$,
\[
\mathbf{P}\bigl\{
Y_{1:j}=y_{1:j}\,\big|\,\mathcal{R},\,h,\,\tau=\tau_0
\bigr\}
=
\mathbf{P}\bigl\{
\widetilde Y_{1:j}=y_{1:j}\,\big|\,\mathcal{R},\,h,\,\widetilde\tau=\tau_0
\bigr\},
\]
so Algorithm~\ref{alg:single_PFR_Spec_Water_simplified} is stopping-time drafter invariant in the sense of Definition~\ref{def:st_tm_drafter_invariance_single}.
\end{proposition}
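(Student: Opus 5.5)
The plan is a pathwise induction on the position $s$ inside the block, showing that the emitted token always coincides with the target-side PFR chain $Z_s$. Everything is conditioned on $\mathcal{R}$ and $h$, and Algorithm~\ref{alg:single_PFR_Spec_Water_simplified} is deterministic given these. The shared randomness fixes every keyed source $G(\mathsf{k},c)$, so the drafts $\tilde w_{1:L}$, the target samples $y_s$, and hence $\tau$ are all deterministic functions of $(\mathcal{R},h,Q)$. The conditional probability in the statement is therefore an indicator. It thus suffices to prove the almost-sure identity $Y_{1:j}=Z_{1:j}$ on $\{\tau=\tau_0\}$ for all $j\le\tau_0$; the displayed equalities of conditional probabilities then follow.

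The induction claim is: for every $s\le\tau$, the context $c_s=h\|\tilde w_{1:s-1}$ used at step $s$ equals $h\|Z_{1:s-1}$, and $Y_s=y_s=Z_s$.

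For $s=1$, $c_1=h$ and $\Pi_1=G(\mathsf{k},h)$, so $Y_1=\textsc{PFR}(P(\cdot|h),G(\mathsf{k},h))=Z_1$ by definition.

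For the step from $s$ to $s+1$ with $s+1\le\tau$, the algorithm did not break at step $s$. Either $s<\min(L,\tau)$ and $a_s=1$, or $s=L$ and we are in the bonus step.
\begin{itemize}
\item In the non-bonus case, $a_s=1$ means $\tilde w_s=y_s=Z_s$ by the induction hypothesis. Hence $c_{s+1}=c_s\|\tilde w_s=h\|Z_{1:s}$, $\Pi_{s+1}=G(\mathsf{k},h\|Z_{1:s})$, and so $Y_{s+1}=\textsc{PFR}(P(\cdot|h\|Z_{1:s}),G(\mathsf{k},h\|Z_{1:s}))=Z_{s+1}$.
\item In the bonus case, all $L$ drafts were accepted. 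The prefix $w_{1:n}$ at that moment is $h\|y_{1:L}=h\|Z_{1:L}$, and $\Pi^\star=G(\mathsf{k},h\|Z_{1:L})$, so the bonus token is again $Z_{L+1}$.
\end{itemize}
The rejection token at $s=\tau$ is covered as well. It is $y_\tau=\textsc{PFR}(P(\cdot|c_\tau),\Pi_\tau)$ with $c_\tau=h\|Z_{1:\tau-1}$, which is $Z_\tau$ regardless of the value of $\tilde w_\tau$.

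The key point, and the only real subtlety, is that the draft model $Q$ enters only through the acceptance bits $a_s$. Given acceptance, the draft token equals the target token, so every context fed to $G$ and to $P$ is a function of $Z$ alone. I would state this explicitly, together with the edge case where the loop is truncated by $N-n$. With $Y_{1:\tau}=Z_{1:\tau}$ established pathwise, for $j\le\tau_0$ on $\{\tau=\tau_0\}$ we get $\mathbf{P}\{Y_{1:j}=y_{1:j}|\mathcal{R},h,\tau=\tau_0\}=\mathds{1}\{Z_{1:j}=y_{1:j}\}$.

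The chain $Z$ is defined without reference to $Q$. Applying the same argument to a second drafter $\widetilde Q$ gives the same indicator on $\{\widetilde\tau=\tau_0\}$, which proves Definition~\ref{def:st_tm_drafter_invariance_single}. The main obstacle is formal rather than technical: the conditioning event must be nonempty for both drafters, i.e., $\tau_0$ must be in the support for each. On that event the conditional probability reduces to the deterministic indicator, and I would make this reduction explicit.
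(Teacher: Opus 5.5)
Your proposal is correct and follows the paper's proof. Both arguments are a pathwise induction on $s$: on $\{\tau\ge s\}$, acceptance forces $\tilde w_{1:s-1}=Y_{1:s-1}=Z_{1:s-1}$, so the context and keyed source used are $h\|Z_{1:s-1}$ and $G(\mathsf{k},h\|Z_{1:s-1})$, which gives $Y_s=Z_s$ for any drafter. Your explicit handling of the bonus step, and your remark that given $\mathcal{R}$ the event $\{\tau=\tau_0\}$ is deterministic and must hold for both drafters for the comparison to be meaningful, are points the paper's proof leaves implicit.
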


\begin{proof}
Fix $(\mathcal{R},h)$ and a stopping-time value $\tau_0$.
We first show by induction on $s$ that, on the event $\{\tau\ge s\}$,
\[
Y_s=Z_s.
\]

For $s=1$, the first context is
\[
c_1=h.
\]
Hence
\[
Y_1
=
\textsc{PFR}\bigl(P(\cdot | c_1),\,G(\mathsf{k},c_1)\bigr)
=
\textsc{PFR}\bigl(P(\cdot | h),\,G(\mathsf{k},h)\bigr)
=
Z_1.
\]

Now suppose $s\ge 2$ and that $Y_r=Z_r$ holds on $\{\tau\ge r\}$ for all
$r=1,\ldots,s-1$. On the event $\{\tau\ge s\}$, the block has reached step $s$,
so the first $s-1$ draft tokens were accepted. Therefore
\[
\tilde w_{1:s-1}=Y_{1:s-1}.
\]
By the induction hypothesis, on $\{\tau\ge s\}$ we also have
\[
Y_{1:s-1}=Z_{1:s-1},
\]
and hence
\[
c_s
=
h\|\tilde w_{1:s-1}
=
h\|Y_{1:s-1}
=
h\|Z_{1:s-1}.
\]
Thus,
\[
Y_s
=
\textsc{PFR}\bigl(P(\cdot | c_s),\,G(\mathsf{k},c_s)\bigr)
=
\textsc{PFR}\bigl(P(\cdot | h\|Z_{1:s-1}),\,G(\mathsf{k},\,h\|Z_{1:s-1})\bigr)
=
Z_s.
\]
This completes the induction.

Now let $1\le j\le \tau_0$. On the event $\{\tau=\tau_0\}$ we have
$\tau\ge j$, so the above induction gives
\[
Y_{1:j}=Z_{1:j}
\qquad\text{almost surely on }\{\tau=\tau_0\}.
\]
Therefore,
\[
\mathbf{P}\bigl\{
Y_{1:j}=y_{1:j}\,\big|\,\mathcal{R},\,h,\,\tau=\tau_0
\bigr\}
=
\mathds{1}\{Z_{1:j}=y_{1:j}\}.
\]
Since the right-hand side depends only on $(\mathcal{R},h,\tau_0)$ and not on
the draft model, the same conditional law holds for any other draft model
$\widetilde Q$ conditioned on $\widetilde\tau=\tau_0$. This is exactly
stopping-time drafter invariance.
\end{proof}

\subsection{Multi-Draft Drafter invariance}
\label{subapp::multi_draft_invariance}

For multi-draft speculative sampling scheme $\mathcal{A}$ with target model $\mathcal{M}_t$, draft models $\mathcal{M}_d^{(1)},\ldots,\mathcal{M}_d^{(B)}$, context $x_{:t}$, shared randomness $\mathcal{R}$ and $b\in[B]$, let $X_{1:L}^{(b)} := X_{1:L}(\mathcal{M}_d^{(b)})$ denote the length-$L$ draft sequence by $\mathcal{M}_d^{(b)}$ and $Y_{1:\tau}$ denote the emitted output sequence of $\mathcal{A}$ in the current speculative block.

\begin{definition}[Strong drafter invariance]
\label{def:drafter_invariance}
The algorithm $\mathcal{A}$ is \emph{drafter invariant} if for every $1\le j\le \tau$,
every initial context $x_{:t}$, every realization of the shared randomness $\mathcal{R}$,
and every output prefix $y_{1:j}$,
\begin{align*}
&\mathbf{P}\Bigl\{
Y_{1:j}=y_{1:j}\,\Big|\,\mathcal{R},\,x_{:t}
\Bigr\} =
\mathbf{P}\Bigl\{
\widetilde{Y}_{1:j}=y_{1:j}\,\Big|\,\mathcal{R},\,x_{:t}
\Bigr\},
\end{align*}
for any two choices of draft models
\[
\mathcal{M}_d^{(1)},\ldots,\mathcal{M}_d^{(B)}
\quad\text{and}\quad
\widetilde{\mathcal{M}}_d^{(1)},\ldots,\widetilde{\mathcal{M}}_d^{(B)}. 
\]
\end{definition}

\begin{definition}[Conditional drafter invariance]
\label{def:conditional_drafter_invariance}
We say that the algorithm is \emph{conditionally drafter invariant} if, for every $1\le j\le \tau$,
every initial context $x_{:t}$, every realization of the shared randomness $\mathcal{R}$,
every draft sequences $x_{1:L}^{(1)},\ldots,x_{1:L}^{(B)}$, and every output prefix $y_{1:j}$,
\begin{align*}
&\mathbf{P}\Bigl\{
Y_{1:j}=y_{1:j}\,\Big|\,
\mathcal{R},\,x_{:t},\,
X_{1:L}^{(1)}=x_{1:L}^{(1)},\ldots,
X_{1:L}^{(B)}=x_{1:L}^{(B)}
\Bigr\} \\
&=
\mathbf{P}\Bigl\{
\widetilde{Y}_{1:j}=y_{1:j}\,\Big|\,
\mathcal{R},\,x_{:t},\,
\widetilde{X}_{1:L}^{(1)}=x_{1:L}^{(1)},\ldots,
\widetilde{X}_{1:L}^{(B)}=x_{1:L}^{(B)}
\Bigr\},
\end{align*}
for any two choices of draft models
\[
\mathcal{M}_d^{(1)},\ldots,\mathcal{M}_d^{(B)}
\quad\text{and}\quad
\widetilde{\mathcal{M}}_d^{(1)},\ldots,\widetilde{\mathcal{M}}_d^{(B)}.
\]
That is, conditioned on the shared randomness, the initial context, and the realized draft sequences, the conditional law of the emitted output prefix does not depend on which draft models generated those sequences.
\end{definition}

\begin{definition}[stopping time drafter invariance]
\label{def:block_length_drafter_invariance}
Consider a speculative sampling algorithm with target model $\mathcal{M}_t$,
draft models $\mathcal{M}_d^{(1)},\ldots,\mathcal{M}_d^{(B)}$, initial context $x_{:t}$,
shared randomness $\mathcal{R}$, and block output $Y_{1:\tau}$, where $\tau$ is the number
of emitted tokens in the current speculative block.

We say that the algorithm is \emph{stopping time drafter invariant} if, for every $1\le j\le \tau$,
every initial context $x_{:t}$, every realization of the shared randomness $\mathcal{R}$,
every block length value $\tau_0$, and every output prefix $y_{1:j}$,
\begin{align*}
&\mathbf{P}\Bigl\{
Y_{1:j}=y_{1:j}\,\Big|\,\mathcal{R},\,x_{:t},\,\tau=\tau_0
\Bigr\} \\
&=
\mathbf{P}\Bigl\{
\widetilde{Y}_{1:j}=y_{1:j}\,\Big|\,\mathcal{R},\,x_{:t},\,\widetilde{\tau}=\tau_0
\Bigr\},
\end{align*}
for any two choices of draft models
\[
\mathcal{M}_d^{(1)},\ldots,\mathcal{M}_d^{(B)}
\quad\text{and}\quad
\widetilde{\mathcal{M}}_d^{(1)},\ldots,\widetilde{\mathcal{M}}_d^{(B)}.
\]
That is, conditioned on the shared randomness, the initial context, and the realized block length,
the conditional law of the emitted output prefix does not depend on the draft models.
\end{definition}

\section{Experiment Details and Full Results}
\label{app::experiments}

This appendix expands the main-text evaluation along the two axes that mirror
the contributions of Section~\ref{sec::intro}:
(i) the \emph{single-draft regime} ($B{=}1$), where we certify that our keyed
Poisson sampler preserves acceptance and quality while carrying watermark
signal; and
(ii) the \emph{multi-draft regime} ($B{>}1$), where we extend the
unbiasedness/strength guarantees to larger $B$ and benchmark sampling
efficiency against the list-coupling scheme of~\citet{rowan2025list}.

\subsection{Setup}
\label{app:setup}

\paragraph{Models.}
We use two target/drafter pairs that span the two model families used in the
prior watermark and list-coupling literature.
The headline configuration follows~\citet{rowan2025list}: target
\texttt{Qwen2.5-7B-Instruct} with drafter \texttt{Qwen2.5-0.5B-Instruct}.
For instruction-following diagnostics we additionally use
\texttt{lmsys/vicuna-7b-v1.5} paired with the small same-family drafter
\texttt{double7/vicuna-68m}; the FastChat v1.1 chat template is injected
explicitly because the released checkpoint does not ship a
\texttt{chat\_template} attribute.
All models run in \texttt{float16} on a single NVIDIA H100 GPU.

\paragraph{Datasets.}
We evaluate on two tasks chosen to match the coverage of prior work:
\textsc{CNN/DailyMail} (summarization, prompted with the
\texttt{System:/INPUT:/OUTPUT:} template of~\citet{hu2024inevitable}) and
\textsc{ELI5} (open-ended explanation,~\citet{fan2019eli5}).
Each cell uses $n{=}1000$ prompts.

\paragraph{Sampling parameters.}
Unless stated otherwise we use $\mathrm{top}_k{=}50$, $\mathrm{top}_p{=}1.0$,
temperature $T{=}1.0$, and \texttt{max\_new\_tokens}${=}128$.
The watermark key is fixed across all runs so that detection is deterministic.

\paragraph{Sweeps.}
For single-draft we sweep lookahead $L\in\{1,2,3,4\}$, matching
\citet{hu2024inevitable}.
For multi-draft we fix $L{=}4$ and sweep draft counts
$B\in\{2,4,6,8\}$, matching~\citet{rowan2025list}.
Each $(\textsc{model},\textsc{dataset},\textsc{decoder},L,B)$ configuration is
run with multiple independent seeds; tables in
Appendices~\ref{app:single_draft_full}--\ref{app:multi_draft_full} report
$\mathrm{mean}\pm\mathrm{std}$ across the $1000$ prompts of the cell.

\paragraph{Baselines.}
We adopt the four-family taxonomy used in the main text
(Section~\ref{sec::experiments}). For convenience we restate the role of each
baseline:
\begin{itemize}
  \item \textbf{Quality anchor} --- \textsc{Basic-UWM}: autoregressive
    Gumbel-style unbiased watermarking with no speculation. Provides the
    LPPL reference and the watermark-strength ceiling.
  \item \textbf{Speculation-only references} --- \textsc{VSpS}
    \citep{leviathan2023fast} at $B{=}1$, and the
    \textsc{Invariant} (\emph{InvariantMultiDraftStrategy}) of
    \citet{rowan2025list} at $B{>}1$. These set the efficiency ceiling.
  \item \textbf{Trade-off frontier} --- \textsc{MWS} and \textsc{MSE}
    \citep{hu2024inevitable}, plus \textsc{MSE-Pseudo}, the dual-key
    pseudorandom-$r$ variant of~\citet{he2026improving}. These mark the
    Pareto frontier we aim to circumvent.
  \item \textbf{Ours} --- \textsc{PFR} (single-draft), \textsc{MPFR}
    (multi-draft), and the unwatermarked variant \textsc{PFR-NoWM}.
\end{itemize}

\paragraph{Metrics.}
We restate the three metric families used in the main text.
\emph{Efficiency.} \textbf{AATPS} (average accepted tokens per step) and
\textbf{Token rate} in tokens/s. The latter is hardware-dependent and is
reported only as a diagnostic.
\emph{Detectability.} \textbf{ANLPPT} --- the average negative log $p$-value
per token --- under three score variants:
\textbf{ANLPPT-U}~\citep{aaronson2023watermarking, hu2024inevitable},
\textbf{ANLPPT-Li}~\citep{li2025statistical}, and
\textbf{ANLPPT-PL}~\citep{lattimore2026refined}.
We additionally report \textbf{TPR@1\%FPR} based on the Aaronson Gamma-tail
test (Appendix~\ref{app:ablation}, ``Detection statistic'').
\emph{Quality.} \textbf{LPPL} (per-token log-perplexity under the target),
reported as an audit column. We require LPPL parity with \textsc{Basic-UWM}
to certify unbiasedness, and we do not optimise against this metric.

\paragraph{Reproducibility.}
The exact $(\mathrm{seed},\textsc{model},\textsc{dataset},\textsc{decoder},L,B)$
configurations, software versions
(\texttt{torch}, \texttt{transformers}, attention kernel), GPU model, and
per-seed result files are released with the paper.

\subsection{Prompt templates}
\label{app:prompts}

For all instruction-tuned targets we apply the model's native chat template
via the HuggingFace \texttt{apply\_chat\_template} interface. The
Vicuna-v1.5 release ships without a \texttt{chat\_template} attribute, so we
explicitly inject the FastChat v1.1 template before tokenization.

\paragraph{CNN/DailyMail (summarization).}
\begin{verbatim}
[
  {"role": "system",
   "content": "You are a helpful assistant."},
  {"role": "user",
   "content": "Summarize the following article in 3-5 sentences:\n\n"
              + article[:1500]}
]
\end{verbatim}
The article is truncated to 1500 characters before tokenization.

\paragraph{ELI5 (open-ended QA).}
\begin{verbatim}
[
  {"role": "system",
   "content": "You are a helpful assistant."},
  {"role": "user",
   "content": "Please explain like I'm five:\n\n" + question}
]
\end{verbatim}

For each prompt the model generates up to $128$ new tokens under the sampling
parameters of Appendix~\ref{app:setup}. The watermark key is fixed across all
runs.

\subsection{Single-draft experiments: full results}
\label{app:single_draft_full}

\paragraph{Goal.}
\citet{hu2024inevitable} prove that, under speculative sampling, any unbiased
reweighting scheme must give up either watermark strength (\textsc{MSE}-style)
or acceptance rate (\textsc{MWS}-style); their Theorem~3.1 forbids schemes
that strictly improve both axes simultaneously.
We claim that our Poisson-process construction sidesteps the trade-off. 

\paragraph{Protocol.}
We follow~\citet{hu2024inevitable}: the same metric battery (AATPS, ANLPPT,
LPPL) and the same lookahead range $L\in\{1,2,3,4\}$. We evaluate on
\texttt{Qwen2.5-7B-Instruct}/\texttt{0.5B} and on
\texttt{vicuna-7b-v1.5}/\texttt{vicuna-68m}, on both \textsc{CNN/DailyMail}
and \textsc{ELI5}. On top of the Hu--Huang decoder set we add (i) our
\textsc{PFR} sampler, (ii) the unwatermarked variant \textsc{PFR-NoWM},
(iii) \textsc{Basic-UWM} as the unbiasedness anchor, and (iv) the
pseudorandom-$r$ variant \textsc{MSE-Pseudo} of~\citet{he2026improving}.

\paragraph{Headline figure.}
Figure~\ref{fig::single_draft_qwen_cnn_new} (in
Section~\ref{sec:SingleVersion}) plots every method as a curve in the
$(\mathrm{AATPS},\mathrm{ANLPPT}\text{-}U)$ plane sweeping $L$ on the headline
cell. Figures~\ref{fig:single_draft_qwen_eli5}--\ref{fig:single_draft_vicuna_eli5}
report the same plot on the remaining three cells. \textsc{PFR} dominates
the \textsc{MWS}/\textsc{MSE} Pareto frontier on every cell: at every $L$ it
sits both higher (greater detection signal) and to the right (greater
acceptance) than the trade-off baselines.

\begin{figure}[htpb]
    \centering
    \includegraphics[scale=0.35]{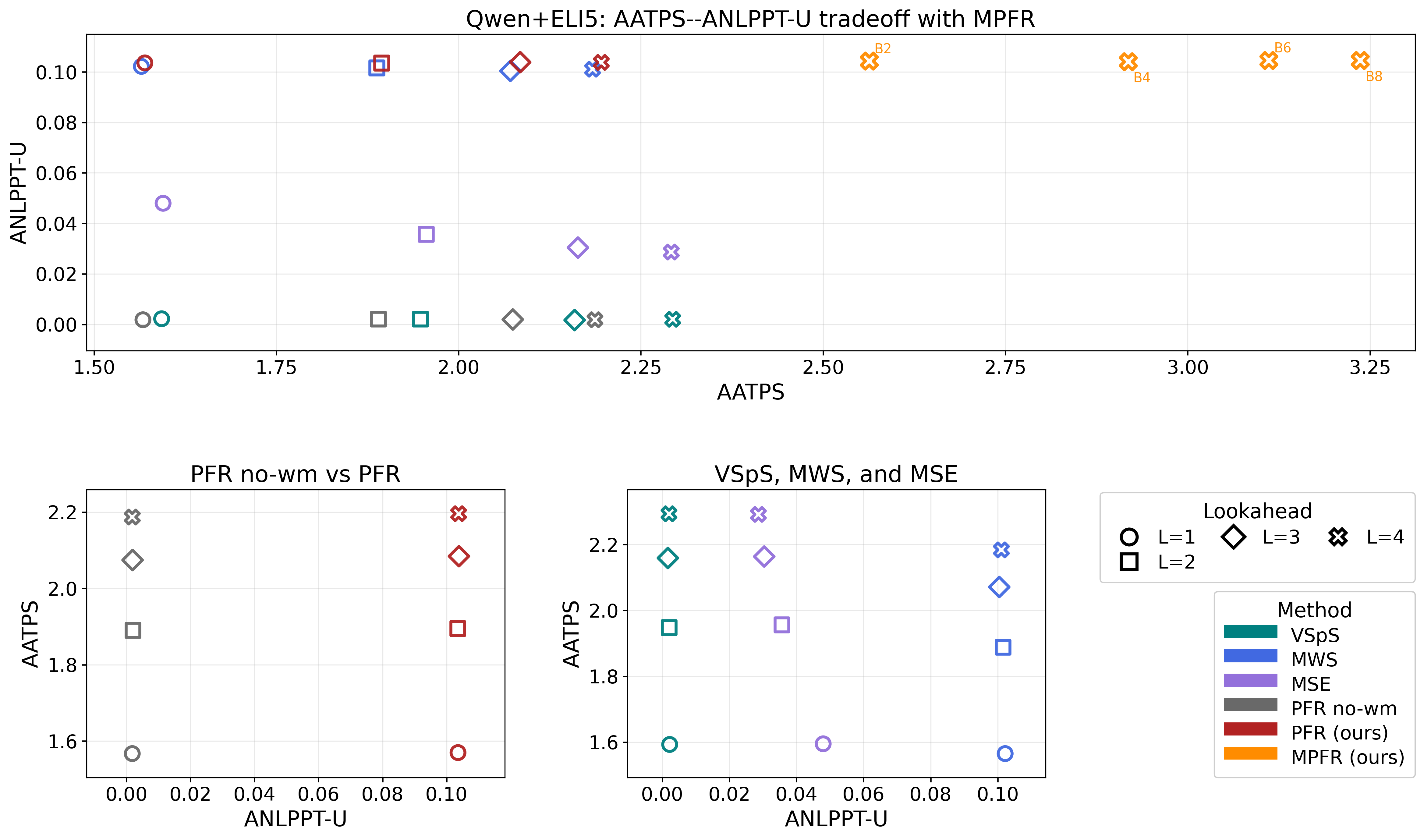}
    \caption{Single-draft (AATPS, ANLPPT-U) frontier on
    Qwen2.5-7B-Instruct / Qwen2.5-0.5B-Instruct $\times$ \textsc{ELI5}.
    \textsc{PFR} dominates \textsc{MWS}/\textsc{MSE} simultaneously on both
    axes; attaching the watermark does not decrease AATPS.}
    \label{fig:single_draft_qwen_eli5}
\end{figure}

\begin{figure}[htpb]
    \centering
    \includegraphics[scale=0.35]{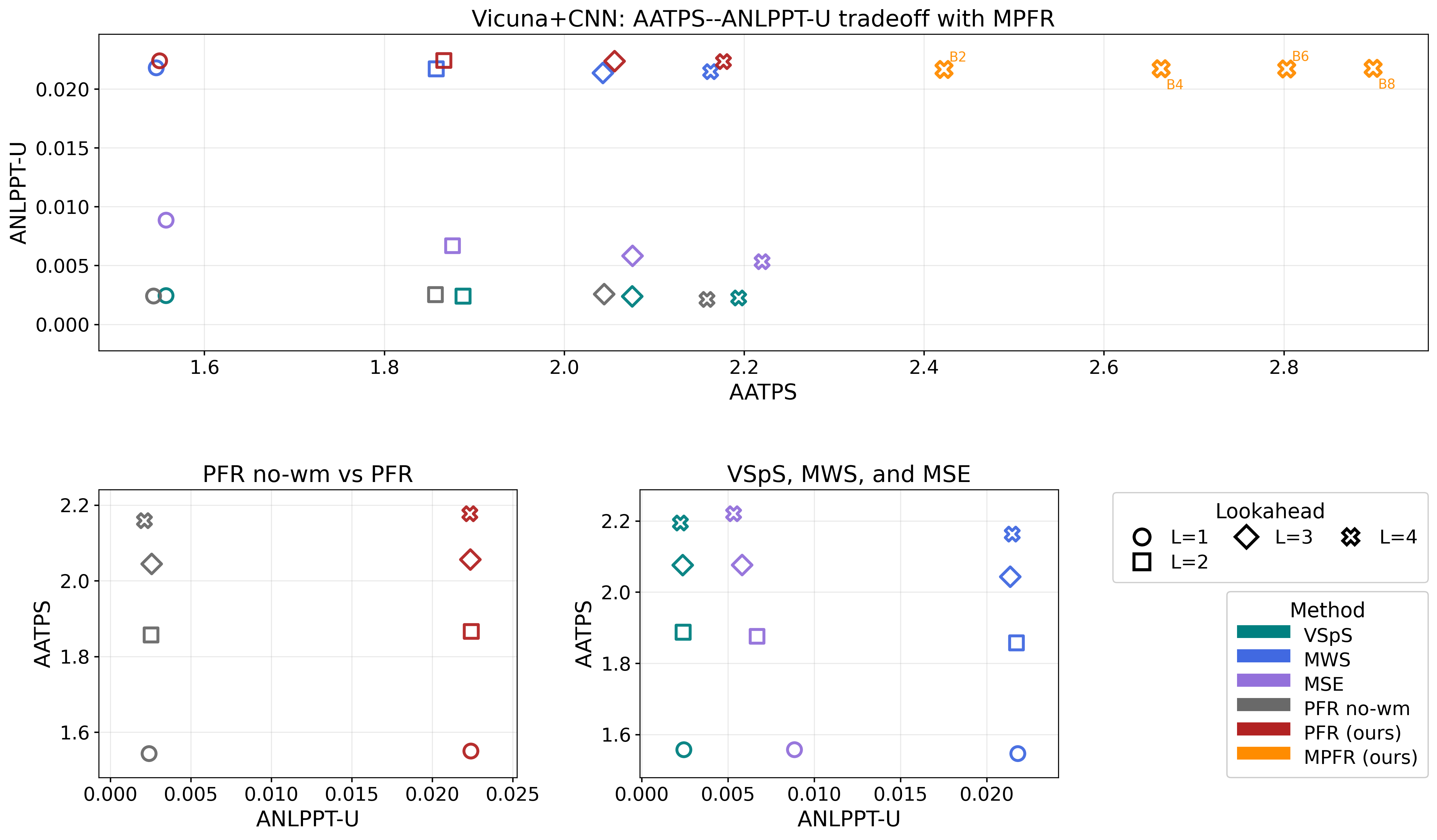}
    \caption{Single-draft (AATPS, ANLPPT-U) frontier on
    Vicuna-7B-v1.5 / Vicuna-68m $\times$ \textsc{CNN/DailyMail}.
    \textsc{PFR} dominates \textsc{MWS}/\textsc{MSE} simultaneously on both
    axes; attaching the watermark does not decrease AATPS.}
    \label{fig:single_draft_vicuna_cnn}
\end{figure}

\begin{figure}[htpb]
    \centering
    \includegraphics[scale=0.35]{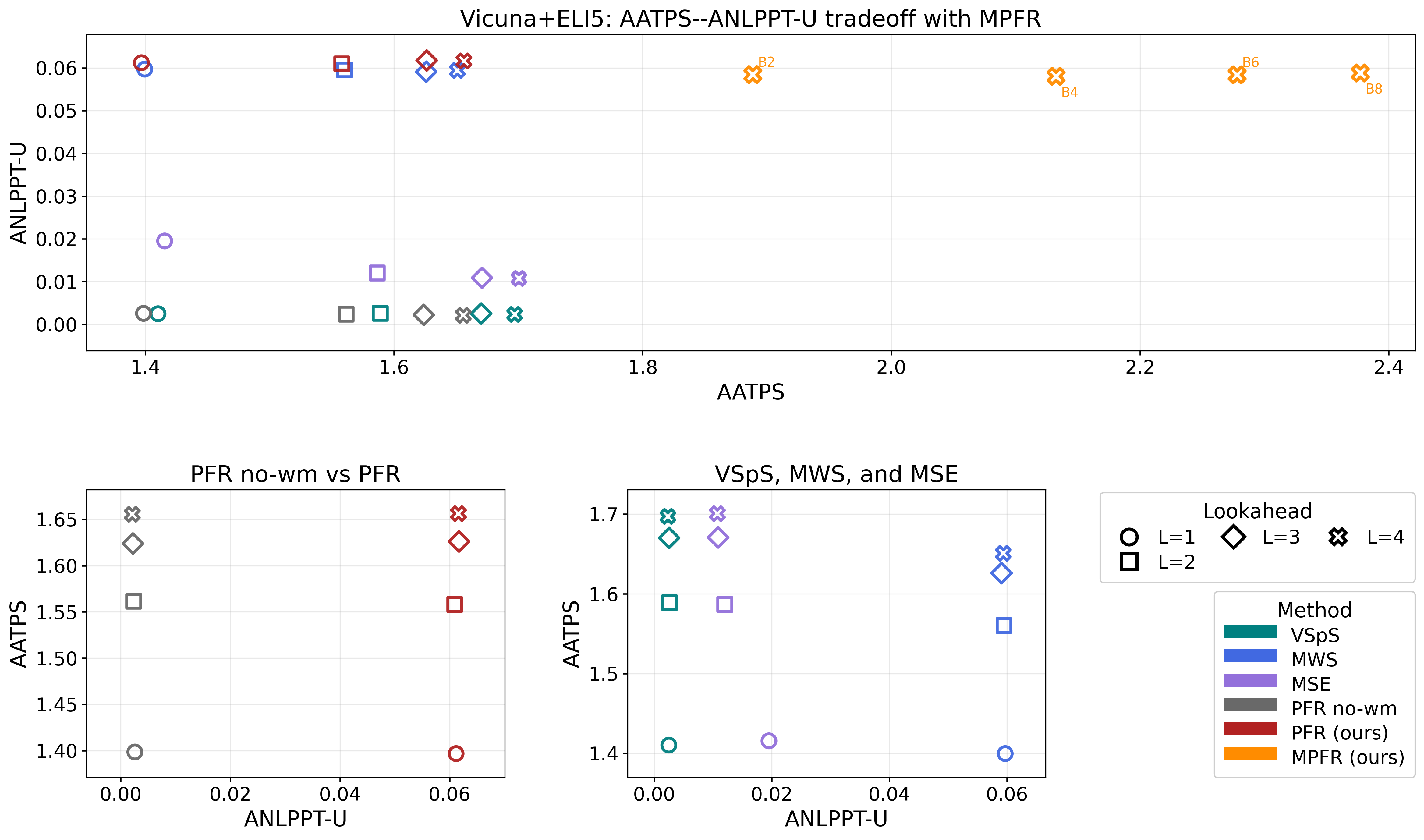}
    \caption{Single-draft (AATPS, ANLPPT-U) frontier on
    Vicuna-7B-v1.5 / Vicuna-68m $\times$ \textsc{ELI5}.
    \textsc{PFR} dominates \textsc{MWS}/\textsc{MSE} simultaneously on both
    axes; attaching the watermark does not decrease AATPS.}
    \label{fig:single_draft_vicuna_eli5}
\end{figure}

\paragraph{Unbiasedness audit.}
We report LPPL under the target model as an empirical audit of whether the
keyed Poisson coupling introduces a measurable likelihood shift. 
\textsc{PFR} matches \textsc{Basic-UWM} in LPPL within statistical noise, as
shown in the final column of
Tables~\ref{tab:single_draft_qwen_cnn}--\ref{tab:single_draft_vicuna_eli5}.
Together with the AATPS and ANLPPT results, this supports our empirical claim
that coupling-level watermarking preserves both acceptance behavior and
detector-visible signal; the exact marginal guarantee follows from the PFR
construction.

\paragraph{Per-cell tables.}
Tables~\ref{tab:single_draft_qwen_cnn}--\ref{tab:single_draft_vicuna_eli5}
report all metrics per $(L, \textsc{decoder})$ for the four
$(\textsc{target},\textsc{dataset})$ cells. Each cell reports
$\mathrm{mean}\pm\mathrm{std}$ across the $1000$ prompts of the cell.

\begin{table*}[!htbp]
\centering
\caption{Single-draft results on Qwen2.5-7B-Instruct / Qwen2.5-0.5B-Instruct
$\times$ \textsc{CNN/DailyMail}. Each cell: $\mathrm{mean}\pm\mathrm{std}$
across $1000$ prompts.}
\label{tab:single_draft_qwen_cnn}
\resizebox{\textwidth}{!}{
\begin{tabular}{llcccccc}
\toprule
$L$ & Method & AATPS & Token rate & ANLPPT-U & ANLPPT-Li & ANLPPT-PL & LPPL \\
\midrule
1 & \textsc{Basic-UWM} & $1.000 \pm 0.000$ & $33.353 \pm 2.815$ & $0.048 \pm 0.035$ & $0.035 \pm 0.029$ & $0.051 \pm 0.036$ & $0.473 \pm 0.134$ \\
 & \textsc{VSpS} & $1.641 \pm 0.070$ & $33.229 \pm 1.944$ & $0.003 \pm 0.005$ & $0.003 \pm 0.006$ & $0.003 \pm 0.006$ & $0.477 \pm 0.136$ \\
 & \textsc{PFR-NoWM} & $1.623 \pm 0.075$ & $32.147 \pm 2.161$ & $0.002 \pm 0.006$ & $0.003 \pm 0.006$ & $0.002 \pm 0.006$ & $0.474 \pm 0.140$ \\
 & \textsc{PFR} & $1.625 \pm 0.072$ & $33.321 \pm 1.706$ & $0.051 \pm 0.036$ & $0.036 \pm 0.030$ & $0.053 \pm 0.038$ & $0.470 \pm 0.143$ \\
 & \textsc{MSE} & $1.642 \pm 0.070$ & $20.578 \pm 2.645$ & $0.023 \pm 0.021$ & $0.016 \pm 0.017$ & $0.025 \pm 0.023$ & $0.472 \pm 0.136$ \\
 & \textsc{MWS} & $1.621 \pm 0.072$ & $20.136 \pm 2.952$ & $0.049 \pm 0.035$ & $0.034 \pm 0.029$ & $0.052 \pm 0.036$ & $0.475 \pm 0.133$ \\
 & \textsc{MSE-Pseudo} & $1.641 \pm 0.072$ & $20.703 \pm 2.420$ & $0.033 \pm 0.028$ & $0.023 \pm 0.023$ & $0.035 \pm 0.029$ & $0.475 \pm 0.138$ \\
\midrule
2 & \textsc{Basic-UWM} & $1.000 \pm 0.000$ & $33.353 \pm 2.815$ & $0.048 \pm 0.035$ & $0.035 \pm 0.029$ & $0.051 \pm 0.036$ & $0.473 \pm 0.134$ \\
 & \textsc{VSpS} & $2.055 \pm 0.147$ & $29.399 \pm 2.693$ & $0.002 \pm 0.006$ & $0.002 \pm 0.006$ & $0.002 \pm 0.006$ & $0.482 \pm 0.139$ \\
 & \textsc{PFR-NoWM} & $2.014 \pm 0.151$ & $30.119 \pm 2.438$ & $0.002 \pm 0.005$ & $0.003 \pm 0.005$ & $0.002 \pm 0.005$ & $0.474 \pm 0.140$ \\
 & \textsc{PFR} & $2.016 \pm 0.154$ & $29.746 \pm 2.374$ & $0.051 \pm 0.036$ & $0.036 \pm 0.030$ & $0.053 \pm 0.038$ & $0.471 \pm 0.143$ \\
 & \textsc{MSE} & $2.056 \pm 0.149$ & $16.954 \pm 2.182$ & $0.016 \pm 0.017$ & $0.012 \pm 0.014$ & $0.018 \pm 0.019$ & $0.475 \pm 0.139$ \\
 & \textsc{MWS} & $2.012 \pm 0.155$ & $17.915 \pm 2.679$ & $0.048 \pm 0.035$ & $0.034 \pm 0.028$ & $0.052 \pm 0.036$ & $0.473 \pm 0.134$ \\
 & \textsc{MSE-Pseudo} & $2.053 \pm 0.150$ & $17.768 \pm 2.415$ & $0.030 \pm 0.026$ & $0.022 \pm 0.021$ & $0.033 \pm 0.028$ & $0.481 \pm 0.143$ \\
\midrule
3 & \textsc{Basic-UWM} & $1.000 \pm 0.000$ & $33.353 \pm 2.815$ & $0.048 \pm 0.035$ & $0.035 \pm 0.029$ & $0.051 \pm 0.036$ & $0.473 \pm 0.134$ \\
 & \textsc{VSpS} & $2.327 \pm 0.228$ & $26.272 \pm 2.968$ & $0.002 \pm 0.006$ & $0.002 \pm 0.006$ & $0.003 \pm 0.006$ & $0.481 \pm 0.141$ \\
 & \textsc{PFR-NoWM} & $2.270 \pm 0.228$ & $26.709 \pm 2.682$ & $0.002 \pm 0.006$ & $0.002 \pm 0.006$ & $0.002 \pm 0.005$ & $0.474 \pm 0.140$ \\
 & \textsc{PFR} & $2.272 \pm 0.225$ & $25.999 \pm 2.742$ & $0.051 \pm 0.036$ & $0.037 \pm 0.030$ & $0.053 \pm 0.038$ & $0.472 \pm 0.144$ \\
 & \textsc{MSE} & $2.330 \pm 0.228$ & $13.453 \pm 1.603$ & $0.015 \pm 0.017$ & $0.011 \pm 0.014$ & $0.017 \pm 0.018$ & $0.476 \pm 0.140$ \\
 & \textsc{MWS} & $2.262 \pm 0.222$ & $16.187 \pm 2.509$ & $0.048 \pm 0.035$ & $0.034 \pm 0.028$ & $0.051 \pm 0.036$ & $0.474 \pm 0.134$ \\
 & \textsc{MSE-Pseudo} & $2.337 \pm 0.232$ & $14.933 \pm 2.118$ & $0.029 \pm 0.026$ & $0.021 \pm 0.022$ & $0.032 \pm 0.027$ & $0.480 \pm 0.138$ \\
\midrule
4 & \textsc{Basic-UWM} & $1.000 \pm 0.000$ & $33.353 \pm 2.815$ & $0.048 \pm 0.035$ & $0.035 \pm 0.029$ & $0.051 \pm 0.036$ & $0.473 \pm 0.134$ \\
 & \textsc{VSpS} & $2.513 \pm 0.288$ & $23.229 \pm 3.176$ & $0.002 \pm 0.006$ & $0.002 \pm 0.005$ & $0.002 \pm 0.006$ & $0.475 \pm 0.134$ \\
 & \textsc{PFR-NoWM} & $2.439 \pm 0.292$ & $23.285 \pm 2.840$ & $0.002 \pm 0.005$ & $0.002 \pm 0.005$ & $0.002 \pm 0.006$ & $0.474 \pm 0.142$ \\
 & \textsc{PFR} & $2.446 \pm 0.286$ & $23.075 \pm 2.755$ & $0.050 \pm 0.036$ & $0.036 \pm 0.030$ & $0.053 \pm 0.038$ & $0.471 \pm 0.144$ \\
 & \textsc{MSE} & $2.525 \pm 0.298$ & $12.623 \pm 1.893$ & $0.014 \pm 0.016$ & $0.010 \pm 0.013$ & $0.016 \pm 0.017$ & $0.479 \pm 0.145$ \\
 & \textsc{MWS} & $2.432 \pm 0.287$ & $13.592 \pm 2.213$ & $0.048 \pm 0.035$ & $0.034 \pm 0.028$ & $0.052 \pm 0.036$ & $0.475 \pm 0.132$ \\
 & \textsc{MSE-Pseudo} & $2.499 \pm 0.287$ & $13.494 \pm 1.934$ & $0.029 \pm 0.026$ & $0.021 \pm 0.021$ & $0.031 \pm 0.027$ & $0.481 \pm 0.141$ \\
\bottomrule
\end{tabular}
}
\end{table*}

\begin{table*}[!htbp]
\centering
\caption{Single-draft results on Qwen2.5-7B-Instruct / Qwen2.5-0.5B-Instruct
$\times$ \textsc{ELI5}. Each cell: $\mathrm{mean}\pm\mathrm{std}$ across
$1000$ prompts.}
\label{tab:single_draft_qwen_eli5}
\resizebox{\textwidth}{!}{
\begin{tabular}{llcccccc}
\toprule
$L$ & Method & AATPS & Token rate & ANLPPT-U & ANLPPT-Li & ANLPPT-PL & LPPL \\
\midrule
1 & \textsc{Basic-UWM} & $1.000 \pm 0.000$ & $26.132 \pm 1.924$ & $0.100 \pm 0.045$ & $0.072 \pm 0.039$ & $0.104 \pm 0.046$ & $0.696 \pm 0.141$ \\
 & \textsc{VSpS} & $1.593 \pm 0.060$ & $32.090 \pm 1.889$ & $0.002 \pm 0.005$ & $0.002 \pm 0.005$ & $0.002 \pm 0.005$ & $0.701 \pm 0.133$ \\
 & \textsc{PFR-NoWM} & $1.567 \pm 0.061$ & $33.046 \pm 1.405$ & $0.002 \pm 0.004$ & $0.002 \pm 0.004$ & $0.002 \pm 0.004$ & $0.703 \pm 0.139$ \\
 & \textsc{PFR} & $1.570 \pm 0.062$ & $33.213 \pm 1.487$ & $0.104 \pm 0.048$ & $0.075 \pm 0.039$ & $0.107 \pm 0.049$ & $0.699 \pm 0.134$ \\
 & \textsc{MSE} & $1.595 \pm 0.060$ & $18.218 \pm 2.022$ & $0.048 \pm 0.030$ & $0.032 \pm 0.024$ & $0.053 \pm 0.032$ & $0.704 \pm 0.135$ \\
 & \textsc{MWS} & $1.565 \pm 0.063$ & $20.177 \pm 2.688$ & $0.102 \pm 0.047$ & $0.075 \pm 0.040$ & $0.106 \pm 0.048$ & $0.700 \pm 0.137$ \\
 & \textsc{MSE-Pseudo} & $1.593 \pm 0.062$ & $19.553 \pm 2.107$ & $0.068 \pm 0.038$ & $0.049 \pm 0.032$ & $0.071 \pm 0.039$ & $0.703 \pm 0.145$ \\
\midrule
2 & \textsc{Basic-UWM} & $1.000 \pm 0.000$ & $26.132 \pm 1.924$ & $0.100 \pm 0.045$ & $0.072 \pm 0.039$ & $0.104 \pm 0.046$ & $0.696 \pm 0.141$ \\
 & \textsc{VSpS} & $1.948 \pm 0.116$ & $27.350 \pm 2.434$ & $0.002 \pm 0.005$ & $0.002 \pm 0.005$ & $0.002 \pm 0.004$ & $0.701 \pm 0.136$ \\
 & \textsc{PFR-NoWM} & $1.890 \pm 0.118$ & $28.719 \pm 1.782$ & $0.002 \pm 0.005$ & $0.002 \pm 0.004$ & $0.002 \pm 0.004$ & $0.698 \pm 0.137$ \\
 & \textsc{PFR} & $1.895 \pm 0.125$ & $28.851 \pm 1.937$ & $0.104 \pm 0.048$ & $0.075 \pm 0.039$ & $0.107 \pm 0.049$ & $0.699 \pm 0.134$ \\
 & \textsc{MSE} & $1.956 \pm 0.126$ & $14.867 \pm 1.179$ & $0.036 \pm 0.024$ & $0.024 \pm 0.020$ & $0.040 \pm 0.026$ & $0.704 \pm 0.140$ \\
 & \textsc{MWS} & $1.888 \pm 0.121$ & $16.985 \pm 2.208$ & $0.102 \pm 0.046$ & $0.074 \pm 0.039$ & $0.105 \pm 0.047$ & $0.700 \pm 0.142$ \\
 & \textsc{MSE-Pseudo} & $1.951 \pm 0.118$ & $24.767 \pm 2.016$ & $0.062 \pm 0.034$ & $0.043 \pm 0.028$ & $0.066 \pm 0.036$ & $0.706 \pm 0.136$ \\
\midrule
3 & \textsc{Basic-UWM} & $1.000 \pm 0.000$ & $26.132 \pm 1.924$ & $0.100 \pm 0.045$ & $0.072 \pm 0.039$ & $0.104 \pm 0.046$ & $0.696 \pm 0.141$ \\
 & \textsc{VSpS} & $2.159 \pm 0.175$ & $23.619 \pm 2.707$ & $0.002 \pm 0.004$ & $0.002 \pm 0.004$ & $0.002 \pm 0.003$ & $0.706 \pm 0.139$ \\
 & \textsc{PFR-NoWM} & $2.074 \pm 0.167$ & $24.315 \pm 2.065$ & $0.002 \pm 0.004$ & $0.002 \pm 0.005$ & $0.002 \pm 0.004$ & $0.710 \pm 0.139$ \\
 & \textsc{PFR} & $2.084 \pm 0.174$ & $24.675 \pm 2.059$ & $0.104 \pm 0.049$ & $0.075 \pm 0.041$ & $0.107 \pm 0.049$ & $0.697 \pm 0.135$ \\
 & \textsc{MSE} & $2.164 \pm 0.175$ & $13.359 \pm 2.005$ & $0.030 \pm 0.023$ & $0.020 \pm 0.019$ & $0.035 \pm 0.025$ & $0.705 \pm 0.135$ \\
 & \textsc{MWS} & $2.071 \pm 0.169$ & $14.243 \pm 1.924$ & $0.100 \pm 0.046$ & $0.073 \pm 0.040$ & $0.104 \pm 0.047$ & $0.698 \pm 0.138$ \\
 & \textsc{MSE-Pseudo} & $2.169 \pm 0.172$ & $21.339 \pm 2.086$ & $0.061 \pm 0.034$ & $0.043 \pm 0.028$ & $0.064 \pm 0.035$ & $0.709 \pm 0.133$ \\
\midrule
4 & \textsc{Basic-UWM} & $1.000 \pm 0.000$ & $26.132 \pm 1.924$ & $0.100 \pm 0.045$ & $0.072 \pm 0.039$ & $0.104 \pm 0.046$ & $0.696 \pm 0.141$ \\
 & \textsc{VSpS} & $2.294 \pm 0.220$ & $17.142 \pm 3.485$ & $0.002 \pm 0.005$ & $0.002 \pm 0.005$ & $0.002 \pm 0.004$ & $0.705 \pm 0.137$ \\
 & \textsc{PFR-NoWM} & $2.187 \pm 0.204$ & $21.512 \pm 2.145$ & $0.002 \pm 0.004$ & $0.002 \pm 0.005$ & $0.002 \pm 0.004$ & $0.703 \pm 0.138$ \\
 & \textsc{PFR} & $2.196 \pm 0.214$ & $21.319 \pm 2.212$ & $0.104 \pm 0.048$ & $0.075 \pm 0.041$ & $0.107 \pm 0.049$ & $0.698 \pm 0.134$ \\
 & \textsc{MSE} & $2.292 \pm 0.220$ & $13.200 \pm 2.021$ & $0.029 \pm 0.021$ & $0.019 \pm 0.017$ & $0.033 \pm 0.024$ & $0.701 \pm 0.136$ \\
 & \textsc{MWS} & $2.184 \pm 0.208$ & $11.570 \pm 1.496$ & $0.101 \pm 0.046$ & $0.073 \pm 0.039$ & $0.105 \pm 0.047$ & $0.696 \pm 0.141$ \\
 & \textsc{MSE-Pseudo} & $2.302 \pm 0.214$ & $18.298 \pm 1.942$ & $0.060 \pm 0.033$ & $0.042 \pm 0.028$ & $0.063 \pm 0.035$ & $0.705 \pm 0.132$ \\
\bottomrule
\end{tabular}
}
\end{table*}

\begin{table*}[!htbp]
\centering
\caption{Single-draft results on Vicuna-7B-v1.5 / Vicuna-68m $\times$
\textsc{CNN/DailyMail}. Each cell: $\mathrm{mean}\pm\mathrm{std}$ across
$1000$ prompts.}
\label{tab:single_draft_vicuna_cnn}
\resizebox{\textwidth}{!}{
\begin{tabular}{llcccccc}
\toprule
$L$ & Method & AATPS & Token rate & ANLPPT-U & ANLPPT-Li & ANLPPT-PL & LPPL \\
\midrule
1 & \textsc{Basic-UWM} & $1.000 \pm 0.000$ & $37.058 \pm 0.848$ & $0.021 \pm 0.024$ & $0.016 \pm 0.019$ & $0.022 \pm 0.025$ & $0.274 \pm 0.190$ \\
 & \textsc{VSpS} & $1.557 \pm 0.086$ & $48.890 \pm 3.439$ & $0.002 \pm 0.006$ & $0.003 \pm 0.006$ & $0.002 \pm 0.006$ & $0.275 \pm 0.167$ \\
 & \textsc{PFR-NoWM} & $1.544 \pm 0.088$ & $37.027 \pm 5.064$ & $0.002 \pm 0.006$ & $0.003 \pm 0.006$ & $0.002 \pm 0.005$ & $0.269 \pm 0.121$ \\
 & \textsc{PFR} & $1.550 \pm 0.090$ & $38.256 \pm 6.092$ & $0.022 \pm 0.024$ & $0.017 \pm 0.019$ & $0.023 \pm 0.025$ & $0.272 \pm 0.128$ \\
 & \textsc{MSE} & $1.557 \pm 0.088$ & $45.513 \pm 4.490$ & $0.009 \pm 0.013$ & $0.007 \pm 0.011$ & $0.010 \pm 0.015$ & $0.269 \pm 0.123$ \\
 & \textsc{MWS} & $1.547 \pm 0.091$ & $30.814 \pm 4.078$ & $0.022 \pm 0.024$ & $0.017 \pm 0.020$ & $0.023 \pm 0.026$ & $0.276 \pm 0.192$ \\
 & \textsc{MSE-Pseudo} & $1.555 \pm 0.088$ & $27.597 \pm 2.405$ & $0.015 \pm 0.018$ & $0.012 \pm 0.016$ & $0.016 \pm 0.018$ & $0.279 \pm 0.118$ \\
\midrule
2 & \textsc{Basic-UWM} & $1.000 \pm 0.000$ & $37.058 \pm 0.848$ & $0.021 \pm 0.024$ & $0.016 \pm 0.019$ & $0.022 \pm 0.025$ & $0.274 \pm 0.190$ \\
 & \textsc{VSpS} & $1.888 \pm 0.179$ & $53.865 \pm 5.770$ & $0.002 \pm 0.006$ & $0.002 \pm 0.006$ & $0.002 \pm 0.006$ & $0.268 \pm 0.129$ \\
 & \textsc{PFR-NoWM} & $1.857 \pm 0.178$ & $40.154 \pm 5.953$ & $0.003 \pm 0.005$ & $0.003 \pm 0.006$ & $0.003 \pm 0.005$ & $0.273 \pm 0.123$ \\
 & \textsc{PFR} & $1.866 \pm 0.183$ & $42.639 \pm 7.272$ & $0.022 \pm 0.024$ & $0.017 \pm 0.019$ & $0.023 \pm 0.025$ & $0.272 \pm 0.127$ \\
 & \textsc{MSE} & $1.876 \pm 0.186$ & $46.276 \pm 7.794$ & $0.007 \pm 0.010$ & $0.006 \pm 0.009$ & $0.007 \pm 0.011$ & $0.268 \pm 0.131$ \\
 & \textsc{MWS} & $1.858 \pm 0.181$ & $31.588 \pm 5.453$ & $0.022 \pm 0.024$ & $0.016 \pm 0.020$ & $0.023 \pm 0.025$ & $0.278 \pm 0.193$ \\
 & \textsc{MSE-Pseudo} & $1.877 \pm 0.182$ & $29.023 \pm 4.118$ & $0.015 \pm 0.018$ & $0.012 \pm 0.015$ & $0.015 \pm 0.020$ & $0.273 \pm 0.127$ \\
\midrule
3 & \textsc{Basic-UWM} & $1.000 \pm 0.000$ & $37.058 \pm 0.848$ & $0.021 \pm 0.024$ & $0.016 \pm 0.019$ & $0.022 \pm 0.025$ & $0.274 \pm 0.190$ \\
 & \textsc{VSpS} & $2.076 \pm 0.254$ & $53.835 \pm 7.463$ & $0.002 \pm 0.006$ & $0.003 \pm 0.006$ & $0.002 \pm 0.005$ & $0.276 \pm 0.212$ \\
 & \textsc{PFR-NoWM} & $2.044 \pm 0.259$ & $42.513 \pm 7.326$ & $0.003 \pm 0.006$ & $0.003 \pm 0.006$ & $0.002 \pm 0.006$ & $0.274 \pm 0.128$ \\
 & \textsc{PFR} & $2.056 \pm 0.263$ & $41.813 \pm 7.586$ & $0.022 \pm 0.024$ & $0.017 \pm 0.019$ & $0.023 \pm 0.025$ & $0.272 \pm 0.127$ \\
 & \textsc{MSE} & $2.076 \pm 0.260$ & $38.713 \pm 8.338$ & $0.006 \pm 0.009$ & $0.005 \pm 0.008$ & $0.006 \pm 0.010$ & $0.266 \pm 0.118$ \\
 & \textsc{MWS} & $2.043 \pm 0.258$ & $28.960 \pm 4.759$ & $0.021 \pm 0.024$ & $0.016 \pm 0.020$ & $0.022 \pm 0.026$ & $0.277 \pm 0.191$ \\
 & \textsc{MSE-Pseudo} & $2.062 \pm 0.257$ & $32.256 \pm 6.945$ & $0.015 \pm 0.019$ & $0.012 \pm 0.016$ & $0.015 \pm 0.020$ & $0.277 \pm 0.125$ \\
\midrule
4 & \textsc{Basic-UWM} & $1.000 \pm 0.000$ & $37.058 \pm 0.848$ & $0.021 \pm 0.024$ & $0.016 \pm 0.019$ & $0.022 \pm 0.025$ & $0.274 \pm 0.190$ \\
 & \textsc{VSpS} & $2.194 \pm 0.324$ & $53.940 \pm 8.237$ & $0.002 \pm 0.005$ & $0.002 \pm 0.005$ & $0.002 \pm 0.005$ & $0.271 \pm 0.120$ \\
 & \textsc{PFR-NoWM} & $2.159 \pm 0.326$ & $40.083 \pm 7.877$ & $0.002 \pm 0.005$ & $0.002 \pm 0.005$ & $0.002 \pm 0.005$ & $0.268 \pm 0.117$ \\
 & \textsc{PFR} & $2.177 \pm 0.323$ & $41.800 \pm 9.355$ & $0.022 \pm 0.023$ & $0.017 \pm 0.019$ & $0.023 \pm 0.025$ & $0.272 \pm 0.127$ \\
 & \textsc{MSE} & $2.220 \pm 0.330$ & $32.730 \pm 6.784$ & $0.005 \pm 0.008$ & $0.005 \pm 0.008$ & $0.006 \pm 0.009$ & $0.264 \pm 0.119$ \\
 & \textsc{MWS} & $2.163 \pm 0.319$ & $27.253 \pm 4.410$ & $0.021 \pm 0.024$ & $0.016 \pm 0.019$ & $0.023 \pm 0.025$ & $0.277 \pm 0.191$ \\
 & \textsc{MSE-Pseudo} & $2.183 \pm 0.317$ & $33.007 \pm 6.806$ & $0.014 \pm 0.018$ & $0.011 \pm 0.015$ & $0.014 \pm 0.018$ & $0.277 \pm 0.121$ \\
\bottomrule
\end{tabular}
}
\end{table*}

\begin{table*}[!htbp]
\centering
\caption{Single-draft results on Vicuna-7B-v1.5 / Vicuna-68m $\times$
\textsc{ELI5}. Each cell: $\mathrm{mean}\pm\mathrm{std}$ across $1000$
prompts.}
\label{tab:single_draft_vicuna_eli5}
\resizebox{\textwidth}{!}{
\begin{tabular}{llcccccc}
\toprule
$L$ & Method & AATPS & Token rate & ANLPPT-U & ANLPPT-Li & ANLPPT-PL & LPPL \\
\midrule
1 & \textsc{Basic-UWM} & $1.000 \pm 0.000$ & $38.536 \pm 1.896$ & $0.060 \pm 0.040$ & $0.044 \pm 0.033$ & $0.062 \pm 0.042$ & $0.507 \pm 0.152$ \\
 & \textsc{VSpS} & $1.410 \pm 0.063$ & $46.306 \pm 3.178$ & $0.002 \pm 0.005$ & $0.003 \pm 0.005$ & $0.002 \pm 0.005$ & $0.511 \pm 0.150$ \\
 & \textsc{PFR-NoWM} & $1.399 \pm 0.063$ & $33.636 \pm 4.757$ & $0.003 \pm 0.006$ & $0.003 \pm 0.007$ & $0.002 \pm 0.005$ & $0.509 \pm 0.152$ \\
 & \textsc{PFR} & $1.397 \pm 0.065$ & $35.231 \pm 5.397$ & $0.061 \pm 0.043$ & $0.045 \pm 0.038$ & $0.063 \pm 0.042$ & $0.513 \pm 0.162$ \\
 & \textsc{MSE} & $1.415 \pm 0.062$ & $44.719 \pm 3.192$ & $0.020 \pm 0.020$ & $0.014 \pm 0.017$ & $0.022 \pm 0.022$ & $0.509 \pm 0.147$ \\
 & \textsc{MWS} & $1.399 \pm 0.064$ & $26.309 \pm 3.816$ & $0.060 \pm 0.041$ & $0.044 \pm 0.034$ & $0.062 \pm 0.042$ & $0.507 \pm 0.151$ \\
 & \textsc{MSE-Pseudo} & $1.416 \pm 0.063$ & $28.987 \pm 4.863$ & $0.040 \pm 0.030$ & $0.029 \pm 0.025$ & $0.042 \pm 0.031$ & $0.504 \pm 0.152$ \\
\midrule
2 & \textsc{Basic-UWM} & $1.000 \pm 0.000$ & $38.536 \pm 1.896$ & $0.060 \pm 0.040$ & $0.044 \pm 0.033$ & $0.062 \pm 0.042$ & $0.507 \pm 0.152$ \\
 & \textsc{VSpS} & $1.589 \pm 0.104$ & $47.569 \pm 3.837$ & $0.003 \pm 0.006$ & $0.003 \pm 0.006$ & $0.003 \pm 0.006$ & $0.509 \pm 0.146$ \\
 & \textsc{PFR-NoWM} & $1.562 \pm 0.103$ & $34.678 \pm 4.452$ & $0.002 \pm 0.005$ & $0.002 \pm 0.005$ & $0.002 \pm 0.006$ & $0.505 \pm 0.147$ \\
 & \textsc{PFR} & $1.558 \pm 0.106$ & $34.425 \pm 4.996$ & $0.061 \pm 0.043$ & $0.044 \pm 0.038$ & $0.063 \pm 0.042$ & $0.513 \pm 0.162$ \\
 & \textsc{MSE} & $1.587 \pm 0.105$ & $33.648 \pm 4.994$ & $0.012 \pm 0.014$ & $0.009 \pm 0.012$ & $0.014 \pm 0.016$ & $0.509 \pm 0.147$ \\
 & \textsc{MWS} & $1.560 \pm 0.109$ & $23.481 \pm 2.014$ & $0.060 \pm 0.040$ & $0.043 \pm 0.034$ & $0.062 \pm 0.042$ & $0.510 \pm 0.157$ \\
 & \textsc{MSE-Pseudo} & $1.593 \pm 0.110$ & $28.595 \pm 4.733$ & $0.040 \pm 0.032$ & $0.029 \pm 0.027$ & $0.041 \pm 0.032$ & $0.509 \pm 0.152$ \\
\midrule
3 & \textsc{Basic-UWM} & $1.000 \pm 0.000$ & $38.536 \pm 1.896$ & $0.060 \pm 0.040$ & $0.044 \pm 0.033$ & $0.062 \pm 0.042$ & $0.507 \pm 0.152$ \\
 & \textsc{VSpS} & $1.670 \pm 0.134$ & $45.003 \pm 5.451$ & $0.003 \pm 0.006$ & $0.002 \pm 0.005$ & $0.002 \pm 0.006$ & $0.513 \pm 0.157$ \\
 & \textsc{PFR-NoWM} & $1.624 \pm 0.132$ & $31.214 \pm 4.662$ & $0.002 \pm 0.005$ & $0.002 \pm 0.005$ & $0.002 \pm 0.005$ & $0.506 \pm 0.149$ \\
 & \textsc{PFR} & $1.626 \pm 0.130$ & $35.694 \pm 5.835$ & $0.062 \pm 0.043$ & $0.045 \pm 0.038$ & $0.063 \pm 0.042$ & $0.515 \pm 0.162$ \\
 & \textsc{MSE} & $1.671 \pm 0.136$ & $26.134 \pm 3.956$ & $0.011 \pm 0.014$ & $0.008 \pm 0.012$ & $0.012 \pm 0.015$ & $0.499 \pm 0.145$ \\
 & \textsc{MWS} & $1.626 \pm 0.136$ & $22.536 \pm 2.127$ & $0.059 \pm 0.040$ & $0.043 \pm 0.033$ & $0.061 \pm 0.041$ & $0.508 \pm 0.152$ \\
 & \textsc{MSE-Pseudo} & $1.666 \pm 0.134$ & $28.611 \pm 4.748$ & $0.037 \pm 0.028$ & $0.027 \pm 0.023$ & $0.039 \pm 0.029$ & $0.505 \pm 0.148$ \\
\midrule
4 & \textsc{Basic-UWM} & $1.000 \pm 0.000$ & $38.536 \pm 1.896$ & $0.060 \pm 0.040$ & $0.044 \pm 0.033$ & $0.062 \pm 0.042$ & $0.507 \pm 0.152$ \\
 & \textsc{VSpS} & $1.697 \pm 0.152$ & $43.287 \pm 5.333$ & $0.002 \pm 0.005$ & $0.002 \pm 0.006$ & $0.002 \pm 0.005$ & $0.510 \pm 0.152$ \\
 & \textsc{PFR-NoWM} & $1.656 \pm 0.145$ & $28.914 \pm 4.147$ & $0.002 \pm 0.005$ & $0.002 \pm 0.005$ & $0.002 \pm 0.005$ & $0.503 \pm 0.149$ \\
 & \textsc{PFR} & $1.656 \pm 0.146$ & $32.316 \pm 5.921$ & $0.062 \pm 0.043$ & $0.045 \pm 0.038$ & $0.063 \pm 0.042$ & $0.515 \pm 0.162$ \\
 & \textsc{MSE} & $1.700 \pm 0.154$ & $23.993 \pm 3.469$ & $0.011 \pm 0.014$ & $0.008 \pm 0.012$ & $0.012 \pm 0.015$ & $0.514 \pm 0.150$ \\
 & \textsc{MWS} & $1.651 \pm 0.149$ & $22.198 \pm 3.679$ & $0.059 \pm 0.039$ & $0.043 \pm 0.033$ & $0.062 \pm 0.041$ & $0.509 \pm 0.153$ \\
 & \textsc{MSE-Pseudo} & $1.703 \pm 0.155$ & $26.004 \pm 4.603$ & $0.037 \pm 0.030$ & $0.027 \pm 0.024$ & $0.039 \pm 0.031$ & $0.508 \pm 0.147$ \\
\bottomrule
\end{tabular}
}
\end{table*}

\subsection{Multi-draft experiments: full per-cell results}
\label{app:multi_draft_full}

\paragraph{Setup.}
The main-text multi-draft paragraph (Section~\ref{sec::experiments},
``Multi-draft scaling'') summarises a single representative cell
(Qwen2.5-7B-Instruct $\times$ \textsc{CNN/DailyMail}). Here we extend the
evaluation to the $2{\times}2$ matrix
\{Qwen2.5-7B-Instruct, Vicuna-7B-v1.5\} $\times$
\{\textsc{CNN/DailyMail}, \textsc{ELI5}\}, using the target/drafter pairs
\texttt{Qwen2.5-7B-Instruct}/\texttt{Qwen2.5-0.5B-Instruct} and
\texttt{lmsys/vicuna-7b-v1.5}/\texttt{double7/vicuna-68m}.
Each cell uses $n{=}1000$ prompts, lookahead $L{=}4$, and draft counts
$B\in\{2,4,6,8\}$, with the sampling parameters of Appendix~\ref{app:setup}.

\paragraph{Decoders.}
\textsc{MPFR} is our keyed multi-draft Poisson sampler;
\textsc{Invariant} is the unwatermarked
\emph{InvariantMultiDraftStrategy} of~\citet{rowan2025list} run in the same
harness on the same prompt set. The two decoders share the draft/verify
mechanism and differ only in whether the per-context watermark key is applied.

\paragraph{Per-cell tables.}
Tables~\ref{tab:full-qwen-cnn}--\ref{tab:full-vicuna-eli5} report all quality
and watermark-strength metrics per (decoder, $B$). Columns: AATPS, token rate
in tokens/s (TR), ANLPPT-\{U, Li, PL\}, and LPPL. The corresponding
TPR-vs-$T_{\mathrm{eval}}$ curves are reported in
Figure~\ref{fig:full_detection_4cells} of Appendix~\ref{app:ablation}.

\begin{table}[h]
\centering
\small
\color{black}
\caption{Multi-draft results on Qwen2.5-7B-Instruct $\times$
\textsc{CNN/DailyMail}, $L{=}4$.}
\label{tab:full-qwen-cnn}
\resizebox{\textwidth}{!}{
\begin{tabular}{l c c c c c c c}
\toprule
Decoder & $B$ & AATPS & TR & ANLPPT-U & ANLPPT-Li & ANLPPT-PL & LPPL \\
\midrule
\textsc{MPFR}      & 2 & $2.797 \pm 0.001$ & $25.11 \pm 0.22$ & $0.051 \pm 0.001$ & $0.036 \pm 0.001$ & $0.053 \pm 0.001$ & $0.474 \pm 0.002$ \\
\textsc{MPFR}      & 4 & $3.123 \pm 0.002$ & $27.50 \pm 0.31$ & $0.051 \pm 0.001$ & $0.036 \pm 0.001$ & $0.053 \pm 0.001$ & $0.474 \pm 0.002$ \\
\textsc{MPFR}      & 6 & $3.294 \pm 0.003$ & $28.50 \pm 0.49$ & $0.050 \pm 0.001$ & $0.036 \pm 0.001$ & $0.053 \pm 0.001$ & $0.474 \pm 0.001$ \\
\textsc{MPFR}      & 8 & $3.409 \pm 0.002$ & $29.13 \pm 0.59$ & $0.051 \pm 0.002$ & $0.036 \pm 0.001$ & $0.053 \pm 0.002$ & $0.474 \pm 0.002$ \\
\textsc{Invariant} & 2 & $2.803 \pm 0.007$ & $24.78 \pm 0.14$ & $0.002 \pm 0.000$ & $0.003 \pm 0.000$ & $0.002 \pm 0.000$ & $0.483 \pm 0.005$ \\
\textsc{Invariant} & 4 & $3.109 \pm 0.008$ & $27.12 \pm 0.20$ & $0.002 \pm 0.000$ & $0.002 \pm 0.000$ & $0.002 \pm 0.000$ & $0.484 \pm 0.003$ \\
\textsc{Invariant} & 6 & $3.272 \pm 0.010$ & $27.84 \pm 0.60$ & $0.002 \pm 0.000$ & $0.002 \pm 0.000$ & $0.002 \pm 0.000$ & $0.481 \pm 0.000$ \\
\textsc{Invariant} & 8 & $3.384 \pm 0.012$ & $28.62 \pm 0.54$ & $0.002 \pm 0.000$ & $0.003 \pm 0.000$ & $0.002 \pm 0.000$ & $0.476 \pm 0.004$ \\
\bottomrule
\end{tabular}
}
\end{table}

\begin{table}[h]
\centering
\small
\color{black}
\caption{Multi-draft results on Qwen2.5-7B-Instruct $\times$
\textsc{ELI5}, $L{=}4$.}
\label{tab:full-qwen-eli5}
\resizebox{\textwidth}{!}{
\begin{tabular}{l c c c c c c c}
\toprule
Decoder & $B$ & AATPS & TR & ANLPPT-U & ANLPPT-Li & ANLPPT-PL & LPPL \\
\midrule
\textsc{MPFR}      & 2 & $2.563 \pm 0.007$ & $23.57 \pm 0.15$ & $0.104 \pm 0.001$ & $0.075 \pm 0.001$ & $0.108 \pm 0.001$ & $0.705 \pm 0.001$ \\
\textsc{MPFR}      & 4 & $2.918 \pm 0.008$ & $26.46 \pm 0.25$ & $0.104 \pm 0.001$ & $0.076 \pm 0.001$ & $0.108 \pm 0.001$ & $0.706 \pm 0.000$ \\
\textsc{MPFR}      & 6 & $3.109 \pm 0.005$ & $27.81 \pm 0.35$ & $0.103 \pm 0.001$ & $0.075 \pm 0.001$ & $0.107 \pm 0.001$ & $0.705 \pm 0.000$ \\
\textsc{MPFR}      & 8 & $3.235 \pm 0.010$ & $28.73 \pm 0.49$ & $0.104 \pm 0.002$ & $0.075 \pm 0.002$ & $0.108 \pm 0.001$ & $0.706 \pm 0.001$ \\
\textsc{Invariant} & 2 & $2.523 \pm 0.004$ & $22.73 \pm 0.22$ & $0.002 \pm 0.000$ & $0.002 \pm 0.000$ & $0.002 \pm 0.000$ & $0.747 \pm 0.005$ \\
\textsc{Invariant} & 4 & $2.850 \pm 0.010$ & $25.60 \pm 0.26$ & $0.002 \pm 0.000$ & $0.002 \pm 0.000$ & $0.002 \pm 0.000$ & $0.736 \pm 0.002$ \\
\textsc{Invariant} & 6 & $3.039 \pm 0.004$ & $27.01 \pm 0.32$ & $0.002 \pm 0.000$ & $0.002 \pm 0.000$ & $0.002 \pm 0.000$ & $0.728 \pm 0.002$ \\
\textsc{Invariant} & 8 & $3.156 \pm 0.009$ & $28.09 \pm 0.36$ & $0.002 \pm 0.000$ & $0.002 \pm 0.000$ & $0.002 \pm 0.000$ & $0.728 \pm 0.002$ \\
\bottomrule
\end{tabular}
}
\end{table}

\begin{table}[t]
\centering
\small
\color{black}
\caption{Multi-draft results on Llama-3.1-8B-Instruct $\times$ CNN/\textsc{DailyMail}, $L=4$.
Mean $\pm$ std over 3 random seeds, $1000$ prompts each. The drafter is
Llama-3.2-1B-Instruct, the official distillation of the target sharing the same tokenizer.}
\label{tab:multi_llama_cnn}
\small
\resizebox{\textwidth}{!}{
\begin{tabular}{l r r r r r r r}
\toprule
Decoder & $B$ & AATPS & TR & ANLPPT-U & ANLPPT-Li & ANLPPT-PL & LPPL \\
\midrule
\textsc{MPFR} & 2 & $3.459 \pm 0.015$ & $37.44 \pm 0.57$ & $0.058 \pm 0.001$ & $0.041 \pm 0.000$ & $0.061 \pm 0.001$ & $0.514 \pm 0.002$ \\
\textsc{MPFR} & 4 & $3.784 \pm 0.013$ & $40.20 \pm 0.28$ & $0.058 \pm 0.000$ & $0.041 \pm 0.000$ & $0.062 \pm 0.001$ & $0.514 \pm 0.003$ \\
\textsc{MPFR} & 6 & $3.935 \pm 0.005$ & $41.71 \pm 0.30$ & $0.058 \pm 0.000$ & $0.041 \pm 0.000$ & $0.061 \pm 0.001$ & $0.514 \pm 0.003$ \\
\textsc{MPFR} & 8 & $4.034 \pm 0.005$ & $42.04 \pm 0.81$ & $0.058 \pm 0.001$ & $0.041 \pm 0.000$ & $0.061 \pm 0.001$ & $0.514 \pm 0.003$ \\
\midrule
\textsc{Invariant} & 2 & $3.431 \pm 0.005$ & $36.79 \pm 0.37$ & $0.002 \pm 0.000$ & $0.002 \pm 0.000$ & $0.002 \pm 0.000$ & $0.514 \pm 0.003$ \\
\textsc{Invariant} & 4 & $3.741 \pm 0.007$ & $39.31 \pm 0.29$ & $0.002 \pm 0.000$ & $0.002 \pm 0.000$ & $0.002 \pm 0.000$ & $0.508 \pm 0.001$ \\
\textsc{Invariant} & 6 & $3.894 \pm 0.007$ & $40.51 \pm 0.25$ & $0.002 \pm 0.000$ & $0.002 \pm 0.000$ & $0.002 \pm 0.000$ & $0.507 \pm 0.001$ \\
\textsc{Invariant} & 8 & $3.987 \pm 0.010$ & $39.87 \pm 0.66$ & $0.002 \pm 0.000$ & $0.002 \pm 0.000$ & $0.002 \pm 0.000$ & $0.505 \pm 0.003$ \\
\bottomrule
\end{tabular}
}
\end{table}

\begin{table}[t]
\centering
\small
\color{black}
\caption{Multi-draft results on Llama-3.1-8B-Instruct $\times$ ELI5, $L=4$.
Mean $\pm$ std over 3 random seeds, $1000$ prompts each.}
\label{tab:multi_llama_eli5}
\small
\resizebox{\textwidth}{!}{
\begin{tabular}{l r r r r r r r}
\toprule
Decoder & $B$ & AATPS & TR & ANLPPT-U & ANLPPT-Li & ANLPPT-PL & LPPL \\
\midrule
\textsc{MPFR} & 2 & $3.103 \pm 0.006$ & $34.29 \pm 0.35$ & $0.135 \pm 0.001$ & $0.099 \pm 0.001$ & $0.139 \pm 0.001$ & $0.811 \pm 0.004$ \\
\textsc{MPFR} & 4 & $3.455 \pm 0.010$ & $37.79 \pm 0.38$ & $0.135 \pm 0.002$ & $0.099 \pm 0.001$ & $0.139 \pm 0.002$ & $0.812 \pm 0.003$ \\
\textsc{MPFR} & 6 & $3.635 \pm 0.001$ & $39.59 \pm 0.43$ & $0.135 \pm 0.002$ & $0.099 \pm 0.001$ & $0.139 \pm 0.002$ & $0.812 \pm 0.005$ \\
\textsc{MPFR} & 8 & $3.749 \pm 0.002$ & $39.76 \pm 0.80$ & $0.135 \pm 0.001$ & $0.098 \pm 0.000$ & $0.139 \pm 0.001$ & $0.811 \pm 0.004$ \\
\midrule
\textsc{Invariant} & 2 & $3.080 \pm 0.011$ & $33.72 \pm 0.60$ & $0.002 \pm 0.000$ & $0.002 \pm 0.000$ & $0.002 \pm 0.000$ & $0.803 \pm 0.003$ \\
\textsc{Invariant} & 4 & $3.408 \pm 0.006$ & $37.11 \pm 0.12$ & $0.002 \pm 0.000$ & $0.002 \pm 0.000$ & $0.002 \pm 0.000$ & $0.794 \pm 0.004$ \\
\textsc{Invariant} & 6 & $3.588 \pm 0.008$ & $39.12 \pm 0.31$ & $0.002 \pm 0.000$ & $0.002 \pm 0.000$ & $0.002 \pm 0.000$ & $0.795 \pm 0.008$ \\
\textsc{Invariant} & 8 & $3.700 \pm 0.008$ & $39.37 \pm 0.58$ & $0.002 \pm 0.000$ & $0.002 \pm 0.000$ & $0.002 \pm 0.000$ & $0.790 \pm 0.003$ \\
\bottomrule
\end{tabular}
}
\end{table}

\begin{table}[h]
\centering
\small
\caption{Multi-draft results on Qwen2.5-7B-Instruct $\times$
\textsc{CNN/DailyMail}, $L{=}4$.}
\label{tab:full-qwen-cnn}
\begin{tabular}{l c c c c c c c}
\toprule
Decoder & $B$ & AATPS & TR & ANLPPT-U & ANLPPT-Li & ANLPPT-PL & LPPL \\
\midrule
\textsc{MPFR}      & 2 & 2.799 & 23.18 & 0.052 & 0.037 & 0.055 & 0.475 \\
\textsc{MPFR}      & 4 & 3.124 & 24.89 & 0.052 & 0.037 & 0.055 & 0.474 \\
\textsc{MPFR}      & 6 & 3.298 & 25.39 & 0.052 & 0.037 & 0.055 & 0.476 \\
\textsc{MPFR}      & 8 & 3.410 & 25.17 & 0.053 & 0.037 & 0.055 & 0.476 \\
\textsc{Invariant} & 2 & 2.804 & 25.03 & 0.003 & 0.003 & 0.002 & 0.487 \\
\textsc{Invariant} & 4 & 3.114 & 26.63 & 0.002 & 0.002 & 0.002 & 0.481 \\
\textsc{Invariant} & 6 & 3.275 & 26.79 & 0.002 & 0.003 & 0.002 & 0.482 \\
\textsc{Invariant} & 8 & 3.387 & 26.38 & 0.002 & 0.003 & 0.002 & 0.479 \\
\bottomrule
\end{tabular}
\end{table}

\begin{table}[h]
\centering
\small
\caption{Multi-draft results on Qwen2.5-7B-Instruct $\times$ \textsc{ELI5},
$L{=}4$.}
\label{tab:full-qwen-eli5}
\begin{tabular}{l c c c c c c c}
\toprule
Decoder & $B$ & AATPS & TR & ANLPPT-U & ANLPPT-Li & ANLPPT-PL & LPPL \\
\midrule
\textsc{MPFR}      & 2 & 2.563 & 22.38 & 0.104 & 0.076 & 0.108 & 0.704 \\
\textsc{MPFR}      & 4 & 2.919 & 24.48 & 0.104 & 0.076 & 0.107 & 0.703 \\
\textsc{MPFR}      & 6 & 3.111 & 25.54 & 0.105 & 0.076 & 0.108 & 0.706 \\
\textsc{MPFR}      & 8 & 3.237 & 25.83 & 0.105 & 0.076 & 0.108 & 0.704 \\
\textsc{Invariant} & 2 & 2.519 & 23.18 & 0.002 & 0.002 & 0.002 & 0.750 \\
\textsc{Invariant} & 4 & 2.852 & 25.75 & 0.002 & 0.002 & 0.002 & 0.733 \\
\textsc{Invariant} & 6 & 3.043 & 27.11 & 0.002 & 0.002 & 0.002 & 0.730 \\
\textsc{Invariant} & 8 & 3.158 & 27.96 & 0.002 & 0.002 & 0.002 & 0.725 \\
\bottomrule
\end{tabular}
\end{table}

\begin{table}[h]
\centering
\small
\caption{Multi-draft results on Vicuna-7B-v1.5 $\times$
\textsc{CNN/DailyMail}, $L{=}4$.}
\label{tab:full-vicuna-cnn}
\begin{tabular}{l c c c c c c c}
\toprule
Decoder & $B$ & AATPS & TR & ANLPPT-U & ANLPPT-Li & ANLPPT-PL & LPPL \\
\midrule
\textsc{MPFR}      & 2 & 2.422 & 52.10 & 0.022 & 0.017 & 0.022 & 0.270 \\
\textsc{MPFR}      & 4 & 2.664 & 48.28 & 0.022 & 0.017 & 0.023 & 0.271 \\
\textsc{MPFR}      & 6 & 2.803 & 44.44 & 0.022 & 0.017 & 0.023 & 0.270 \\
\textsc{MPFR}      & 8 & 2.899 & 40.76 & 0.022 & 0.017 & 0.023 & 0.270 \\
\textsc{Invariant} & 2 & 2.478 & 58.15 & 0.002 & 0.002 & 0.002 & 0.287 \\
\textsc{Invariant} & 4 & 2.690 & 53.12 & 0.002 & 0.002 & 0.002 & 0.286 \\
\textsc{Invariant} & 6 & 2.831 & 48.41 & 0.002 & 0.003 & 0.002 & 0.283 \\
\textsc{Invariant} & 8 & 2.925 & 44.01 & 0.002 & 0.002 & 0.002 & 0.288 \\
\bottomrule
\end{tabular}
\end{table}

\begin{table}[!htbp]
\centering
\small
\caption{Multi-draft results on Vicuna-7B-v1.5 $\times$ \textsc{ELI5},
$L{=}4$.}
\label{tab:full-vicuna-eli5}
\begin{tabular}{l c c c c c c c}
\toprule
Decoder & $B$ & AATPS & TR & ANLPPT-U & ANLPPT-Li & ANLPPT-PL & LPPL \\
\midrule
\textsc{MPFR}      & 2 & 1.889 & 45.89 & 0.058 & 0.043 & 0.061 & 0.504 \\
\textsc{MPFR}      & 4 & 2.133 & 47.43 & 0.058 & 0.042 & 0.061 & 0.504 \\
\textsc{MPFR}      & 6 & 2.278 & 47.42 & 0.058 & 0.042 & 0.062 & 0.503 \\
\textsc{MPFR}      & 8 & 2.377 & 46.42 & 0.059 & 0.043 & 0.062 & 0.504 \\
\textsc{Invariant} & 2 & 1.884 & 50.29 & 0.002 & 0.002 & 0.002 & 0.555 \\
\textsc{Invariant} & 4 & 2.108 & 53.42 & 0.002 & 0.002 & 0.002 & 0.552 \\
\textsc{Invariant} & 6 & 2.248 & 55.08 & 0.002 & 0.003 & 0.002 & 0.546 \\
\textsc{Invariant} & 8 & 2.361 & 56.00 & 0.003 & 0.003 & 0.002 & 0.539 \\
\bottomrule
\end{tabular}
\end{table}

\paragraph{Speculative-acceptance parity.}
Within each cell, MPFR remains close to INVARIANT in AATPS across $B \in \{2,4,6,8\}$, with small deviations in both directions depending on the model--dataset cell. This indicates that the target-side keyed-race construction does not introduce a systematic per-step acceptance penalty. Token rate is reported as a hardware- and implementation-dependent diagnostic, since it also reflects batching, hashing, and watermark-bookkeeping overhead beyond acceptance behavior.

\paragraph{Stability of the watermark signal across $B$.}
ANLPPT-U, ANLPPT-Li, and ANLPPT-PL for \textsc{MPFR} remain within $\pm 0.001$
across $B\in\{2,4,6,8\}$ on every cell of
Tables~\ref{tab:full-qwen-cnn}--\ref{tab:full-vicuna-eli5}. The per-token
detection signal therefore does not dilute with the number of drafts, in
contrast to naive multi-draft watermarking, which would re-introduce the
efficiency--detectability trade-off of~\citet{hu2024inevitable} at the
multi-draft level.

\paragraph{Distortion-free audit.}
LPPL of \textsc{MPFR} outputs falls within $\pm 0.013$ nats/token of
\textsc{Invariant} on every cell. This empirical audit is consistent with the
exact marginal-preservation guarantee of the \textsc{MPFR} construction and
indicates that the watermark does not introduce a measurable shift in
target-model log-perplexity in the multi-draft regime.

\subsection{Drafter-substitution ablation: full results}
\label{app:ablation}

\paragraph{Drafter conditions.}
$D_0$: Qwen2.5-0.5B-Instruct at $T{=}1.0$ (default).
$D_1$: Qwen2.5-1.5B-Instruct at $T{=}1.0$ ($3{\times}$ scale, model swap).
$D_2$: Qwen2.5-0.5B-Instruct at $T{=}0.5$ (sharper drafter).
$D_3$: Qwen2.5-0.5B-Instruct at $T{=}1.5$ (more diffuse drafter).
The target temperature, watermark key, and all other parameters are held fixed
across $D_0$--$D_3$.

\paragraph{Detection statistic.}
For the Aaronson Gamma-tail test, the per-token score is
$s_t=-\log(1-U_t)$, where $U_t$ is the per-token uniform recovered from the
watermark code. Under the no-watermark null, the cumulative score
$S_T=\sum_{t\le T} s_t$ is $\mathrm{Gamma}(T,1)$; we declare detection at
FPR$=\alpha$ when $\Pr_{H_0}\!\left[\mathrm{Gamma}(T,1)>S_T\right]\le\alpha$,
computed via the regularised upper incomplete gamma. We report
$\alpha{=}1\%$ throughout.

\paragraph{Full detection across models and datasets.}
Figure~\ref{fig:full_detection_4cells} extends the default-drafter detection
comparison of Figure~\ref{fig:drafter_invariance}(a) to all four
$(\textsc{target},\textsc{dataset})$ cells of the main $2{\times}2$ result.
In every cell, \textsc{PFR}/\textsc{MPFR} (ours) coincide with the
strong-watermark ceiling defined by autoregressive \textsc{Basic-UWM} and
\textsc{MWS}, while \textsc{MSE} and \textsc{MSE-Pseudo}~\citep{he2026improving}
are uniformly weaker. The size of the gap is dataset-dependent --- largest on
\textsc{CNN/DailyMail}, smallest on the longer-form \textsc{ELI5} where every
scheme approaches saturation by $T_{\mathrm{eval}}{=}128$ --- but the ranking is
invariant. This confirms that the detection-strength conclusion drawn from
the anchor cell in Figure~\ref{fig:drafter_invariance}(a) is not specific to
a single $(\textsc{target},\textsc{dataset})$ pair.

\begin{figure}[h]
\centering
\includegraphics[width=\linewidth]{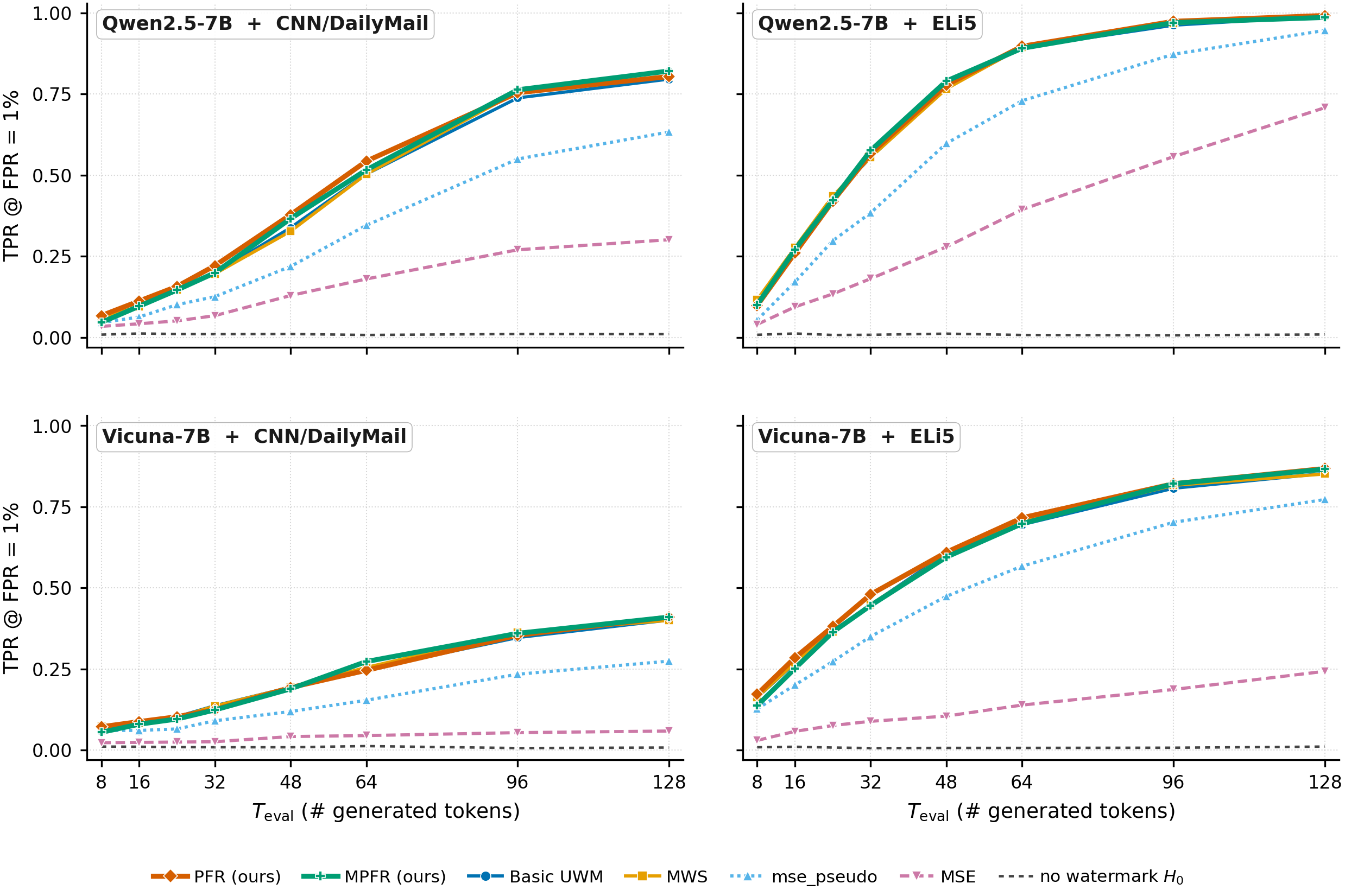}
\caption{\textbf{Full detection across $(\textsc{target},\textsc{dataset})$
cells.}
TPR@1\%FPR (Aaronson Gamma-tail) versus detection token budget
$T_{\mathrm{eval}}$ at the default drafter $D_0$, for the seven schemes
plotted in Figure~\ref{fig:drafter_invariance}(a). Each panel is one cell of
the main $2{\times}2$ result: target~$\in$
\{Qwen2.5-7B-Instruct, Vicuna-7B-v1.5\} $\times$ dataset~$\in$
\{\textsc{CNN/DailyMail}, \textsc{ELI5}\}. \textsc{PFR}/\textsc{MPFR} (ours)
match the strong-watermark ceiling (\textsc{Basic-UWM}, \textsc{MWS}) in
every cell, while \textsc{MSE} and \textsc{MSE-Pseudo} are uniformly weaker.
The empirical $H_0$ (no-watermark) floor sits near the nominal $1\%$ FPR,
calibrating the test.}
\label{fig:full_detection_4cells}
\end{figure}

\paragraph{Pairwise ROUGE-L under drafter substitution.}
For each prompt $p$ and decoder $d$ we have one realised output token sequence
per drafter condition. We compute ROUGE-L F1 over each pair $(D_i,D_j)$ of
drafter conditions on these token sequences, then report the per-prompt
minimum over the six pairs (worst pair) averaged across prompts
(Table~\ref{tab:exp2_pairwise_rouge}). An unbiased verify rule that depends
only on the target logits and the watermark key produces identical sequences
across all drafters, so this quantity should approach $1$.

\begin{table}[h]
\centering
\caption{\textbf{Output quality.}
LPPL$\downarrow$ under target.
ROUGE-L$\uparrow$ vs.\ gold \texttt{highlights}.
AATPS$\uparrow$ accepted target tokens per step.}
\label{tab:exp1_quality}
\small
\setlength{\tabcolsep}{4pt}
\resizebox{\textwidth}{!}{%
\begin{tabular}{l ccccccc | cc}
\toprule
 & \multicolumn{7}{c|}{\emph{Single-draft baselines}}
 & \multicolumn{2}{c}{\emph{Multi-draft ($B{=}4$)}} \\
\cmidrule(lr){2-8} \cmidrule(lr){9-10}
Metric
 & \textsc{VSpS}
 & \textsc{PFR-NoWM}
 & \textsc{Basic-UWM}
 & \textsc{MSE}
 & \textsc{MWS}
 & \textsc{MSE-Pseudo}
 & \textbf{\textsc{PFR}}
 & \textsc{Invariant}
 & \textbf{\textsc{MPFR}} \\
\midrule
LPPL$\downarrow$
 & 0.474 & 0.469 & 0.472 & 0.470 & 0.475 & 0.477 & \textbf{0.465}
 & 0.484 & \textbf{0.474} \\
ROUGE-L$\uparrow$
 & 0.219 & 0.222 & 0.223 & 0.221 & 0.223 & 0.219 & \textbf{0.223}
 & 0.222 & \textbf{0.222} \\
AATPS$\uparrow$
 & 2.51  & 2.46  & 1.00  & 2.53  & 2.43  & 2.52  & \textbf{2.46}
 & 3.12  & \textbf{3.13} \\
\bottomrule
\end{tabular}%
}
\end{table}

\begin{table}[h]
\centering
\caption{\textbf{Output stability under drafter substitution.}
Pairwise ROUGE-L F1 between the realised token sequences across each
$(D_i,D_j)$ pair, per-prompt minimum averaged over prompts. An unbiased
verify rule produces identical sequences across all drafters, so
ROUGE-L $\to 1$.}
\label{tab:exp2_pairwise_rouge}
\small
\setlength{\tabcolsep}{8pt}
\begin{tabular}{lc}
\toprule
Decoder & min-pair mean ROUGE-L \\
\midrule
\textsc{Basic-UWM} (no drafter, vacuous bound) & 1.000 \\
\textbf{\textsc{PFR}} (ours)                    & \textbf{0.990} \\
\textsc{MWS}                                    & 0.922 \\
\midrule
\textsc{PFR-NoWM} (no wm)                       & 0.436 \\
\textsc{MSE-Pseudo}                             & 0.426 \\
\textsc{MSE}                                    & 0.406 \\
\textsc{VSpS} (no wm)                           & 0.394 \\
\bottomrule
\end{tabular}
\end{table}

\section{More on Watermark}
\label{app::water}
% \textcolor{red}{Note in implementation, to compare to~\citep{hu2024inevitable}, we also need to only include positions whose seed/context code hasn't appeared before, i.e. ``skipping repeats'' just like them so that $\{U_t\}$ are (approximately) independent across $t$. }

In this section, we discuss different watermarks that are related to our scheme.
\subsection{Gumbel Watermark}
\label{subapp::gumbel_water}

We use a common notation for the scored positions throughout this subsection. 
Let $\mathcal{T}:=\{m+1,\ldots,n\}$ and $M:=|\mathcal{T}|=n-m$. 
For each $t\in\mathcal{T}$, let $r_t:\mathcal{V}\to(0,1)$ be the keyed random function, and define
\[
    U_t:=r_t(w_t),
    \qquad
    A_t:=-\log(1-U_t).
\]
Aaronson's Gumbel watermark score~\citep{aaronson2023watermarking} is
\[
    S_{\mathrm{A}}
    :=
    \sum_{t\in\mathcal{T}} A_t
    =
    \sum_{t\in\mathcal{T}} -\log(1-r_t(w_t)).
\]
Under the null hypothesis, i.e., when $y_{1:n}$ is not generated from the watermarked model, the variables $(U_t)_{t\in\mathcal{T}}$ are independent $\mathrm{Unif}(0,1)$ random variables. Hence $(A_t)_{t\in\mathcal{T}}$ are independent $\mathrm{Exp}(1)$ random variables, and
\[
    \mathbf{E}\left[S_{\mathrm{A}}\right] = M.
\]
If $y_{1:n}$ is generated from the Gumbel-watermarked model, then
\[
    \mathbf{E}\left[S_{\mathrm{A}}\right]
    \geq
    M
    +
    \left(\frac{\pi^2}{6}-1\right)
    \sum_{t\in\mathcal{T}}
    \mathbf{E}\left[
        \mathrm{Entropy}\big[p_t(\cdot)\big]
    \right],
\]
where $p_t=\mathrm{Softmax}(u_t/T)$ and the entropy is measured in nats.

Since $S_{\mathrm{A}}\sim\mathrm{Gamma}(M,1)$ under the null hypothesis, the corresponding $p$-value is
\[
    p_{\mathrm{A}}
    =
    \mathbf{P}\big(\mathrm{Gamma}(M,1)\ge S_{\mathrm{A}}\big)
    =
    \frac{\Gamma(M,S_{\mathrm{A}})}{\Gamma(M)},
\]
where $\Gamma(M)$ is the gamma function and $\Gamma(M,S_{\mathrm{A}})$ is the upper incomplete gamma function. 
The Average Negative Log P-value Per Token (ANLPPT) for Aaronson's score is
\[
    \mathrm{ANLPPT}_{\mathrm{A}}
    :=
    -\frac{1}{M}\log p_{\mathrm{A}}.
\]

Following the $U$-score framework~\citep{hu2024inevitable}, one may instead aggregate the keyed uniforms directly:
\[
    S_{\mathrm{U}}
    :=
    \sum_{t\in\mathcal{T}} U_t.
\]
Under the null hypothesis, $U_t\sim\mathrm{Unif}(0,1)$. Since for $U\sim\mathrm{Unif}(0,1)$,
\[
    \mathbf{E}\big[e^{\lambda U}\big]
    =
    \frac{e^\lambda-1}{\lambda},
    \qquad \lambda\ge 0,
\]
we have
\[
    \mathbf{E}\big[e^{\lambda S_{\mathrm{U}}}\big]
    =
    \left(\frac{e^\lambda-1}{\lambda}\right)^M.
\]
Chernoff's bound gives
\[
    p_{\mathrm{U}}
    \le
    \inf_{\lambda\ge 0}
    \exp\left(
        M\log\frac{e^\lambda-1}{\lambda}
        -
        \lambda S_{\mathrm{U}}
    \right),
\]
and the corresponding ANLPPT is
\[
    \mathrm{ANLPPT}_{\mathrm{U}}
    :=
    -\frac{1}{M}\log p_{\mathrm{U}}.
\]

\subsection{Connection to DeltaGumbel Reweighting}
\label{subapp::connection_delta_gumbel}

We now relate the above notation to the DeltaGumbel reweighting rule used in~\citep{hu2024inevitable}. 
For each scored position $t\in\mathcal{T}$, let
\[
    P_t:=P(\cdot\mid w_{:t-1})
\]
be the target next-token distribution. 
Define
\[
    g_t(v):=-\log\bigl(-\log r_t(v)\bigr),
    \qquad v\in\mathcal{V}.
\]
Then $(g_t(v))_{v\in\mathcal{V}}$ are independent standard Gumbel random variables. 
The DeltaGumbel rule selects
\[
    a_t^\star
    :=
    \arg\max_{v\in\mathcal{V}}
    \left\{\log P_t(v)+g_t(v)\right\}.
\]
Equivalently, for a fixed key, the reweighted distribution at position $t$ is the point mass
\[
    Q_t:=R_{g_t}(P_t)=\delta_{a_t^\star}.
\]

The likelihood-agnostic $U$-score of~\citep{hu2024inevitable} is
\[
    U_t
    :=
    \exp\bigl(-\exp(-g_t(w_t))\bigr).
\]
By the definition of $g_t$, this is exactly
\[
    U_t=r_t(w_t).
\]
Therefore, Aaronson's per-token score is a deterministic transform of the DeltaGumbel $U$-score:
\[
    A_t
    =
    -\log(1-r_t(w_t))
    =
    -\log(1-U_t).
\]
Thus
\[
    S_{\mathrm{A}}
    =
    \sum_{t\in\mathcal{T}} A_t
    =
    \sum_{t\in\mathcal{T}} -\log(1-U_t).
\]
In particular, if the full vector $(U_t)_{t\in\mathcal{T}}$ is retained, then Aaronson's score can be computed exactly from the $U$-scores, and conversely
\[
    U_t=1-\exp(-A_t).
\]
However, the aggregate statistics
\[
    \sum_{t\in\mathcal{T}} U_t
    \qquad\text{and}\qquad
    \sum_{t\in\mathcal{T}} -\log(1-U_t)
\]
are not equivalent in general, since they apply different nonlinear transformations before aggregation.

\subsection{Watermark Score by~\citep{li2025statistical}}

The score of~\citet{li2025statistical} is also computed from the same pivot
$r_t(w_t)$. For a parameter $\Delta$, define
\[
k_\Delta:=\left\lfloor\frac{1}{1-\Delta}\right\rfloor,
\qquad
q_\Delta:=1-k_\Delta(1-\Delta).
\]
Their per-token score is
\[
h^\star_{\mathrm{gum},\Delta}(r)
=
\log\left(
k_\Delta r^{\frac{\Delta}{1-\Delta}}
+
\mathds{1}\{q_\Delta>0\}
r^{\frac{1-q_\Delta}{q_\Delta}}
\right),
\]
and the corresponding statistic is
\[
S_{\mathrm{Li},\Delta}
:=
\sum_{t\in\mathcal{T}}
h^\star_{\mathrm{gum},\Delta}(r_t(w_t)).
\]

We now discuss another watermark score for Aaronson's Gumbel Watermark, proposed by~\citep{li2025statistical}, which is a statistically optimal score function for the Gumbel watermark under a class-dependent efficiency criterion. 
We introduce their result using our notation as follows. 
Their Gumbel pivot is exactly
\[
Y_t^{\mathrm{gum}}
:=
r_t(w_t).
\]
Under the null hypothesis,
\[
Y_t^{\mathrm{gum}}\sim\mathrm{Unif}[0,1].
\]
Under the Gumbel-watermarked alternative, conditional on
\[
P_t:=P(\cdot|w_{:t-1}),
\]
the pivot satisfies
\[
\mathbf{P}_{H_1}\!\left(Y_t^{\mathrm{gum}}\leq r| P_t\right)
=
\sum_{v\in\mathcal{V}}P_t(v)r^{1/P_t(v)},
\qquad r\in[0,1],
\]
where terms with $P_t(v)=0$ are omitted.

Aaronson's original score corresponds to the per-token score function
\[
h_{\mathrm{A}}(r):=-\log(1-r),
\]
so that
\[
S_{\mathrm{A}}
=
\sum_{t\in\mathcal{T}}h_{\mathrm{A}}(r_t(w_t)).
\]
In contrast, Li et al.~\citep{li2025statistical} consider the distribution class
\[
\mathcal{P}_{\Delta}
:=
\left\{
P:\max_{v\in\mathcal{V}}P(v)\leq 1-\Delta
\right\}
\]
and derive the optimal Gumbel score
\[
h^\star_{\mathrm{gum},\Delta}(r)
=
\log\left(
k_\Delta r^{\frac{\Delta}{1-\Delta}}
+
\mathds{1}\{q_\Delta>0\}
r^{\frac{1-q_\Delta}{q_\Delta}}
\right),
\]
where
\[
k_\Delta
:=
\left\lfloor\frac{1}{1-\Delta}\right\rfloor,
\qquad
q_\Delta
:=
1-k_\Delta(1-\Delta).
\]
The corresponding statistic in our notation is
\[
S_{\mathrm{Li},\Delta}
:=
\sum_{t\in\mathcal{T}}
h^\star_{\mathrm{gum},\Delta}(r_t(w_t)).
\]
This score is the log-likelihood ratio associated with the least-favorable distribution
\[
P_\Delta^\star
=
(\underbrace{1-\Delta,\ldots,1-\Delta}_{k_\Delta\text{ times}},
q_\Delta,0,\ldots,0),
\]
with the $q_\Delta$ term omitted when $q_\Delta=0$.
Therefore, $h^\star_{\mathrm{gum},\Delta}$ is not the same as Aaronson's original score $h_{\mathrm{A}}$.
Indeed,
\[
h_{\mathrm{A}}(r)\to\infty
\qquad\text{as }r\uparrow 1,
\]
whereas
\[
h^\star_{\mathrm{gum},\Delta}(r)
\to
\log\left(k_\Delta+\mathds{1}\{q_\Delta>0\}\right)
\qquad\text{as }r\uparrow 1.
\]
Thus, Li et al.'s score and Aaronson's original score are based on the same Gumbel pivot
$r_t(w_t)$, but they lead to different cumulative detection statistics.

\subsection{Watermark Score by~\citep{lattimore2026refined}}

\citet{lattimore2026refined}
proposed a refined detector for the Gumbel watermark based on a truncated power-law score.
In our notation, the same Gumbel pivot is
\[
r_t(w_t),
\qquad t\in\mathcal{T}.
\]
Instead of Aaronson's original per-token score
\[
h_{\mathrm{A}}(r):=-\log(1-r),
\]
Lattimore~\citep{lattimore2026refined} considers
\[
h_{\mathrm{PL},\epsilon}(r)
=
\min\left\{
\epsilon^{-1/2},
(1-r)^{-1/2}
\right\}
-(2-\sqrt{\epsilon}),
\]
and the corresponding statistic is
\[
S_{\mathrm{PL},\epsilon}
:=
\sum_{t\in\mathcal{T}}
h_{\mathrm{PL},\epsilon}(r_t(w_t)).
\]
Here the centering term $2-\sqrt{\epsilon}$ is chosen so that, under the null hypothesis,
\[
\mathbf{E}\!\left[h_{\mathrm{PL},\epsilon}(U)\right]=0,
\qquad U\sim\mathrm{Unif}[0,1].
\]
Thus, under the null and assuming independence across the scored positions,
\[
\mathbf{E}[S_{\mathrm{PL},\epsilon}]=0.
\]
Both $S_{\mathrm{PL},\epsilon}$ and Aaronson's statistic
\[
S_{\mathrm{A}}
=
\sum_{t\in\mathcal{T}}h_{\mathrm{A}}(r_t(w_t))
\]
are model-agnostic statistics based on the same Gumbel pivot $r_t(w_t)$.
However, they are not equivalent. In particular,
\[
h_{\mathrm{A}}(r)\to\infty
\qquad\text{as }r\uparrow 1,
\]
whereas
\[
h_{\mathrm{PL},\epsilon}(r)
\leq
\epsilon^{-1/2}-(2-\sqrt{\epsilon})
\]
is bounded because of the truncation. Therefore, Lattimore's truncated power-law score and Aaronson's original score use the same keyed randomness, but lead to different cumulative detection statistics.

\section{Algorithm Implementation}
\label{app::alg_imple}

% \begin{algorithm}[tbp]
\begin{algorithm}[H]
\SetAlgoLined
\DontPrintSemicolon
\SetKwInOut{Input}{Input}\SetKwInOut{Output}{Output}

\Input{output sequence $w_{1:N}$, prompt length $n_0$, labeler $\Lambda$, secret key $\mathsf{k}$, reference measure $\mu$}
\Output{Aaronson score $S_{\mathrm A}$, $p$-value $p_{\mathrm A}$, ANLPPT $\mathrm{ANLPPT}_{\mathrm A}$ and $M = |\mathcal{T}|$}

$S_{\mathrm A} \gets 0$\;
$M \gets 0$\;

\For{$t=n_0+1$ \KwTo $N$}{
    $c_t \gets w_{:t-1}$\;
    $\ell_t \gets \Lambda(c_t)$\;
    $\Pi_t \gets G(\mathsf{k}, \ell_t)$\;

    \tcp{recover keyed uniform value for the observed token $w_t$}
    $\tau_t(w_t) \gets \textsc{FirstArrival}(w_t,\Pi_t,\mu)$\;
    $r_t \gets \exp(-\mu(w_t)\tau_t(w_t))$\;

    $S_{\mathrm A} \gets S_{\mathrm A} - \log(1-r_t)$\;
    $M \gets M + 1$\;
}

\tcp{Under $H_0$, $S_{\mathrm A}\sim \mathrm{Gamma}(M,1)$}
$p_{\mathrm A} \gets \Gamma(M,S_{\mathrm A})/\Gamma(M)$\;

$\mathrm{ANLPPT}_{\mathrm A} \gets -\frac{1}{M}\log p_{\mathrm A}$\;

\caption{Compute Aaronson score, exact null $p$-value, and ANLPPT for keyed PFR watermark}
\label{alg:pfr_aaronson_anlppt}
\end{algorithm}

% \begin{algorithm}[tbp]
\begin{algorithm}[H]
\SetAlgoLined
\DontPrintSemicolon
\SetKwIF{If}{ElseIf}{Else}{if}{:}{else if}{else}{end}
\SetKwInOut{Input}{Input}\SetKwInOut{Output}{Output}

\Input{lookahead $K$, output length $N$, target model family $P$, draft model family $Q$,
initial prefix $w_{1:n}$, common reference measure $\mu$ on $\mathcal{Z}$ such that
$P(\cdot | c),Q(\cdot | c)\ll\mu$ for all contexts $c$,
labeler $\Lambda$, secret key $\mathsf{k}$}

\While{$n < N$}{
    $h \gets w_{1:n}$ \tcp*[r]{freeze block-start prefix}
    $L \gets \min\{K,\,N-n\}$\;

    \For{$s=1$ \KwTo $L$ \tcp*[r]{draft $L$ tokens from shared PFR sources}}{
        $c_s \gets h \,\|\, \tilde w_{1:s-1}$\;
        $\ell_s \gets \Lambda(c_s)$ \tcp*[r]{includes repeated-context masking if used}
        $\Pi_s \gets G(\mathsf{k}, \ell_s)$\;
        $\tilde w_s \gets \textsc{PFR}(Q(\cdot | c_s), \Pi_s, \mu)$\;
    }

    \If{$n+L < N$}{
        $c_{L+1} \gets h \,\|\, \tilde w_{1:L}$\;
    }

    \tcp{Prepare target-side evaluations for $c_1,\ldots,c_L$ and, if needed, $c_{L+1}$ in parallel}
    $accepted \gets \textbf{true}$\;

    \tcp{Verify each draft token using the same shared Poisson source}
    \For{$s=1$ \KwTo $L$}{
        $y_s \gets \textsc{PFR}(P(\cdot | c_s), \Pi_s, \mu)$\;
        $w_{n+1} \gets y_s$; $n \gets n+1$ \tcp*[r]{emit the target PFR winner}
        \If{$y_s \neq \tilde w_s$}{
            $accepted \gets \textbf{false}$\;
            \textbf{break}\; \tcp*[r]{correction: later draft contexts are invalid}
        }
    }

    \If{$accepted$ \textbf{and} $n < N$}{
        $\ell_{L+1} \gets \Lambda(c_{L+1})$\;
        $\Pi_{L+1} \gets G(\mathsf{k}, \ell_{L+1})$\;
        $w_{n+1} \gets \textsc{PFR}(P(\cdot | c_{L+1}), \Pi_{L+1}, \mu)$\;
        $n \gets n+1$ \tcp*[r]{bonus token}
    }
}
\caption{Watermarkable Speculative Sampling via Poisson Processes}
\label{alg:fwss_pfr}
\end{algorithm}

% \begin{algorithm}[tbp]
\begin{algorithm}[H]
\SetAlgoLined
\DontPrintSemicolon
\SetAlgoNoEnd
\SetKwIF{If}{ElseIf}{Else}{if}{:}{else if}{else}{end}
\SetKwInOut{Input}{Given}

\Input{lookahead $L$, number of drafts $B$, output length $N$, target model $P$,
draft model $Q$, initial prompt $w_{1:n}$, keyed Poisson source generator $G$,
and secret key $\mathsf{k}$}

\While{$n < N$}{
    $h \gets w_{1:n}$; $\ell \gets \min\{L,\,N-n\}$\;
    $\mathcal{C}_0 \gets \{h\}$; $\nu(h) \gets B$
    \tcp*[r]{root context; all $B$ homogeneous drafts start here}

    \For{$s=1$ \KwTo $\ell$}{
        $\mathcal{C}_s \gets \emptyset$\;
        \For{$c \in \mathcal{C}_{s-1}$}{
            $\Pi(c) \gets G(\mathsf{k}, c)$
            \tcp*[r]{context-indexed keyed source shared by draft and target at context $c$}
            \For{$i=1$ \KwTo $\nu(c)$}{
                $\tilde w(c,i) \gets \textsc{MPFR}(Q(\cdot | c), \Pi(c), i)$\;
            }
            $\mathcal{D}(c) \gets \{\tilde w(c,1),\ldots,\tilde w(c,\nu(c))\}$
            \tcp*[r]{distinct next-token proposals from context $c$}
            \For{$u \in \mathcal{D}(c)$}{
                $\nu(c \,\|\, u) \gets
                \big|\{\,i\in\{1,\ldots,\nu(c)\}: \tilde w(c,i)=u\,\}\big|$\;
                \tcp*[r]{how many drafts move to child context $c\|u$}
                $\mathcal{C}_s \gets \mathcal{C}_s \cup \{\,c \,\|\, u\,\}$\;
            }
        }
    }

    \For(\tcp*[f]{compute target-side winner at every internal context}){$c \in \mathcal{C}_0 \cup \cdots \cup \mathcal{C}_{\ell-1}$}{
        $y(c) \gets \textsc{MPFR}(P(\cdot | c), \Pi(c), 1)$\;
        \tcp*[r]{target-side mapped-Poisson winner at context $c$}
    }

    $c^\star \gets h$; $accepted \gets \textbf{true}$\;
    \tcp*[r]{start following the unique realized target path from the root}

    \For{$s=1$ \KwTo $\ell$}{
        $a_s \gets \mathds{1}\!\left\{y(c^\star)\in \mathcal{D}(c^\star)\right\}$\;
        \tcp*[r]{accept if the target winner is among the drafts at the current context}
        $w_{n+1} \gets y(c^\star)$; $n \gets n+1$\;
        \tcp*[r]{emit the target-side token, not a draft token}
        \If{$a_s = 0$}{
            $accepted \gets \textbf{false}$ and \textbf{break}
            \tcp*[r]{first rejection ends the speculative block}
        }
        \If{$n = N$}{
            \textbf{break}
        }
        $c^\star \gets c^\star \,\|\, y(c^\star)$
        \tcp*[r]{move to the next realized context}
    }

    \If(\tcp*[f]{if all $\ell$ steps were accepted, emit one extra target token}){$accepted$ \textbf{and} $n < N$}{
        $\Pi^\star \gets G(\mathsf{k}, c^\star)$\;
        $w_{n+1} \gets \textsc{MPFR}(P(\cdot | c^\star), \Pi^\star, 1)$\;
        $n \gets n+1$ \tcp*[r]{bonus token}
    }
}
\caption{Watermarkable Multi-Draft Speculative Sampling (Formal)}
\label{alg:multi_fwss_pfr_formal}
\end{algorithm}

% {\color{orange}
% \section{Questions to be considered} 

% \begin{itemize}
%     \item Is it possible, and if possible why, that an imporved version of Gumbel watermarking improves our speculative sampling? Related to~\citep{huangwatermarking}? 

%     \item Research question: if advance to multi-draft speculative sampling, what is the watermark rule, detection rule and detectability? Any theoretic promise, generalizing Aaronson's single-draft version? 

%     \item \citep[Theorem 3.2]{he2026improving} gives $\mathsf{WS}(P_\zeta) = H(P)$ if and only if $H(P_\zeta) = 0$, is this \emph{golden formula}? 

%     \item \textcolor{green}{How do we justify that we break the inevitable boundary????}

%     \item \textcolor{green}{How do use theory/experiment to discuss draft-invariance and watermark strength????}

%     \item \textcolor{green}{New paper on Aaronson's score?????}

%     \item \textcolor{green}{Emphasize more on the idea comes from Gumbel Watermark?????}

%     \item \textcolor{green}{Robustness via PML?????}

%     \item Why our sampling and watermarking do NOT conflict? Either \citep{hu2024inevitable}'s proof is wrong in our setup or we use the KL score in \citep{he2026improving}. 

%     \item If we use the KL score, we need to plot our fundamental trade-off???? 
% \end{itemize}
% }

\end{document}